\documentclass[twocolumn,floatfix,amsmath,amssymb,aps, pre, superscriptaddress]{revtex4-2}

\pdfoutput=1
\usepackage{wrapfig}
\usepackage{extarrows} 
\usepackage{mathtools}
\usepackage{graphicx}
\usepackage[dvipsnames]{xcolor}
\usepackage{tikz}
\usepackage{amsthm}

\usepackage{relsize}
\usepackage{faktor}
\usepackage{blkarray}
\usepackage{booktabs}

\newtheorem{theorem}{Theorem}
\newtheorem{definition}{Definition}
\newtheorem{proposition}{Proposition}
\newtheorem{Lemma}{Lemma}
\newtheorem{corollary}{Corollary}
\newtheorem{example}{Example}

\DeclareMathOperator{\sinc}{sinc}
\DeclareMathOperator{\Ima}{Im}
\DeclareMathOperator{\rank}{rank}
\DeclareMathOperator{\diagm}{diag}

\newcommand{\ones}{\mathbbm{1}}
\newcommand{\inner}[2]{\left\langle #1,#2 \right\rangle}
\newcommand{\sset}[1]{\left\{ #1 \right\}} 
\newcommand{\abs}[1]{\left|#1\right|}
\newcommand{\norm}[1]{\left\lVert#1\right\rVert}
\newcommand{\R}{\mathbb{R}}
\newcommand{\defeq}{\stackrel{\mathsmaller{\mathsf{def}}}{=}}

\usepackage[colorlinks=true,citecolor=teal, urlcolor=Blue, linkcolor=Blue]{hyperref}
\usepackage[capitalise]{cleveref}



\setcounter{tocdepth}{3}

\usepackage{lipsum}
\usepackage{amsfonts,bbm}
\usepackage{graphicx}
\usepackage{epstopdf}
\usepackage{algorithmic}

\begin{document}

\title{A unified framework for Simplicial Kuramoto models}

\author{Marco Nurisso}\email{marco.nurisso@polito.it}
\affiliation{CENTAI Institute, Turin, Italy}
\affiliation{Dipartimento di Scienze Matematiche, Politecnico di Torino, Turin, 10129, Italy}
\affiliation{SmartData@PoliTO Center, Politecnico di Torino, Turin, 10129, Italy}

\author{Alexis Arnaudon}
\affiliation{
Blue Brain Project, École polytechnique fédérale de Lausanne (EPFL), Campus Biotech, 1202, Geneva, Switzerland}

\author{Maxime Lucas}
\affiliation{CENTAI Institute, Turin, Italy}

\author{Robert L. Peach}%
\affiliation{Department of Neurology, University Hospital Würzburg, Würzburg, Germany}
\affiliation{Department of Brain Sciences, Imperial College London, London, UK}

\author{Paul Expert}
\affiliation{UCL Global Business School for Health, UCL, London, UK}

\author{Francesco Vaccarino}
\affiliation{Dipartimento di Scienze Matematiche, Politecnico di Torino, Turin, 10129, Italy}
\affiliation{SmartData@PoliTO Center, Politecnico di Torino, Turin, 10129, Italy}

\author{Giovanni Petri}
\affiliation{CENTAI Institute, Turin, Italy}
\affiliation{IMT Lucca, Lucca, Italy}

\date{\today}

\begin{abstract}
    Simplicial Kuramoto models have emerged as a diverse and intriguing class of models describing oscillators on simplices rather than nodes. In this paper, we present a unified framework to describe different variants of these models, categorized into three main groups: ``simple'' models, ``Hodge-coupled'' models, and ``order-coupled'' (Dirac) models. 
    Our framework is based on topology, discrete differential geometry as well as gradient flows and frustrations, and permits a systematic analysis of their properties.
    We establish an equivalence between the simple simplicial Kuramoto model and the standard Kuramoto model on pairwise networks under the condition of manifoldness of the simplicial complex. 
    Then, starting from simple models, we describe the notion of simplicial synchronization and derive bounds on the coupling strength necessary or sufficient for achieving it. 
    For some variants, we generalize these results and provide new ones, such as the controllability of equilibrium solutions.
    Finally, we explore a potential application in the reconstruction of brain functional connectivity from structural connectomes and find that simple edge-based Kuramoto models perform competitively or even outperform complex extensions of node-based models.
\end{abstract}

\maketitle


\section{Introduction}
Synchronization is defined as the emergence of order from the interactions among many parts. 
It is a ubiquitous phenomenon that occurs in both natural and human-engineered systems~\cite{barahona2002synchronization,pikovsky2003synchronization, strogatz2004} and can be observed in a wide range of systems, including the firing of neurons~\cite{breakspear2010generative}, the twinkling of fireflies~\cite{strogatz1997spontaneous}, power grids~\cite{rohden2012self,nishikawa2015comparative} or audience applause~\cite{neda2000physics}. 
Despite the complexity and differences of these systems, the canonical Kuramoto model~\cite{Kuramoto} provides a unified framework for describing the onset of synchronization in systems of oscillators that interact in a pairwise fashion. 
While the original version of the model included interactions between all pairs of oscillators, later extensions of the model allowed the specification of arbitrary network topologies~\cite{arenas2008synchronization}.   
This, in turn, revealed interesting relationships between the dynamical properties of the model and the structure of the underlying network~\cite{Stability_kuramoto_model, watts1998collective}.

Traditional networks, however, provide a limited perspective on complex systems as they only consider pairwise interactions. 
To overcome this limitation, a new paradigm has recently emerged: networks with group (or higher order) interactions, i.e., interactions between any number of units~\cite{battiston2020networks,Battiston_2021,bick2021higher}. 
Group interactions have been recognized to play an important role in a rapidly growing list of systems, including brain networks~\cite{hoi_brain}, social~\cite{patania2017shape,benson2018simplicial,juul2022hypergraph} and biological communities~\cite{grilli2017higher,hoi_comm,sanchez2019high} among many others~\cite{battiston2020networks,Battiston_2021}. 
Group interactions can be represented by two main mathematical frameworks: hypergraphs or simplicial complexes. 
Although hypergraphs are more general, simplicial complexes have more structure because of the additional inclusion (or closure) condition: all sub-simplices of a simplex must be contained in a simplicial complex.
Consequently, simplicial complexes---like pairwise networks---possess a rich theory rooted in the mathematical field of discrete differential geometry and topology. 
Their expressive power is also greatly increased by the possibility of including weights~\cite{weights}, which naturally become embedded in their topological~\cite{petri2013topological,petri2014homological} and spectral structure~\cite{HORAK2013303}.  
The effect of a simplicial complex structure has been shown to induce new dynamical phenomena, such as explosive transitions~\cite{kuehn2021universal} and multistability~\cite{de2021multistability}, across a variety of dynamical processes, including random walks~\cite{SchaubSirev}, diffusion~\cite{schaub2018flow, carletti2020random,millan2021local}, consensus~\cite{neuhauser2021multibody,deville2021consensus,iacopini2022group}, spreading~\cite{lucas2023simplicially,iacopini2019simplicial,ferrazdearruda2021phase, chowdhary2021simplicial,st2021universal,st2021master,st2022influential}, percolation~\cite{sun2023dynamic,bianconi2019percolation, sun2020renormalization}, and evolutionary game theory~\cite{alvarez2021evolutionary}.

Naturally, this process of \emph{simplicialization} has also reached synchronization. 
One way to approach the modeling of synchronization in higher-order systems is to extend the family of possible interactions to include groups. 
From a network of interacting oscillators, we pass to a simplicial complex where node oscillators can also interact through triangles, tetrahedra, or higher order structures~\cite{skardal2019abrupt,lucas2020multiorder,gengel2020high,matheny2019exotic,zhang2023higher,gambuzza2021stability}. 
Another approach is to consider the simplicial Kuramoto~\cite{millan2020explosive,arnaudon2022connecting} as a model of synchronizing dynamics of higher-order topological signals. 
With it, we are not constrained to consider the evolution of oscillators placed on nodes, but we can place them on simplices of any order. 
This change, which at the beginning may appear arbitrary, allows us to consider higher-order interactions in a novel and powerful way: if an edge can connect only two nodes at a time, a triangle connects three edges, a tetrahedron four faces, etc\ldots 
More generally, simplicial oscillators of order $k$ will interact through $(k + 1)$-simplices, resulting in interactions of order $k + 2$. 
In line with the guiding principles of higher-order network theory, the essential difference between agents and \emph{carriers of interactions} fades away, leaving us with wider modeling freedom.
Its evolution equation, moreover, can be elegantly written by borrowing some of the concepts of discrete exterior calculus~\cite{grady2010discrete}, the discrete analogous to differential geometry on manifolds. This geometrical structure allows us to get precious insights into the dynamics of the model and how it is related to the topological properties of the simplicial complex.
This fruitful relation with topology has also recently put the simplicial Kuramoto at the center of the attention, resulting in different variants and extensions of the original model~\cite{calmon2022dirac,calmon2023local,Bianconi_2021_Dirac,arnaudon2022connecting}.

In this work, we aim to lay down the mathematical foundations for the study and derivation of Kuramoto models on simplicial complexes.
Our approach relies on consistent geometrical and dynamical structures such as discrete differential geometry and gradient flows to express the simplicial Kuramoto models in a strict mathematical form while allowing for several extensions able to couple the dynamics across Hodge subspaces or simplicial orders.

\subsection{Structure of the paper}

The work is structured as follows. 
We first state the Kuramoto model in \Cref{subsection:Kuramoto} and review the needed concepts of discrete differential geometry in \Cref{subsection:geometry}.

In \Cref{section:simplicial_kuramoto_model}, we introduce the standard simplicial Kuramoto model and interpret its interactions in terms of the geometry and topology of the underlying simplicial complex.  
With this approach, we find that the model is locally equivalent to the standard Kuramoto model when the complex is locally \emph{manifold-like}. 

Furthermore, in \Cref{section:equilibrium_analysis}, we define a natural notion of simplicial phase-locking, which we then relate to the projections of the dynamics on higher and lower dimensional simplices, allowing us to give a geometrical picture of its meaning. 
Taking inspiration from classic works on the node Kuramoto~\cite{Stability_kuramoto_model}, we discuss the phase-locked configurations and derive necessary and sufficient conditions on the coupling strength for their existence. 

Then, in \Cref{coupling}, we review and generalize some variants of the simplicial Kuramoto model that couple the dynamics across Hodge subspaces, such as the explosive model~\cite{millan2020explosive} or the simplicial Kuramoto-Sakaguchi~\cite{arnaudon2022connecting}.
Then, in \Cref{section:coupling_orders} we expand on the Dirac formulation of~\cite{calmon2022dirac} that couples oscillators across orders of interactions, and Hodge subspaces when coupled with the models of \Cref{coupling}.

Finally, in \cref{section:application}, we apply some of the models studied here to real-world brain data and show how simple, edge-based simplicial Kuramoto models can achieve better correlations with functional connectivity than the standard node-based Kuramoto model.

\section{Preliminaries}
\subsection{Kuramoto model}\label{subsection:Kuramoto}
We begin by briefly introducing the classical Kuramoto model. 
Let us consider a system of $n$ \emph{phase} oscillators, characterized solely by their phase $\theta_i$ and \emph{natural frequency} $\omega_i$, the frequency at which they oscillate when isolated from any interactions. 
The evolution of the uncoupled system can be described by a set of differential equations: $\Dot{\theta}_i = \omega_i$ for each oscillator $i$. 
To account for the interaction among oscillators, various approaches can be employed, depending on the underlying physics of the phenomenon under investigation. 
A particularly elegant and widely studied model, renowned for its simplicity, analytical tractability, and rich behavior, was introduced by Kuramoto \cite{Kuramoto}. 
Known as the Kuramoto model, it is described by the following system of first-order differential equations:

\begin{equation}\label{eq::kuramoto_model_2}
\Dot{\theta}_i = \omega_i - \sigma \sum_{j=1}^n \sin(\theta_i - \theta_j).
\end{equation}

In this formulation, an additional term captures the effect of interactions between oscillator $i$ and every other oscillator $j$, modulated by a positive \emph{coupling} or \emph{interaction strength} parameter, denoted as $\sigma$. 
By the properties of the sine function, we observe that the interaction force between oscillators $i$ and $j$ becomes zero when $\theta_i - \theta_j = k\pi$ i.e. when $\theta_i$ and $\theta_j$ represent the same or opposite angles modulo $2\pi$. 
Conversely, the interaction is strongest when the phase difference between the oscillators corresponds to odd multiples of $\frac{\pi}{2}$, implying orthogonal states on the unit circle. 
This simple interaction mechanism forms the basis of the Kuramoto model, which can be further generalized, enabling the study of diverse synchronization phenomena and their intricate dynamics.
\begin{figure*}[htpb]
    \centering
    \includegraphics[width=0.85\linewidth]{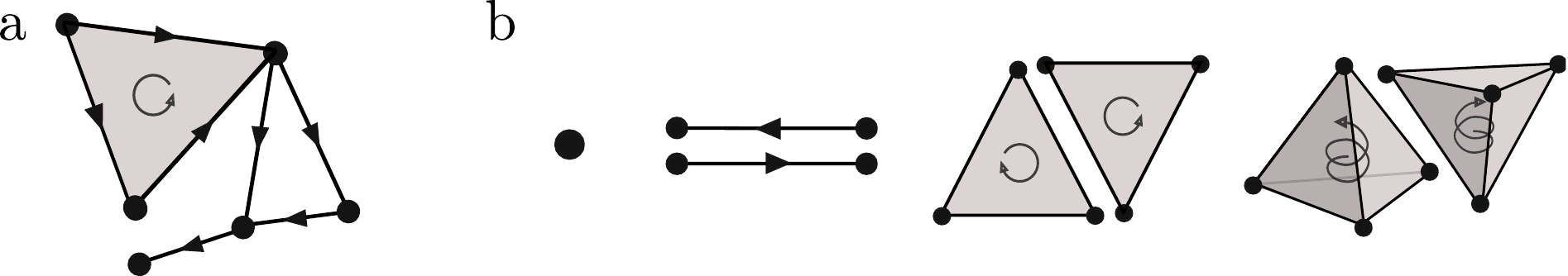}
    \caption{{\bf a.} Geometrical representation of a small oriented simplicial complex. {\bf b.} Oriented simplices of orders 0 (nodes), 1 (edges), 2 (triangles) and 3 (tetrahedra).}
    \label{fig:simplicial_complex}
\end{figure*}
\subsection{Discrete differential geometry}
\label{subsection:geometry}
A simplicial complex~\cite{hirani2003discrete,grady2010discrete} is a generalization of a graph that, along nodes and edges, can include triangles, tetrahedra, and their higher-dimensional analogs. Given a set of $N$ vertices $\mathcal{V} = \sset{v_0,\dots,v_{N-1}}$ we call $k$\emph{-simplex} any subset of $\mathcal{V}$ with $k+1$ elements. 
The dimension of a $k$-simplex $\sigma$ is $k$ and $\mathrm{dim}\,\sigma = k$. 
Geometrically, we think of $0$-simplices as nodes, $1$-simplices as edges, $2$-simplices as triangles, and so on. 
A \emph{simplicial complex} $\mathcal{X}$ is a set of simplices closed by inclusion, that is, every subset of a simplex is itself a simplex belonging to the complex (\cref{fig:simplicial_complex}a). 
We call $n_k$ the number of $k$-simplices in $\mathcal{X}$ and consider \emph{oriented} simplicial complexes, where each $k$-simplex is given an ordering of its vertices $[v_0,\dots,v_{k-1}]$, where two orderings are considered equivalent if they are related by an even number of swaps. This means that each simplex can have only two possible orientations (\cref{fig:simplicial_complex}b).   

Simplices in a simplicial complex can be related in two different ways. 
A \emph{subface} of a $k$-simplex $\sigma\in\mathcal{X}$ is any $(k-1)$-simplex $\tau$ contained in $\sigma$. The subfaces of a triangle $[a,b,c]$ are, for example, all its edges $[a,b],[b,c],[a,c]$. 
We write $\tau < \sigma$ when $\tau$ is a subface of $\sigma$.
A \emph{superface} of a $k$-simplex $\sigma\in\mathcal{X}$ is any $(k+1)$-simplex $\tau$ which contains $\sigma$. In this case, we write $\tau > \sigma$.
An oriented $k$-simplex $\sigma = [v_0,\dots,v_k]$ is said to be \emph{coherently oriented} with its subface $\tau$, with nodes $\sset{v_0,\dots,v_{i-1},v_{i+1},\dots,v_k}$, if the orientation given to $\tau$ is equivalent to
\begin{align}
[v_0,\dots,v_{i-1},v_{i+1},\dots,v_k]\, .
\end{align}
We write $\sigma \sim \tau$ when $\sigma > \tau$ and they are coherently oriented, while $\sigma\nsim\tau$ when $\sigma>\tau$ and they are incoherently oriented.
In addition, we say that two $k$-simplices $\sigma,\tau$ are \emph{lower adjacent} if they share a common subface (we write $\sigma \smile \tau$) while they are said to be \emph{upper-adjacent} if there exists a $(k+1)$-simplex which contains both of them (we write $\sigma\frown\tau$). 

For this work, it is important to highlight two special types of subfaces. We call \emph{free}, a subface which belongs only to a single simplex, and \emph{manifold-like} a subface $\tau$ which belongs to exactly two simplices $\sigma_1,\sigma_2$, one coherently oriented with $\tau$ ($\sigma_1\sim\tau$) and the other incoherently ($\sigma_2\nsim\tau$). This last definition comes from the fact that a simplicial complex where all the $(n-1)$-simplices are manifold-like (which we call \emph{simplicial} $n$-\emph{manifold}), when embedded into a Euclidean space, is an oriented topological $n$-manifold, in the sense that it locally looks like $\R^n$. If the simplicial complex is not a simplicial manifold, we can still have manifold-like subfaces which, when the complex is embedded, correspond to subspaces that are manifolds. In dimension $1$, for example, a manifold-like subface is a node incident to only two edges so that the complex is locally a line (\cref{fig:simplicial_manifold}a). 

From a geometric point of view, simplicial complexes take the role of geometrical domains upon which we define \emph{cochains}, algebraic objects which correspond to differential forms. A $k$-cochain is simply a function associating a real number to every $k$-simplex. The vector space of $k$-cochains is named $C^k(\mathcal{X})$ with a natural basis given by the functions associating $1$ to a particular simplex, and $0$ to all the others. Any $k$-cochain can therefore be written as
\begin{align} 
C^k(\mathcal{X}) \ni x = \sum_{i=1}^{n_k} x_i \sigma^i\, ,
\end{align}
with basis cochains $\sigma^i(\sigma_j) = \delta^i_j$ associated to every $k$-simplex $\sigma_i$.
Moreover, it is conventional to algebraically impose that a change of sign corresponds to a change of orientation
\begin{align*}
[v_0,v_1,\dots,v_k] = - [v_1,v_0,\dots,v_k]\, .
\end{align*}
If we assign positive weights to the $k$-simplices $w^k_1,\dots,w^k_{n_k}>0$, then we can endow the cochain space with an inner product given by the inverse of the diagonal matrix $W_{k}$
\begin{align}\label{eq:weight}  
W_{k} = \mathrm{diag}\left(w^k_1,\dots,w^k_{n_k}\right)\, .
\end{align}
We denote the inner product of cochains by $\inner{v}{w}_{w^k} \defeq v^TW^{-1}_{k} w$, and its induced norm by $\norm{v}_{w^k}\defeq \sqrt{v^TW_{k}^{-1}v}$, explicitly given as
\begin{align}
\norm{v}_{w^k} = \sqrt{\sum_{i=1}^{n_k} \frac{1}{w^k_i} v_i^2} \, .
\label{k-norm}
\end{align}

The inner product and the norm reduce to the standard Euclidean inner product and $2$-norm when the complex is \emph{unweighted}, i.e. $W_{k} = I_{n_k}$ for all $k=1,\dots, K$. 
In the rest of this work, we will always consider cochain spaces endowed with weights, meaning that inner products and norms will be weighted, and transposes will become adjoints. 
While this approach requires some care, it allows us to avoid carrying weight matrices along in every formula, resulting in more elegant and concise expressions that do not sacrifice generality.

The adjacency structure of the simplicial complex, and thus the complex itself, can be encoded in a family of linear operators acting on cochains. 
We define the $k$-th order \emph{incidence matrix} $B_k\in\R^{n_{k-1}\times n_k}$ describing the adjacency relations between $k$-simplices and $(k-1)$-simplices, as
\begin{align}\label{eq:incidence_matrix} 
B_k(i,j) =
\begin{cases}
    +1 &\text{ if } \mathrm{dim}\, \sigma_i = k-1,\, \sigma_j>\sigma_i \text{ and } \sigma_j \sim \sigma_i\,  , \\
    -1 &\text{ if } \mathrm{dim}\,\sigma_i = k-1,\, \sigma_j>\sigma_i \text{ and } \sigma_j \nsim \sigma_i\,  ,\\
    0\ &\text{ otherwise}\,  .
\end{cases}
\end{align}
We then define the \emph{coboundary operator}
\begin{align}\label{eq:coboundary}
    D^k=B_{k+1}^\top\,,
\end{align}
sending $k$-cochains to $(k+1)$-cochains. 
Its adjoint with respect to the inner product, which we name \emph{weighted boundary operator}, is
\begin{align}\label{eq:boundary}
    B^k = (D^{k-1})^* = W_{k-1} B_k W_{k}^{-1}\, .
\end{align}
Indeed, by definition of adjointness, for a $(k-1)$-cochain $x$ and a $k$-cochain $y$, $\langle D^{k-1} x, y\rangle_{w^k} = \langle D^{k-1} x, W_k^{-1}y\rangle_2 = \langle x, B_k W_k^{-1} y\rangle_2 = \langle x, W^{-1}_{k-1}W_{k-1}B_k W_k^{-1} y \rangle_2 = \inner{x}{B^k y}_{w^{k-1}}$.
The coboundary and boundary operators should be thought of as the discrete analog of the divergence and curl operators of differential calculus. They satisfy what is known as the ``fundamental theorem of topology''
\begin{align}
B^{k}B^{k+1} = 0,\, D^{k}D^{k-1} = 0\ \ \forall k\, ,
\end{align}
which is a linear-algebraic formalization of the topological fact that a boundary has no boundary. We call $k$-\emph{cocycle} a $k$-cochain $x$ such that
\begin{align}
D^kx = 0\, , 
\end{align}
and a \emph{weighted} $k$-\emph{cycle} a $k$-cochain $x$ such that
\begin{align}\label{cycle-def}
B^kx = 0\, .
\end{align}
With these two operators, we can define the \emph{discrete Hodge Laplacians} \cite{Lim_Laplacians}, which generalize the well-known graph Laplacian to act on higher order cochains
\begin{align}
    L^k = L^k_{\downarrow} + L^k_{\uparrow} = D^{k-1}B^k + B^{k+1}D^k\, . 
    \label{hodge_laplacian}
\end{align}
It can be easily proven that the kernel of the discrete $k$-Hodge Laplacian is isomorphic to the $k$-th (co)\emph{homology group} of the simplicial complex
\begin{align*}
\ker L^k = \ker B^k \cap \ker D^k \cong \mathcal{H}^k(\mathcal{X};\R) = \ker B^k\big/ \Ima B^{k+1}\, , 
\end{align*}
meaning that its dimension $\mathrm{dim} \ker L^k $ is equal to the $k$-th \emph{Betti number} of $\mathcal{X}$, i.e. the number of $k$-dimensional holes of the simplicial complex~\cite{ghrist2014elementary}. 
Intuitively, the $0$-dimensional holes are the connected components, $1$-dimensional holes are empty regions bounded by $1$-simplices, whereas the $2$-dimensional holes are cavities bounded by $2$-simplices.

\section{The simplicial Kuramoto model}\label{section:simplicial_kuramoto_model}

In this section, we formulate and study the Kuramoto model for interacting simplicial oscillators proposed in~\cite{millan2020explosive}. 
The rest of this section is organized as follows:
\begin{itemize}
    \item In \cref{subsection:simplicial_kuramoto}, we formulate the simplicial Kuramoto model using the tools of discrete differential geometry introduced in \cref{subsection:geometry}.
    \item In \cref{subsection:types}, we describe the local form of the two types of interactions in the model: from below and from above. In the case of interactions from below, we identify the presence of self-interactions resulting from free subfaces.
    \item In \cref{subsection:manifold-like}, we show that the $k$-th order simplicial Kuramoto model and the standard node Kuramoto model are equivalent when the simplicial complex is a simplicial $k$-manifold.
    \item In \cref{subsection:Hodge}, we describe how the dynamics naturally split into three independent subdynamics using the combinatorial Hodge decomposition theorem.
    \item In \cref{subsection:SOP}, we recall the definition of simplicial order parameter proposed in~\cite{arnaudon2022connecting}, discuss its implications on the meaning of synchronization in the simplicial model, and its differences with the standard Kuramoto order parameter.
\end{itemize}

\begin{figure*}[htpb]
    \centering
\includegraphics[width=0.6\linewidth]{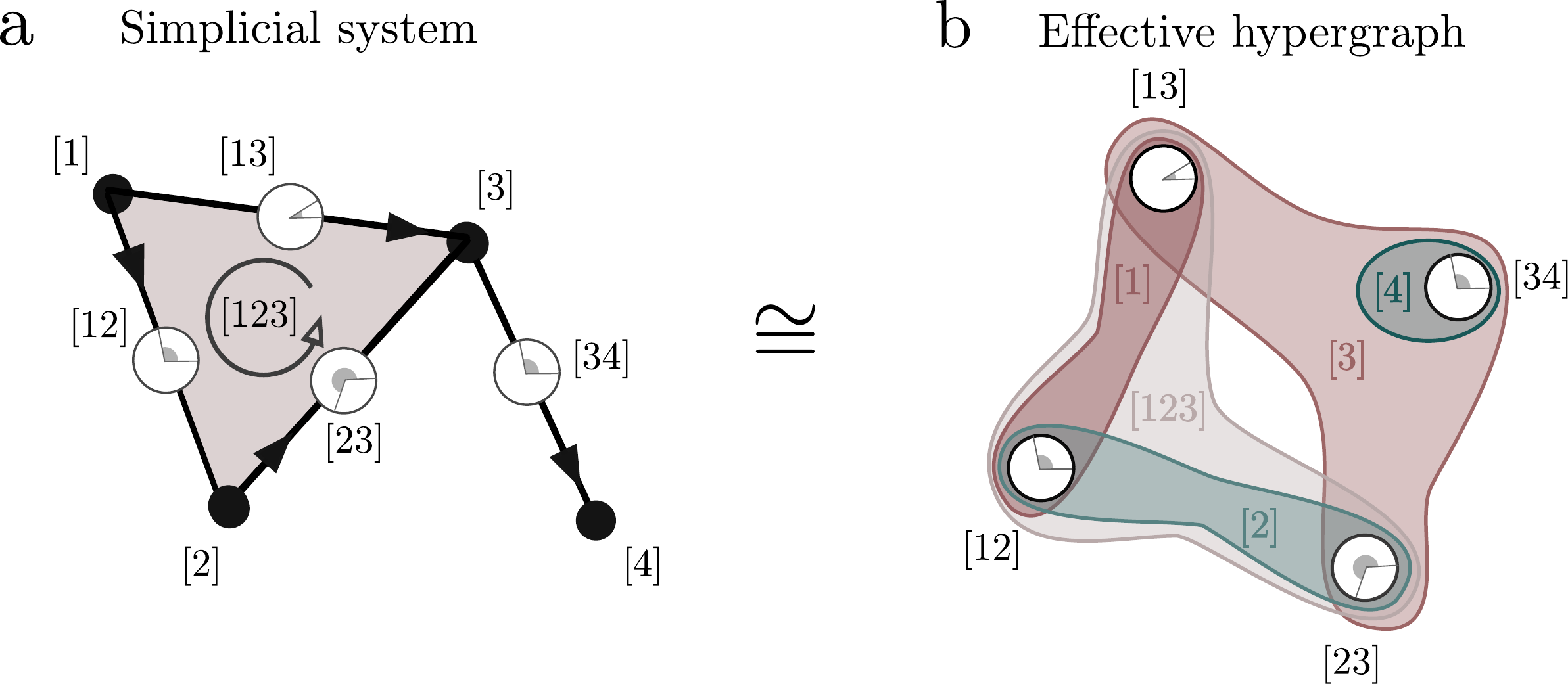}
    \caption{ {\bf a.} Edge simplicial Kuramoto on the simplicial complex described in \cref{example_1}. {\bf b.} The effective hypergraph of the dynamics describing the actual interactions taking place between the oscillators. The interaction hyperedges are labeled by the name of the simplex in the original complex which generates them. Note how the hyperedge $[4]$, representing the term $\sin(\theta_{[34]})$ in \cref{eq:example}, is interpreted here as a self-interaction.}
\label{fig:effective_hypergraph}
\end{figure*}

\subsection{Simplicial Kuramoto model}\label{subsection:simplicial_kuramoto}

Given a simplicial complex $\mathcal X$, the $k$-th order simplicial Kuramoto model~\cite{millan2020explosive,arnaudon2022connecting} describes a system where the $k$-simplices are oscillators interacting through common subfaces and superfaces. 
For example, one can consider oscillating edges that interact through common nodes and triangles (see \cref{fig:interactions}a). 
The model can be elegantly formulated with the boundary and coboundary operators as
\begin{align}\label{eq:simplicial_kuramoto}
    \Dot{\theta}_{(k)} = \omega - \sigma^\uparrow B^{k+1}\sin\left(D^k\theta_{(k)}\right) - \sigma^\downarrow D^{k-1}\sin\left(B^k\theta_{(k)}\right)\, .
\end{align}
Here, the phases of the oscillating $k$-simplices are gathered in the $n_k$-dimensional vector $\theta_{(k)}$, formalized as a $k$-cochain $\theta_{(k)}\in C^k(\mathcal{X})$, while $\omega\in C^k(\mathcal{X})$ represents the natural frequencies, i.e. $\omega_i$ is the frequency at which oscillator $i$ oscillates when no interactions are present. 
The parameters $\sigma^\uparrow,\sigma^\downarrow>0$ represent respectively the strength of the coupling through superfaces and subfaces. 
As shown in~\cite{consensus_simplicial_complexes}, the two interaction terms $B^{k+1}\sin(D^k\theta)$ and $D^{k-1}\sin(B^k\theta)$ describe, respectively, interactions from \emph{above} and \emph{below}, i.e. each oscillating $k$-simplex interacts with its adjacent simplices through both higher $(k+1)$ and lower dimensional $(k-1)$-simplices (\cref{fig:interactions}). In \Cref{subsection:types}, we unpack the matrix formulation and see the explicit form of these interaction terms. 
For ease of notation, from now on we will drop the subscript from $\theta_{(k)}$, as the order of oscillation can be easily inferred by the indices of the boundary and coboundary matrices in \cref{eq:simplicial_kuramoto}.

The form of~\cref{eq:simplicial_kuramoto} is not arbitrary but comes from the fact that, for $k=0$, it reduces to the standard Kuramoto model (from now on referred to as ``node Kuramoto'') on a network 
\begin{align}\label{eq:standard_kuramoto} 
\Dot{\theta}_i = \omega_i - \sigma\sum_{j}A_{ij}\sin(\theta_i-\theta_j)\, ,
\end{align}
where $A$ is the graph adjacency matrix. To see why, notice that \cref{eq:standard_kuramoto} can be rewritten in matrix form using the boundary and coboundary matrices (see Appendix \ref{section:kuramoto_with_boundary}) as
\begin{align}\label{eq:standard_kuramoto_boundary} 
\Dot{\theta} = \omega - \sigma B^1\sin(D^0\theta)\, .
\end{align}
One can think of $D^0$ as projecting the node phases on the edges by associating to each edge, which describes an interaction, the difference of its endpoints' phases. 
The boundary operator $B^1$ then projects the interactions back to the nodes, so that each node receives contributions from all edges that are incident to it. 
The extension of this term to higher-order oscillators is straightforward once one sees the model as a nonlinear extension of the graph Laplacian $L^0 \theta = B^1D^0\theta\ \rightarrow B^1\sin(D^0\theta)$, which can be naturally generalized with the discrete Hodge Laplacian defined in \cref{hodge_laplacian}.

Notice that in the case of the node Kuramoto, no simplices with order lower than the nodes exist, hence the dynamics results from interactions from \emph{above}, as the left term of \cref{eq:simplicial_kuramoto}. 
The interaction term from \emph{below} is naturally introduced to account for the lower adjacency structure present in simplicial complexes, but absent in graphs. 
Simply put, two triangles can be adjacent through a common edge and a common tetrahedron, but two nodes can only be adjacent through an edge, i.e. a higher-order simplex. This dynamics belongs to the wider class of dynamical systems on simplicial complexes, whose stability properties have been studied when the sine is replaced with a general nonlinearity (e.g.~\cite{consensus_simplicial_complexes, symmetries}).

\begin{example}\label{example_1}
Let us consider, as an example, the simplicial Kuramoto dynamics on the edges of the unweighted small simplicial complex shown in \Cref{fig:effective_hypergraph}a. 
The simplicial complex of interest is $\mathcal{X} = \sset{[1],[2],[3],[4],[12],[13],[23],[34],[123]}$ and the incidence matrices of order $1$ and $2$, accounting for the orientations, are
\begin{align*}
B_1 =~\begin{blockarray}{ccccc}
[12] & [13] & [23] & [34]\\
\begin{block}{(cccc)c}
  -1 & -1 & 0 & 0 &\ [1] \\
  1 & 0 & -1 & 0 &\ [2] \\
  0 & 1 & 1 & -1 &\ [3] \\
  0 & 0 & 0 & 1 &\ [4] \\
\end{block}
\end{blockarray}\, ,\ 
B_2 =~\begin{blockarray}{cc}
[123]\\
\begin{block}{(c)c}
  1  &\ [12] \\
  -1  &\ [13] \\
  1  &\ [23] \\
  0 &\ [34] \\
\end{block}
\end{blockarray}\, .
\end{align*}
Being $\mathcal{X}$ unweighted, we also have from \cref{eq:boundary} that $B^1 = B_1$, $D^0 = B_1^\top$, $B^2 = B_2$, $D^1 = B_2^\top$. If we consider the vector of phases on the edges $\theta = (\theta_{[12]},\theta_{[13]},\theta_{[23]},\theta_{[34]})$ and their natural frequencies $\omega =0$, then, after some algebra, we see that \cref{eq:simplicial_kuramoto} becomes
\begin{equation}\label{eq:example}
\begin{cases}
\begin{aligned}
\Dot{\theta}_{[12]} = &- \sigma^\downarrow\left(\sin(\theta_{[12]}+\theta_{[13]})+\sin(\theta_{[12]}-\theta_{[23]})\right) \\ &-\sigma^\uparrow\sin(\theta_{[12]}-\theta_{[13]}+\theta_{[23]})
\end{aligned}\\ 
\begin{aligned}
\Dot{\theta}_{[13]} = &- \sigma^\downarrow\left(\sin(\theta_{[13]}+\theta_{[12]})+\sin(\theta_{[13]}+\theta_{[23]}-\theta_{[34]})\right)\\ &+ \sigma^\uparrow\sin(\theta_{[12]}-\theta_{[13]}+\theta_{[23]})
\end{aligned}\\ 
\begin{aligned}
\Dot{\theta}_{[23]} = &- \sigma^\downarrow\left(\sin(\theta_{[23]}-\theta_{[12]})+\sin(\theta_{[23]}+\theta_{[13]}-\theta_{[34]})\right)\\ &- \sigma^\uparrow\sin(\theta_{[12]}-\theta_{[13]}+\theta_{[23]})
\end{aligned}\\ 
\begin{aligned}
\Dot{\theta}_{[34]} = - \sigma^\downarrow\left(\sin(\theta_{[34]}-\theta_{[13]}-\theta_{[23]})-\sin(\theta_{[34]})\right)
\end{aligned}
\end{cases}\, ,
\end{equation}
where the interaction terms from below and from above are clearly identifiable by their coupling strengths. 
Notice that some of the interactions from below are pairwise, others are higher-order (they involve three oscillators) and one, $\sin(\theta_{[34]})$, is of order $0$ as it depends on the value of a single oscillator. Moreover, the interaction from above through the triangle, as we expected, is higher-order and involves three oscillators.
\end{example}

While it is natural to define the dynamics on the edges and formulate the model using the incidence matrices of the simplicial complex, it is interesting to look at \Cref{eq:example} from another point of view. 
If we forget about the underlying simplicial complex and that $\theta_{[ij]}$ is a phase associated with an edge oscillator, what we are left with is a dynamical system where the phases of $4$ different oscillators evolve by interacting with each other in a way specified by the functional form of the equations. 
It is natural, therefore, to consider these oscillators as nodes and represent their interactions with \emph{hyperedges}, i.e. arbitrary groups of nodes. 
What we get, by neglecting the signs inherited by the orientations, is an \emph{effective hypergraph} (\cref{fig:effective_hypergraph}b) which does not resemble the original simplicial complex but has the advantage of clearly representing the actual interaction structure underlying the dynamics. 
Thus, the simplicial Kuramoto model can be seen as a particular kind of hypergraph oscillator dynamics where the coupling functions depend on the orientations of the original simplices.

Notice how the coupling functions in \cref{eq:example} are different from the ones classically used in hypergraph oscillator models \cite{skardal2019abrupt,lucas2020multiorder,gengel2020high,matheny2019exotic,zhang2023higher,gambuzza2021stability} as, in general, they do not vanish when the phases of the interacting oscillators are all equal.
Moreover, this hypergraph formulation, despite being more expressive, is harder to treat analytically as, with no knowledge of the underlying simplicial complex, one cannot resort to the powerful tools of topology and discrete geometry.

\begin{figure*}[htpb]
    \centering
\includegraphics[width=0.85\linewidth]{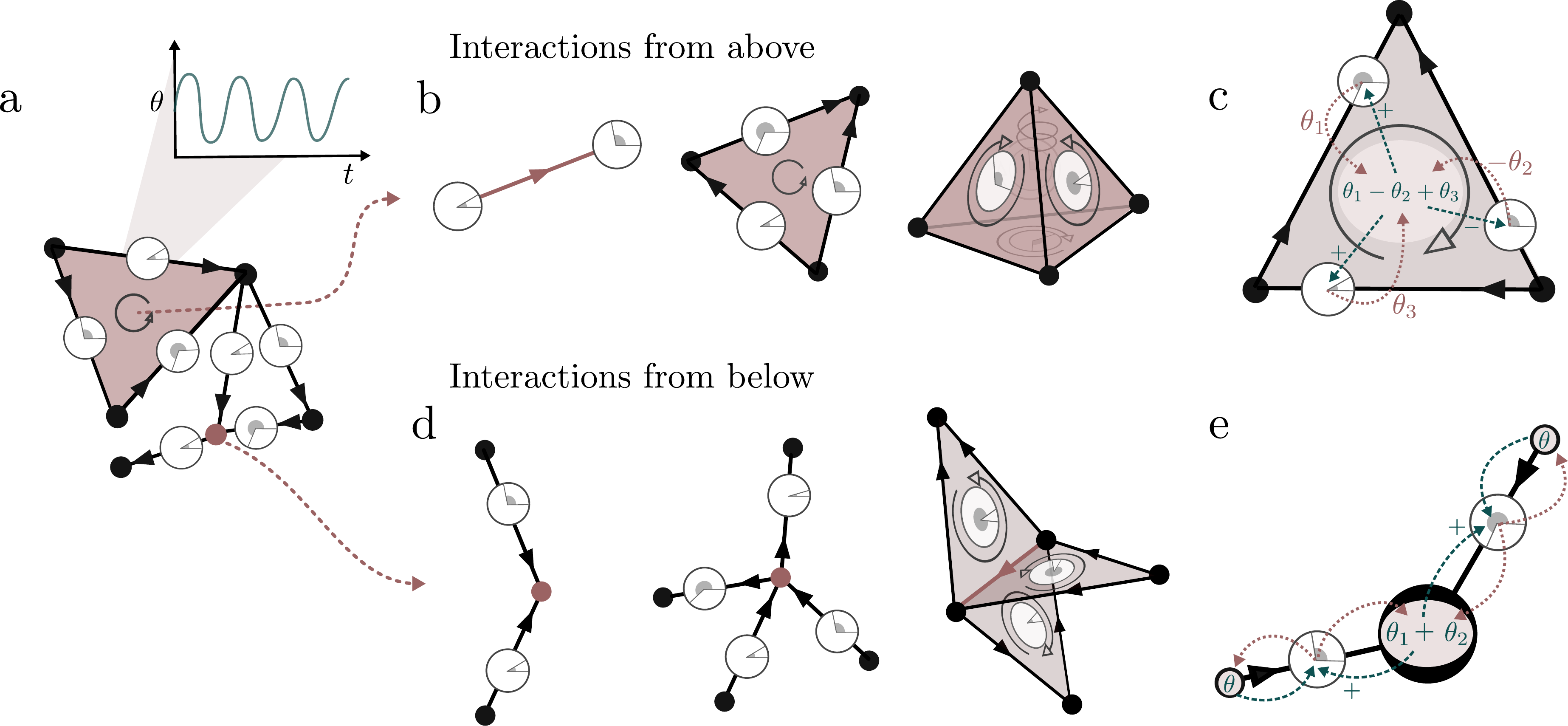}
    \caption{{\bf a.} The simplicial Kuramoto model allows us to consider oscillators, shown here as clocks, on the edges of a simplicial complex, interacting through nodes and triangles. {\bf b.} The interaction from \emph{above} \cref{eq:interaction_above} happens between $k+2$ oscillating $k$-simplices through a single $(k+1)$-simplex, here highlighted in red. {\bf c.} In the interaction from above, each oscillator involved is influenced by a term depending on the oriented sum of the phases. The phase $\theta_2$ appears with a minus sign because its edge is oriented in the opposite direction relative to the triangle.  {\bf d.} The interaction from \emph{below} \cref{eq:interaction_below} happens between an arbitrary number of oscillating $k$-simplices through a single $(k-1)$-simplex, highlighted here in red. {\bf e.} Unlike interactions from above, interactions from below through free subfaces are akin to self-interactions.}
    \label{fig:interactions}
\end{figure*}

\subsection{As above, \textit{not} so below: the two types of interactions}\label{subsection:types}
The introduction of \cref{eq:simplicial_kuramoto} was motivated by purely formal and symmetry arguments.
It is then important to study the \emph{local} form of the different interaction terms to understand what kind of system is being described. 
Following a similar procedure to the one proposed in ~\cite{consensus_simplicial_complexes}, we treat the two types of interactions separately.

Let us start with the interaction from \emph{above}
\begin{align}\label{eq:interaction_above}
I_{\uparrow}(\theta) \defeq -B^{k+1}\sin(D^k\theta)\, ,
\end{align}
which is a direct generalization of the standard node Kuramoto interaction term $-B^1\sin(D^0\theta)$. 
To understand its behavior, we look at the simplest possible interaction of its kind, where we have a single $(k+1)$-simplex regulating the interaction between its $k+2$ oscillating subfaces (\cref{fig:interactions}b). In this case, the incidence matrix is a column vector of the form
\begin{align}
B_{k+1} = \xi^{\uparrow}\in\sset{-1,1}^{k+2}\, ,
\end{align}
where $\xi^{\uparrow}_i$ is $1$ if the subface $i$ is coherently oriented with the $(k+1)$-simplex, and $-1$ if it is incoherently oriented.
It follows that $D^k = (B_{k+1})^\top = (\xi^{\uparrow})^\top$. 
\Cref{eq:interaction_above} then becomes
\begin{align}\label{eq:interaction_above_example}
    I_{\uparrow}(\theta) = -\xi^{\uparrow}\sin\left((\xi^{\uparrow})^\top\theta\right) \, , 
\end{align}
which means that each oscillator will be influenced by the same scalar value given by the oriented sum of the phases $(\xi^{\uparrow})^\top\theta$, with a sign depending on the coherence or incoherence of the orientations (see \cref{fig:interactions}c).
In the nodes case, this simply reduces to $I_\uparrow(\theta) = (-\sin(\theta_1-\theta_2),-\sin(\theta_2-\theta_1))^\top$. 
Given that a $(k+1)$-simplex always has $k+2$ subfaces, this kind of interaction involves $k+2$ oscillators and thus, for $k>0$, is genuinely higher order, in the sense that it does not result from the composition of multiple pairwise interaction terms. 

The interaction from \emph{below}
\begin{align}\label{eq:interaction_below}
I_{\downarrow}(\theta) \defeq -D^{k-1}\sin(B^k\theta)
\end{align}
describes the interactions of simplicial oscillators through lower-order simplices.
This interaction from below is absent in the node Kuramoto, and it represents the true novelty of the simplicial model. 
First, while only $k+2$ simplices of order $k$ can interact through a $(k+1)$-simplex, an arbitrary number of $k$-simplices can have a common subface and interact from below. 
This allows us to consider arbitrary higher-order interactions, not restricted by the order of the oscillating simplices.
It is then natural to ask if the interactions from below are locally of a similar form to the interactions from above, as in the case of \cref{eq:interaction_above}. 
For this, let us consider again the simplest possible interaction, i.e. the general case of $N$ $k$-simplices lower adjacent through a common subface with arbitrary orientations, as illustrated in \cref{fig:interactions}d. 
By considering an appropriate ordering of the simplices, this configuration is described by the incidence matrix
\begin{align}
B_k = 
\begin{pmatrix}
\xi^{\downarrow}_1 & \xi^{\downarrow}_2 & \cdots & \xi^{\downarrow}_N \\
o_1 & 0 & \cdots & 0\\
0 & o_2 & \cdots & 0\\
\vdots&\vdots & \vdots &\vdots\\
0 & 0 & \cdots & o_N
\end{pmatrix}\, , 
\end{align}
where the entries $\xi^{\downarrow}_i\in\sset{-1,1}$ describe the relative orientation between simplex $i$ and the common subface, while $o_i\in\sset{-1,1}^k$ contains the relative orientations between simplex $i$ and its other subfaces not involved in this interaction from below. 
Given that $D^{k-1}=(B_k)^\top$, \cref{eq:interaction_below} becomes
\begin{align}\label{eq:self_int}   
I_{\downarrow}(\theta) = -\xi^{\downarrow}\sin\left((\xi^{\downarrow})^\top\theta\right) - k\sin(\theta)\, ,
\end{align}
where $\xi^\downarrow = (\xi^\downarrow_1,\dots,\xi^\downarrow_N)^\top$.
Notice that the first term is formally the same as in \cref{eq:interaction_above_example} for the interaction from above. 
Each oscillator receives a contribution depending on the phases of all oscillators involved in the interaction. 
This means that the higher order interaction given by a $(k+1)$-simplex shares the same structure as the higher order interaction given by $k+2$ oscillators sharing a common subface. 

However, an extra term $-k\sin(\theta)$ appears, but by carrying out the computations which lead from \cref{eq:interaction_below} to \cref{eq:self_int}, it appears that this extra term is a sum of contributions coming from the subfaces not involved in the interaction, which, in this case, are free i.e. they are subfaces of only one simplex. 
In fact, here each oscillator has $k$ free subfaces, hence the multiplication factor $k$ in front of $\sin(\theta)$.
In general, due to this term, each oscillator modulates its own frequency based on its own phase, which is akin to a self-interaction through its free subfaces (\cref{fig:interactions}e). 
Formally, this term also appears in the Adler equation~\cite{pikovsky2003synchronization} describing the phase difference of a system with one oscillator driven by another one. 
From that point of view, the self-interaction terms can be seen as the driving of each oscillator by another non-existent oscillator that has a constant phase set to zero.

\subsection{Manifold-like simplicial complexes}\label{subsection:manifold-like}
Interestingly, if the interaction from below involves exactly two oscillating simplices, one coherent and the other incoherent with respect to the common subface so that the complex at that subface is \emph{manifold-like} (see \cref{subsection:geometry}), the interaction term will be the same as the standard node Kuramoto, i.e. of the form $\sin(\theta_1 - \theta_2)$.

\begin{figure*}[htp]
    \centering
    \includegraphics[width=\linewidth]{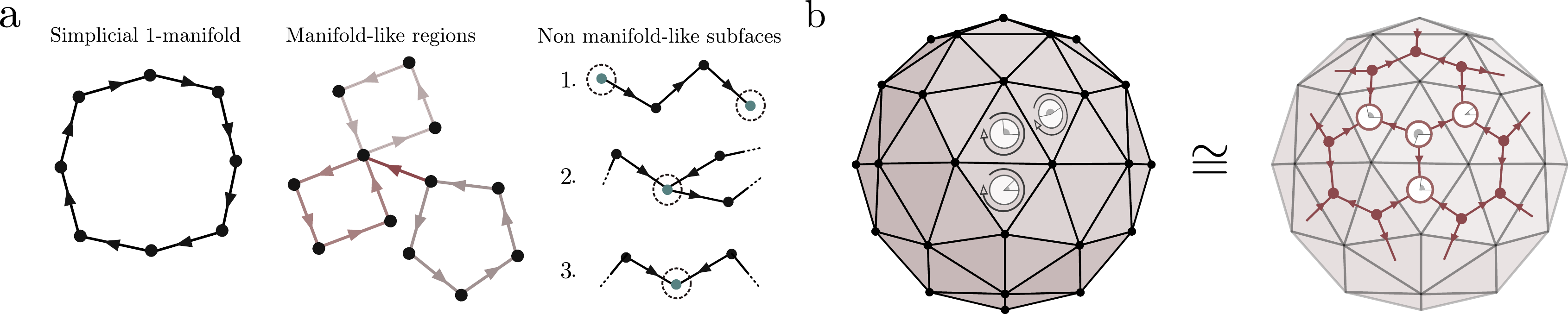}
    \caption{The form of the interactions from below of the simplicial Kuramoto is equivalent to ones of the standard node Kuramoto on manifold-like subfaces of the simplicial complex. {\bf a.} From left to right: a $1$-dimensional  simplicial manifold, where every node is manifold-like as it is incident to exactly two edges with different orientations. In the middle is a simplicial complex where the $1$-dimensional manifold-like regions are highlighted with different colors. On the right, the different ways in which a subface can produce an interaction different from a standard Kuramoto interaction: 1. the subface is free, 2. there are more than two oscillators incident to it, 3. there are two oscillators incident to it which are both coherently or incoherently oriented. {\bf b.} If the complex is an oriented simplicial manifold, then the interaction term from below is equivalent to a node Kuramoto taking place on the $1$-skeleton of the dual cell complex.}
    \label{fig:simplicial_manifold}
\end{figure*}

Different kinds of interactions occur at non-manifold-like subfaces (\cref{fig:simplicial_manifold}a), that is:
\begin{enumerate}
    \item at subfaces that are free, resulting in self-interactions;
    \item at subfaces that are adjacent to more than two simplices of order $k$, i.e. genuinely high-order interactions;
    \item at subfaces adjacent to two simplices that are both coherently or incoherently oriented, resulting in interactions of the form $\sin(\theta_1 + \theta_2)$.
\end{enumerate}

If every $(k-1)$-simplex which has at least a $k$-simplex incident to it is manifold-like, so that the simplicial complex is a simplicial manifold, we have the following equivalence result.
\begin{theorem}[Simplicial Kuramoto on a manifold]
\label{thm:simplicial-kuramoto-manifold}
Let $\mathcal{X}$ be a $k$-dimensional oriented simplicial manifold. Then it follows that the simplicial Kuramoto dynamics of order $k$ is equivalent to the standard node Kuramoto taking place on the $1$-skeleton of the dual cell complex to $\mathcal{X}$, that is, the graph with a node for each $k$-simplex and an edge for each $(k-1)$-simplex.
\end{theorem}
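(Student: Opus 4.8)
\noindent\emph{Proof strategy.} The plan is to reduce the dynamics \eqref{eq:simplicial_kuramoto} to the node Kuramoto in boundary form \eqref{eq:standard_kuramoto_boundary}, the manifold hypothesis being precisely what lets us recognize the operators $D^{k-1}$ and $B^{k}$ of $\mathcal X$ as the (co)boundary operators of the dual graph. First I would note that, since $\mathrm{dim}\,\mathcal X = k$, there are no $(k+1)$-simplices, so $B_{k+1}=0$ and hence $B^{k+1}=W_k B_{k+1}W_{k+1}^{-1}=0$. The interaction from above $I_\uparrow$ in \eqref{eq:interaction_above} therefore vanishes identically, and \eqref{eq:simplicial_kuramoto} collapses to $\dot\theta = \omega - \sigma^\downarrow D^{k-1}\sin(B^k\theta)$, which already has the structural shape of \eqref{eq:standard_kuramoto_boundary} (itself having no term from above); only the identification of the two operators with those of a graph remains.

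Next I would invoke the defining property of a simplicial $k$-manifold: every $(k-1)$-simplex $\tau$ contained in some $k$-simplex is manifold-like, i.e. it is a subface of exactly two $k$-simplices $\sigma_1(\tau)\sim\tau$ and $\sigma_2(\tau)\nsim\tau$. By \eqref{eq:incidence_matrix} this says every row of $B_k$ has one entry $+1$ (in column $\sigma_1(\tau)$), one entry $-1$ (in column $\sigma_2(\tau)$), and zeros elsewhere — exactly the form of the incidence matrix of a graph. That graph is the $1$-skeleton $G$ of the dual cell complex: one vertex per $k$-simplex, one edge $\tau$ per $(k-1)$-simplex joining $\sigma_1(\tau)$ and $\sigma_2(\tau)$, oriented as $[\sigma_2(\tau),\sigma_1(\tau)]$. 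Comparing entries one reads off, in the unweighted case $W_k=I$, that $D^{k-1}=B_k^\top$ coincides with the dual incidence matrix $B_1^G$ and $B^k=B_k$ coincides with $D^0_G=(B_1^G)^\top$, so the reduced equation becomes $\dot\theta = \omega - \sigma^\downarrow B_1^{G}\sin(D^0_G\theta)$ — precisely \eqref{eq:standard_kuramoto_boundary}, equivalently the node Kuramoto \eqref{eq:standard_kuramoto}, on $G$ with coupling $\sigma^\downarrow$.

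Equivalently, and more transparently, one can verify the identification componentwise. Fix a $k$-simplex $\sigma$; it has $k+1$ subfaces $\tau$, each (by manifold-likeness) incident to exactly one further $k$-simplex $\sigma'(\tau)$, and in the reduced equation $\tau$ contributes to $\dot\theta_\sigma$ the term $-\sigma^\downarrow B_k(\tau,\sigma)\sin\big((B^k\theta)_\tau\big)$. A short case split on whether $\sigma\sim\tau$ or $\sigma\nsim\tau$ shows that the sign coming from $B_k(\tau,\sigma)$ and the ordering of the phase difference inside $(B^k\theta)_\tau$ always cancel, so this contribution equals $-\sigma^\downarrow\sin(\theta_\sigma-\theta_{\sigma'(\tau)})$. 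Summing over the $k+1$ subfaces of $\sigma$ is the same as summing over the neighbours of $\sigma$ in $G$, giving $\dot\theta_\sigma = \omega_\sigma - \sigma^\downarrow\sum_{\sigma'} A^G_{\sigma\sigma'}\sin(\theta_\sigma-\theta_{\sigma'})$.

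The step requiring care is the orientation bookkeeping — both in matching $B_k$ to the dual incidence matrix and in the case split — and this is exactly where the manifold hypothesis does its work: the three ``bad'' configurations listed just before the theorem (free subfaces, subfaces with more than two cofaces, and subfaces with two equally-oriented cofaces) would each spoil the argument, and a simplicial $k$-manifold has none of them. A secondary point is the handling of weights: for a weighted complex $B^k=W_{k-1}B_kW_k^{-1}$ carries the factors $w^{k-1}_\tau/w^k_\sigma$ into the argument of the sine, so the equivalence with the genuinely \emph{standard} node Kuramoto is the unweighted statement; for a weighted $\mathcal X$ the same computation yields the analogous dynamics on $G$ (with the dual graph inheriting the weights $W_k$ on vertices and $W_{k-1}$ on edges) only after the rescaling $\theta_\sigma\mapsto\theta_\sigma/w^k_\sigma$ of the couplings.
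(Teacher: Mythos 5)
Your proof is correct, but it takes a genuinely different route from the paper's. The paper's proof is a two-line appeal to discrete Poincar\'e duality (citing Grady--Polimeni): under the manifold hypothesis one has $B^k=\widetilde D^0$ and $D^{k-1}=\widetilde B^1$ for the dual cell complex, and the interaction from below immediately becomes the node-Kuramoto interaction from above on the dual $1$-skeleton. You instead build the duality by hand: you first observe that $\mathrm{dim}\,\mathcal X=k$ forces $B^{k+1}=0$ so that $I_\uparrow$ vanishes (a step the paper's proof leaves implicit by simply not writing that term), then show that manifold-likeness makes every row of $B_k$ have exactly one $+1$ and one $-1$, identify $B_k$ with the incidence matrix of the dual graph, and check the sign cancellation $-B_k(\tau,\sigma)\sin((B^k\theta)_\tau)=-\sin(\theta_\sigma-\theta_{\sigma'(\tau)})$ componentwise. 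What your approach buys is transparency: it makes explicit exactly where each of the three ``bad'' subface configurations would break the argument, and it surfaces the weighted subtlety (the factors $w^{k-1}_\tau/w^k_\sigma$ inside the sine) that the paper absorbs into the phrase ``weighted boundary and coboundary operators of the dual cell complex'' without comment. What the paper's approach buys is brevity and the conceptual placement of the result as an instance of a standard duality. Your remark that the strictly \emph{standard} node Kuramoto is recovered only in the unweighted case is a fair and slightly sharper reading of the statement than the paper's own proof provides.
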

\begin{proof}
Under the assumptions that $\mathcal{X}$ is manifold-like, we can apply the discrete analogous to Poincar\'e duality (see~\cite{grady2010discrete} p.50) to obtain $B^k = \widetilde{D}^0,\quad D^{k-1} = \widetilde{B}^1$, where $\widetilde{B}$ and $\widetilde{D}$ are, respectively, the weighted boundary and coboundary operators of the dual cell complex to $\mathcal{X}$. 
The interaction term from below in the primal complex becomes the interaction term from above in the $1$-skeleton of the dual, i.e.
\begin{align*}
\Dot{\theta} = \omega - \sigma^\downarrow D^{k-1}\sin(B^k\theta) = \omega - \sigma^\downarrow \widetilde{B}^1\sin(\widetilde{D}^0\theta)\, ,
\end{align*}
which has the same form as the standard node Kuramoto model in \cref{eq:standard_kuramoto_boundary}.
\end{proof}
An illustration of this result can be seen in \cref{fig:simplicial_manifold}b with a triangulated sphere. 
Notice also that the dual graph to a simplicial $k$-manifold will necessarily be a $(k+1)$-regular graph, as every oscillating $k$-simplex has exactly $k+1$ subfaces.

\subsection{Hodge decomposition of the dynamics}\label{subsection:Hodge}
Thanks to the particular form of the two interaction terms, one can use a well-known result in combinatorial topology to decompose the dynamics into three independent subdynamics.
To show this, let us consider a simplicial complex $\mathcal{X}$, weighted or unweighted, which describes the interactions between $k$-th order oscillators. 
Then, the simplicial Hodge decomposition theorem~\cite{jiang2011statistical} states that every cochain can be decomposed into three orthogonal components 
\begin{align}\label{eq:hodge_decomposition}  
C^k(\mathcal{X}) \cong \R^{n_k} = \Ima B^{k+1}\oplus \ker L^k \oplus \Ima D^{k-1}\, ,
\end{align}
which can be interpreted as analogous to divergence-free, harmonic, and curl-free vector fields.
We use the theorem to decompose both the phases cochain $\theta$ and the natural frequencies $\omega$
\begin{align}
\theta = \theta_{\mathrm{df}} + \theta_{\mathrm{H}} + \theta_{\mathrm{cf}},\ \omega = \omega_{\mathrm{df}} + \omega_{\mathrm{H}} + \omega_{\mathrm{cf}}\, , 
\end{align}
where $\mathrm{cf}$ stands for \textit{curl-free}, $\mathrm{H}$ for \textit{harmonic}, and $\mathrm{df}$ for \textit{divergence-free}.
Rewriting the simplicial Kuramoto dynamics leveraging the orthogonality of the components, \cref{eq:simplicial_kuramoto} is equivalent to the following system
\begin{align}\label{eq:simplicial_kuramoto_decomposed}
\begin{cases}
\Dot{\theta}_{\mathrm{df}} = \omega_{\mathrm{df}} - \sigma^\uparrow B^{k+1}\sin(D^k\theta_{\mathrm{df}})\\
\Dot{\theta}_{\mathrm{H}} = \omega_{\mathrm{H}} \\ 
\Dot{\theta}_{\mathrm{cf}} = \omega_{\mathrm{cf}} - \sigma^\downarrow D^{k-1}\sin(B^k\theta_{\mathrm{cf}})\, .
\end{cases}
\end{align}
These three equations are of crucial importance. They tell us that under the simplicial Kuramoto dynamics: {\bf i)} the curl-free, the harmonic, and the divergence-free components evolve independently of one another, and {\bf ii)} the harmonic component is not affected by the interaction terms. 
Notice also that the interaction from above affects only the divergence-free component, while the one from below affects only the curl-free component.

Moreover, if $\omega_{\mathrm{H}} \neq 0$, there can be no equilibrium of the system as each component of $\theta_{\mathrm{H}}$ will always evolve with a fixed angular speed. 
It follows that it is always possible to pass to a frame of reference where the harmonic component is constant in time, simply by performing the change of variables $\theta \rightarrow \theta - \omega_{\mathrm{H}}$. In the case of the node Kuramoto, $\omega_{\mathrm{H}} = \Bar{\omega}\ones$, i.e. the constant vector of the average natural frequency. 
This is part of a more general observation that the addition of a harmonic cochain $x\in\ker L^k$ to the phases has no effect on the dynamics. In fact, it can be proven that $\ker L^k = \ker B^k \cap\ker D^k$ and thus both $B^kx$ and $D^kx$ are zero. Any change of variable $\gamma = \theta + x$ will thus leave \cref{eq:simplicial_kuramoto} formally unchanged. In this sense, we can say that the harmonic space is the \textit{gauge} of the simplicial Kuramoto. 

\subsection{Simplicial order parameters and gradient flow}\label{subsection:SOP}

To measure the degree of synchronization of a phase configuration, it is common to employ the \emph{order parameter}, which, for an unweighted network of $N$ oscillators, is defined as
\begin{align}\label{eq:standard_OP}
\widetilde{R}(\theta) = \frac{1}{N} \abs{\sum_{\alpha=1}^N e^{i\theta_\alpha}}\, .
\end{align}
By definition, it is non-negative, and it reaches its maximum value of $1$ when the oscillators are \emph{fully synchronized}, i.e. when they all have the same phase $\theta \propto \ones$. 

The order parameter defined in \cref{eq:standard_OP}, however, is independent of the network structure underlying the dynamics. As first proposed in~\cite{Stability_kuramoto_model}, we can generalize it in the following way
\begin{align}\label{eq:OP_2} 
R(\theta) = \frac{N^2 - 2e + 2\ones^\top\cos(B_1^\top\theta)}{N^2}\, ,
\end{align}
where $e$ is the number of edges. While \cref{eq:OP_2} reduces to \cref{eq:standard_OP} in the case of a fully connected network, it is, in fact, a more natural measure of synchronization in the general case, as
\begin{align}
\nabla_{\theta} R(\theta) \propto -B_1\sin(B_1^\top\theta) = I_\uparrow(\theta)\, ,
\end{align}
i.e. $R(\theta)$ is the potential function of the node Kuramoto dynamics, viewed as a gradient flow, i.e. $\dot \theta = \nabla_{\theta} R(\theta)$. 
As proposed in~\cite{arnaudon2022connecting}, we can extend this intuition to the  simplicial case and, neglecting constants that do not appear in the gradient, define the \emph{simplicial order parameter}
\begin{align}\label{eq:SOP}   
R_k(\theta) = \frac{1}{C_k}\left(\ones^\top W_{k-1}^{-1}\cos(B^k\theta) + \ones^\top W_{k+1}^{-1}\cos(D^k\theta)\right)\, ,
\end{align}
with the normalization constant 
\begin{align}
C_k = \ones^\top W^{-1}_{k-1}\ones + \ones^\top W^{-1}_{k+1}\ones\, .
\end{align}
The weight matrices are added to further generalize the construction to weighted simplicial complexes, and generate the weighted simplicial Kuramoto model as the gradient flow
\begin{align}
\label{eq:single-order-gradient-flow}
W_k\nabla_{\theta} R_k(\theta) \propto I_\uparrow(\theta) + I_\downarrow(\theta)\, .
\end{align}
This order parameter reaches a maximum value of $1$ if $\theta\in \ker B^k \cap \ker D^k$ i.e. when the phases cochain belongs to the harmonic space. 
This is a direct generalization of synchronization in the node Kuramoto model: in a connected network, $\ker L^0 = \mathrm{span}\sset{\ones}$, and the full synchronization condition is $\theta\propto \ones$, which is equivalent to the phase cochains being harmonic. 

Hence, under this definition, full synchronization in the simplicial model does not mean that the phases are all equal, but that $\theta$ is harmonic \cite{arnaudon2022connecting}.
Moreover, as the $k$-th harmonic space of a simplicial complex is isomorphic to the $k$-th homology group, we can think of fully synchronized configurations as, intuitively, localized around the $k$-dimensional holes.

\begin{definition}[Full synchronization]
A configuration $\theta$ is said to be \emph{fully synchronized} under the $k$-th order simplicial Kuramoto dynamics if $\theta\in\ker L^k$. 
\end{definition}

\begin{figure*}[htp]
    \centering
    \includegraphics[width=0.9\linewidth]{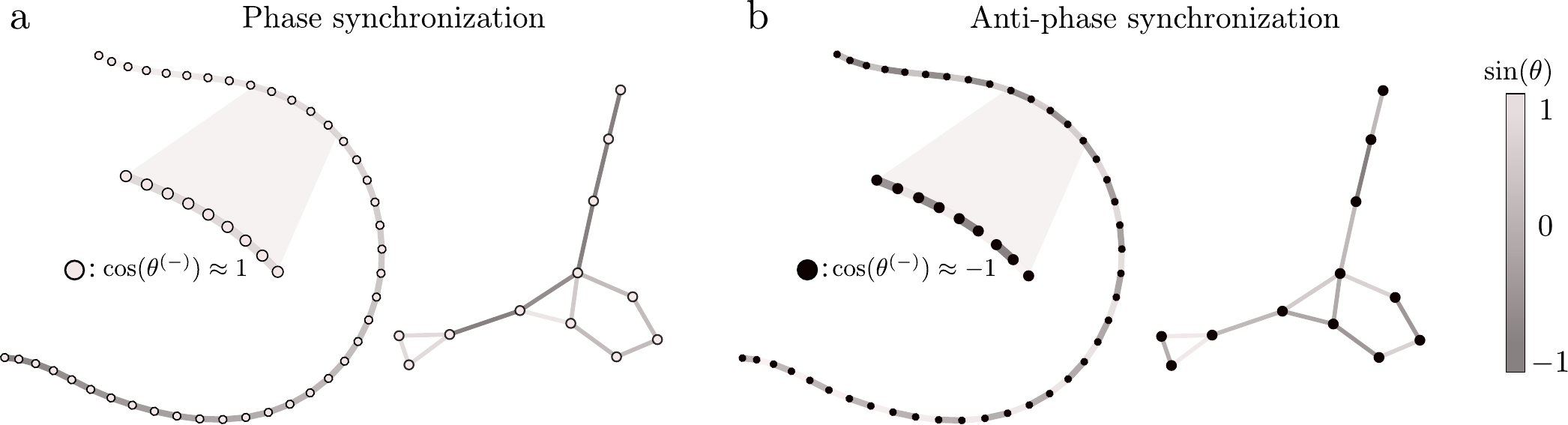}
    \caption{Configurations of phases $\theta\in C^1$, whose sine is shown here in color, which are phase ($R_1(\theta) \approx 1$) and anti-phase ($R_1(\theta) \approx -1$) synchronized in the case of a chain of edges, which is ``quasi''-manifold as all subfaces except the endpoints are manifold-like, and a more general $1$-dimensional simplicial complex.
    {\bf a.} In the case of phase synchronization, close oscillators on manifold-like regions have similar phases. {\bf b.} Anti-phase synchronized configurations, instead, are such that, on manifold-like regions, adjacent oscillators have opposite phases. 
    }
    \label{fig:phaseantiphase}
\end{figure*}

From the simplicial order parameter \cref{eq:SOP}, we can extract two \emph{partial order parameters}
\begin{subequations}
\begin{align}
R_k^-(\theta) &\defeq \frac{1}{C^{-}_k}\ones^\top W^{-1}_{k-1}\cos\left(B^k\theta\right) \label{def:partial_order_parameters_1}\\
\qquad R_k^+(\theta) &\defeq \frac{1}{C^{+}_k}\ones^\top W^{-1}_{k+1}\cos\left(D^k\theta\right)\, , 
 \label{def:partial_order_parameters_2}
\end{align}
\end{subequations}
where the normalization constants $C^{\pm}_k = \ones^\top W^{-1}_{k\pm 1}\ones$ ensure that they take values in $[-1,1]$. 
In this way, it holds that
\begin{align}
C_k R_k(\theta) = C_k^+ R^+_k(\theta) + C_k^- R^-_k(\theta) \, , 
\end{align}
and thus, aside from normalization, the order of a configuration is computed by measuring separately the local order induced respectively on $(k-1)$ and $(k+1)$-simplices. 

Notice that, by neglecting the constants in passing from \cref{eq:OP_2} to \cref{eq:SOP}, we have an order parameter that has values in the interval $[-1,1]$. This allows us to meaningfully distinguish two different types of synchronized configurations. We call a configuration of phases \emph{phase synchronized} when its order is close to $1$ and \emph{anti-phase synchronized} when it is close to $-1$. Phase synchronization generalizes to simplicial complexes the situation where close oscillators have similar phases (\cref{fig:phaseantiphase}a), while in anti-phase synchronization close oscillators have opposite phases forming ``checkerboard'' patterns, resembling an antiferromagnetic Ising model (\cref{fig:phaseantiphase}b).

Notice also how, in this work, with ``phase synchronization'' and ``full synchronization'' we refer to the static properties of a configuration of phases $\theta$, with no information on how it evolves under the dynamics. The notion of a configuration that ``stays synchronized'' under the dynamics will be tackled with the concept of phase-locking in \cref{subsection:phase-locking}.

\section{Equilibrium analysis}\label{section:equilibrium_analysis}

We now study the equilibrium properties of the simplicial Kuramoto model, extending to the simplicial cases concepts and results known in the node case.
\begin{itemize}
    \item In \cref{subsection:phase-locking}, we extend the notion of phase-locking to simplicial complexes, we look at its geometric meaning and see how it reduces to standard node synchronization on manifold-like regions of the complex.
    \item In \cref{subsection:equilibria}, we develop the necessary framework to discuss the equilibrium properties of the simplicial Kuramoto model, define reachable equilibria (Def.~\ref{def:reachable_equilibria}) and relate their existence to the presence of simplicial phase-locked configurations.
    \item In \cref{subsection:necessary}, we derive two bounds on the coupling strength providing necessary conditions for the existence of equilibria. We define the critical coupling (Def.~\ref{def:critical_coupling}) and characterize it as the solution to a linear optimization problem.
    \item In \cref{subsection:sufficient}, we prove a simple lower bound on the coupling strength which gives a sufficient condition for the existence of reachable equilibria. 
\end{itemize}

\subsection{Simplicial phase-locking}\label{subsection:phase-locking}

It directly follows from the Hodge decomposition of the simplicial Kuramoto model (see \cref{eq:simplicial_kuramoto_decomposed}) that studying its equilibrium properties is equivalent to separately studying the equilibria of the curl-free and divergence-free components. 
If these two converge to equilibrium, then the complete system will converge to a configuration evolving with constant harmonic angular speed, given by $\omega_{\mathrm{H}}$ ($\Dot{\theta} = \Dot{\theta}_{\mathrm{H}} = \omega_{\mathrm{H}}$).

\begin{definition}[Simplicial phase-locking]\label{def:phase-locking}
We say that the $k$-th order simplicial Kuramoto dynamics is \emph{phase-locked from above} if $\Dot{\theta}_{\mathrm{df}} = 0$ and \emph{phase-locked from below} when $\Dot{\theta}_{\mathrm{cf}} = 0$. 
\end{definition}
To motivate this definition, we consider the projections of the dynamics on lower and upper order simplices, defined as 
\begin{align}\label{eq:up_projection_hodge}   
\theta^{(+)} &\defeq D^k\theta = D^k\theta_{\mathrm{df}}\\
\theta^{(-)} &\defeq B^k\theta = B^k\theta_{\mathrm{cf}}\, . \label{eq:down_projection_hodge}   
\end{align}
We can think of $\theta^{(+)}$ and $\theta^{(-)}$ as the discrete versions of, respectively, the curl and divergence of the vector field $\theta$ and, we prove, they can equivalently capture simplicial phase-locking.
\begin{proposition}[Phase-locking equivalence]\label{prop:phase-locking-equivalence}
A configuration of phases $\theta$ is phase-locked from above (from below) if and only if its projection onto higher (lower) dimensional simplices is in equilibrium.
\begin{align}
\Dot{\theta}_{\mathrm{df}} = 0 \iff \Dot{\theta}^{(+)}=0,\quad \Dot{\theta}_{\mathrm{cf}} \iff \Dot{\theta}^{(-)}=0 = 0\, .
\end{align}
\end{proposition}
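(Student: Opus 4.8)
The plan is to reduce both equivalences to one linear-algebraic fact: the divergence-free summand $\Ima B^{k+1}$ meets the cocycle space $\ker D^k$ only at $0$, and dually $\Ima D^{k-1}$ meets the cycle space $\ker B^k$ only at $0$. Since $D^k=(B^{k+1})^*$ with respect to the weighted inner products, $\Ima B^{k+1}=\Ima (D^k)^*=(\ker D^k)^{\perp}$, so $\Ima B^{k+1}\cap\ker D^k=\{0\}$; the same argument with $B^k=(D^{k-1})^*$ gives $\Ima D^{k-1}\cap\ker B^k=\{0\}$. (Equivalently, this follows from the Hodge decomposition \cref{eq:hodge_decomposition} together with $\ker D^k=\Ima D^{k-1}\oplus\ker L^k$ and $\ker B^k=\Ima B^{k+1}\oplus\ker L^k$.)

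Next I would record that the projections see only one Hodge summand. Because $\theta_{\mathrm H}\in\ker L^k\subseteq\ker D^k$ and $\theta_{\mathrm{cf}}\in\Ima D^{k-1}\subseteq\ker D^k$, we get $\theta^{(+)}=D^k\theta=D^k\theta_{\mathrm{df}}$, and symmetrically $\theta^{(-)}=B^k\theta=B^k\theta_{\mathrm{cf}}$ (these identities are already stated in \cref{eq:up_projection_hodge,eq:down_projection_hodge}). Differentiating and using the decomposed dynamics \cref{eq:simplicial_kuramoto_decomposed}, the same membership facts — now applied to the velocities, noting $\dot\theta_{\mathrm H}=\omega_{\mathrm H}\in\ker L^k$, $\dot\theta_{\mathrm{cf}}\in\Ima D^{k-1}$ and $\dot\theta_{\mathrm{df}}\in\Ima B^{k+1}$ since the right-hand sides of \cref{eq:simplicial_kuramoto_decomposed} lie in the respective image subspaces — yield $\dot\theta^{(+)}=D^k\dot\theta_{\mathrm{df}}$ and $\dot\theta^{(-)}=B^k\dot\theta_{\mathrm{cf}}$.

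From here the ``above'' equivalence is immediate in one direction ($\dot\theta_{\mathrm{df}}=0\Rightarrow\dot\theta^{(+)}=D^k\dot\theta_{\mathrm{df}}=0$) and follows in the other from the key fact: if $\dot\theta^{(+)}=D^k\dot\theta_{\mathrm{df}}=0$ then $\dot\theta_{\mathrm{df}}\in\ker D^k$, while the first line of \cref{eq:simplicial_kuramoto_decomposed} forces $\dot\theta_{\mathrm{df}}\in\Ima B^{k+1}$, hence $\dot\theta_{\mathrm{df}}\in\Ima B^{k+1}\cap\ker D^k=\{0\}$. The ``below'' equivalence is proved identically with $B^k,\ D^{k-1},\ \Ima D^{k-1},\ \ker B^k$ replacing $D^k,\ B^{k+1},\ \Ima B^{k+1},\ \ker D^k$.

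There is essentially no analytic obstacle; the only care needed is the bookkeeping of which Hodge summand each term belongs to — in particular checking that the nonlinear interaction terms $B^{k+1}\sin(D^k\theta_{\mathrm{df}})$ and $D^{k-1}\sin(B^k\theta_{\mathrm{cf}})$ stay inside $\Ima B^{k+1}$ and $\Ima D^{k-1}$ respectively, which is immediate because they are literally $B^{k+1}$ (resp. $D^{k-1}$) applied to a cochain, and that $\omega_{\mathrm{df}},\omega_{\mathrm{cf}}$ do so by the very definition of the decomposition.
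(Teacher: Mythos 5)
Your proof is correct and is essentially the paper's argument: the forward direction is the identity $\Dot{\theta}^{(+)}=D^k\Dot{\theta}_{\mathrm{df}}$, and the converse rests on the injectivity of $D^k$ restricted to $\Ima B^{k+1}$, which you phrase as $\Ima B^{k+1}\cap\ker D^k=\{0\}$ while the paper equivalently applies the orthogonal projector $(D^k)^\dagger D^k$ onto $\Ima (D^k)^*=\Ima B^{k+1}$. The only cosmetic difference is that you justify $\Dot{\theta}_{\mathrm{df}}\in\Ima B^{k+1}$ by inspecting the decomposed dynamics, whereas it already follows from differentiating the time-independent linear Hodge projection of $\theta$.
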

\begin{proof}
If $\Dot{\theta}_{\mathrm{df}}=0$, then
\begin{align*}
\Dot{\theta}^{(+)} = D^k\Dot{\theta}_{\mathrm{df}} = 0\, ,
\end{align*}
because of \cref{eq:up_projection_hodge}.
If instead $\Dot{\theta}^{(+)}=0$, then
\begin{align*}
 \Dot{\theta}_{\mathrm{df}} = (D^k)^\dagger D^k\Dot{\theta} = (D^k)^\dagger\Dot{\theta}^{(+)} = 0\, ,
\end{align*}
where $(D^k)^\dagger$ is the weighted Moore-Penrose pseudoinverse~\cite{weighted_pseudoinverse} and $(D^k)^\dagger D^k$ is the orthogonal projection operator onto $\Ima (D^k)^* = \Ima B^{k+1}$.
\end{proof}
This result allows us to include in \cref{def:phase-locking} the standard concept of phase-locking for the node Kuramoto. In fact, the node Kuramoto on a heterogeneous network is classically said to be phase-locked when the phase difference of connected oscillators stays constant in time. 
This means that $\Dot{\theta}^{(+)}_e = (D^0\Dot{\theta})_e = \Dot{\theta}_{i} - \Dot{\theta}_j = 0$ for every edge $e = (i,j)$ in the network. According to \cref{prop:phase-locking-equivalence}, the divergence-free component of the dynamics is in equilibrium and the system is, by \Cref{def:phase-locking}, \emph{simplicially} phase-locked from above. 
If the network is connected, moreover, $\Dot{\theta}$ must be harmonic, i.e. $\Dot{\theta}\propto \ones$, a situation which is usually named \emph{frequency-synchronized} as the frequencies of all oscillators coincide.

While phase-locking in the standard Kuramoto model means that all oscillators evolve with the same angular frequency, it is not clear how this extends to the simplicial case. In the case of phase-locking from below, we see that
\begin{align*}
\Dot{\theta}^{(-)} = 0 \iff \dfrac{d}{dt}(B^k\theta)=B^k\Dot{\theta} = 0\, ,
\end{align*}
or equivalently that $\Dot{\theta}(t)$, the cochain containing the angular frequencies, is a weighted cycle (see \cref{cycle-def}). 
This, in turn, means that for each $(k-1)$-simplex $\alpha$
\begin{align}\label{eq:flow_conservation}   
\sum_{i>\alpha} \xi_{\alpha,i}\Dot{\theta}_i = 0\, ,
\end{align}
where $\xi_{\alpha,i}\in\sset{-1,1}$ is the relative orientation of $k$-simplex $i$ with respect to its subface $\alpha$. 
Interestingly, \cref{eq:flow_conservation} corresponds to a flow conservation condition. Indeed, if we consider graphs with oscillating edges, at each node, the total phase flow of the incoming edges is the same as the total flow of the outgoing ones. 
In general, we can apply \cref{eq:flow_conservation} to understand phase-locking from below in some particular situations.
\begin{proposition}[Phase-locking on manifold-like regions]
If $\theta$ is phase-locked from below, i.e. $\Dot{\theta}^{(-)}=0$, 
\begin{enumerate}
    \item the connected manifold-like regions (see \cref{subsection:geometry}) of the complex (with respect to order $k$) evolve with the same angular frequency and are thus frequency-synchronized;
    \item oscillators with free subfaces (see \cref{subsection:geometry}) are frozen, i.e. $\Dot{\theta}_i=0$.
\end{enumerate}
\end{proposition}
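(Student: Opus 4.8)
The plan is to use the flow conservation condition \eqref{eq:flow_conservation} as the workhorse, since phase-locking from below has already been shown (via \cref{prop:phase-locking-equivalence}) to be equivalent to $B^k\dot\theta = 0$, i.e. $\dot\theta$ being a weighted cycle. Both items then become statements about what $B^k\dot\theta=0$ forces on the angular frequency cochain $\dot\theta$, read off locally at each $(k-1)$-simplex.

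For item 2, I would argue directly. Let $\sigma_i$ be a $k$-simplex possessing a free subface $\alpha$, meaning $\alpha$ is incident to $\sigma_i$ and to no other $k$-simplex. Writing out \cref{eq:flow_conservation} at $\alpha$, the sum over $k$-simplices containing $\alpha$ has exactly one term, so $\xi_{\alpha,i}\dot\theta_i = 0$, and since $\xi_{\alpha,i}\in\{-1,1\}$ this gives $\dot\theta_i = 0$. Hence the oscillator is frozen. This is the easy part.

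For item 1, the key is that on a connected manifold-like region every internal $(k-1)$-simplex $\alpha$ is incident to exactly two $k$-simplices $\sigma_i,\sigma_j$, one coherently and one incoherently oriented, so $\xi_{\alpha,i} = -\xi_{\alpha,j}$. Then \cref{eq:flow_conservation} at $\alpha$ reads $\xi_{\alpha,i}\dot\theta_i + \xi_{\alpha,j}\dot\theta_j = 0$, i.e. $\dot\theta_i = \dot\theta_j$. So any two $k$-simplices sharing a manifold-like subface have equal angular frequency; by the usual connectedness/transitivity argument (chaining along a path of manifold-like subfaces within the region), all $k$-simplices in a connected manifold-like component share a single common angular frequency, which is exactly frequency-synchronization. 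One can phrase this cleanly by invoking \cref{thm:simplicial-kuramoto-manifold}: restricted to such a region the dynamics is a node Kuramoto on the connected dual graph, and $B^k\dot\theta=0$ translates to $\widetilde D^0\dot\theta = 0$, i.e. $\dot\theta$ harmonic on a connected graph, hence constant.

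I expect the only real subtlety — the "main obstacle," though it is minor — to be bookkeeping about the boundary of a manifold-like region: a region's boundary $(k-1)$-simplices may themselves fail to be manifold-like (being incident to only one $k$-simplex of the region, or shared with another region), so the chaining argument for item 1 must be carried out using only the \emph{interior} manifold-like subfaces that connect $k$-simplices within the same region. Once one is careful that "connected manifold-like region" means connectivity through such interior subfaces, the argument goes through verbatim, and the equality $\dot\theta_i=\dot\theta_j$ propagates across the whole component. No heavy computation is needed; the proof is essentially a local case analysis of \cref{eq:flow_conservation} plus a connectivity argument.
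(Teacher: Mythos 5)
Your proposal is correct and follows essentially the same route as the paper: both reduce everything to the local flow-conservation condition \eqref{eq:flow_conservation}, reading off $\dot\theta_i=0$ at a free subface and $\dot\theta_i=\dot\theta_j$ at a manifold-like one. The only difference is that you spell out the connectedness/transitivity chaining (and the dual-graph reformulation) for item 1, which the paper leaves implicit; that is a welcome clarification rather than a divergence.
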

\begin{proof}
$1.$ At a manifold-like subface we have only two incident simplices, one coherently oriented and one incoherent, which we name respectively $i$ and $j$. 
Condition \cref{eq:flow_conservation} gives us $\Dot{\theta}_i - \Dot{\theta}_j  = 0 \iff \Dot{\theta}_i = \Dot{\theta}_j$, so the incident oscillators are frequency-synchronized.
$2.$ If oscillator $i$ has a free subface then, by definition, that particular subface will be incident only to oscillator $i$. Phase-locking at that subface implies that $\Dot{\theta}_i = 0$, hence the thesis.
\end{proof}

\begin{figure*}[htp]
    \centering \includegraphics[width=\linewidth]{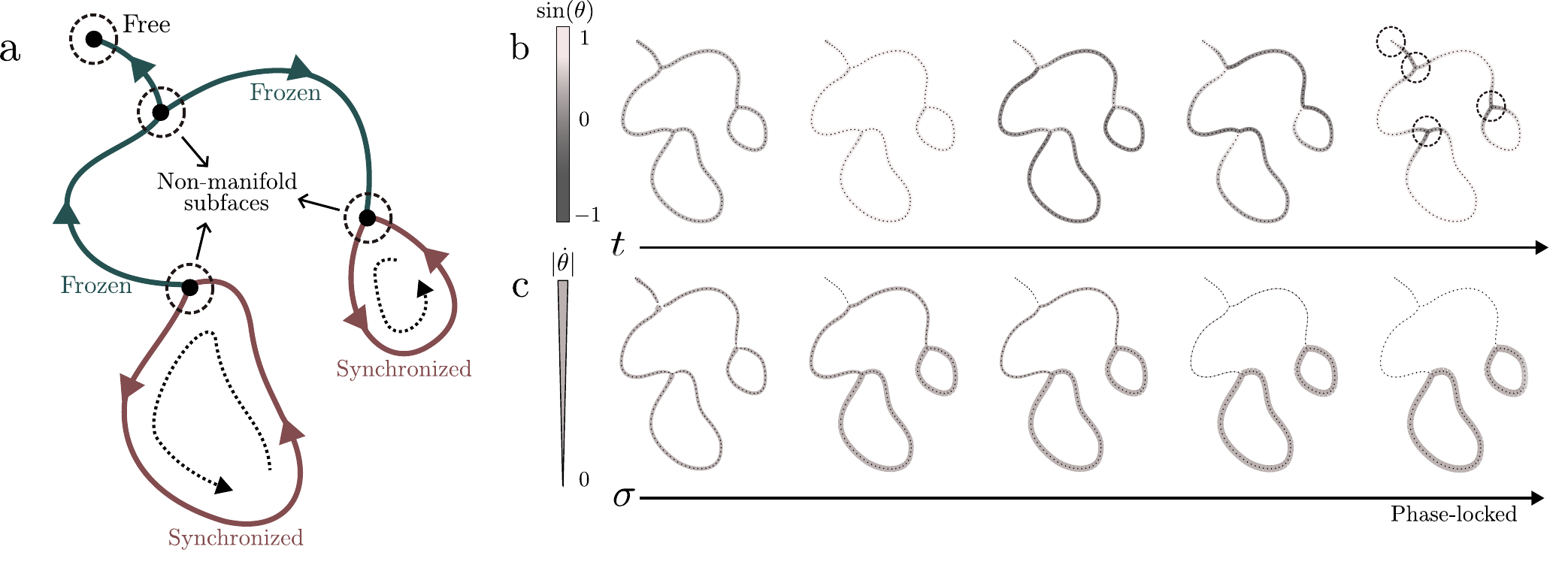}
    \caption{Simplicial Kuramoto dynamics on a simple graph with two holes (drawn here as a continuous space), where the edges are identical oscillators ($\omega = \ones$) with starting phase $\theta(0)=0\ones$. {\bf a.} The panel shows a diagram of the graph, highlighting the non-manifold points responsible for the non-triviality of the dynamics. The different branches of the graph are colored according to their frequency in a phase-locked (\cref{def:phase-locking}) state. In particular, the oscillation is frequency-synchronized on the two holes and exhibits traveling waves, while it is frozen ($\Dot{\theta}=0$) on the branch connecting them. {\bf b.} 
    A few snapshots of the dynamics on the graph are shown, with edges colored according to the sine of their phases. 
    In the last frame, the effect of the non-manifold points is evident.  
    {\bf c.} The dynamics is run for different values of the coupling strength, and the absolute value of the frequency ($|\Dot{\theta}|$) at the final integration time is shown with the edges' widths. 
    The last frame shows how the system reaches the same phase-locked configuration predicted with \cref{eq:flow_conservation} and depicted in panel {\bf a}. } 
    \label{fig:phase_lock}
\end{figure*}

A simple application of this result is shown in \cref{fig:phase_lock}a where the behavior of a phase-locked configuration can be inferred a priori by looking at the geometry of the graph. 
It can also be empirically seen that frequency-synchronized manifold regions exhibit phenomena akin to traveling waves localized around the $k$-holes of the complex when the homology is not trivial. 

\subsection{Existence of equilibria}\label{subsection:equilibria}

To study the equilibrium of the simplicial Kuramoto model, it is convenient to work with the $\theta^{(\pm)}$ defined in \cref{eq:up_projection_hodge,eq:down_projection_hodge} as \emph{projections} of the phases onto upper and lower simplices.
Their evolution equations~\cite{millan2020explosive} are readily obtained by multiplying \cref{eq:simplicial_kuramoto} by $D^k$ and $B^k$ to get
\begin{align}\label{eq:projection_dynamics}
\begin{split}
    \Dot{\theta}^{(+)} &= \omega^{(+)} - \sigma^\uparrow L^{k+1}_{\downarrow} \sin(\theta^{(+)})\\
    \Dot{\theta}^{(-)} &= \omega^{(-)} - \sigma^\downarrow L^{k-1}_{\uparrow} \sin(\theta^{(-)})\, , 
\end{split}
\end{align}
where $L^{k+1}_{\downarrow},L^{k-1}_{\uparrow}$ are the half Laplacian matrices from \cref{hodge_laplacian}, and we defined the projected natural frequencies as
\begin{align}
\omega^{(+)} \defeq D^k\omega,\qquad \omega^{(-)} \defeq B^k\omega\, . 
\end{align}

We then have two independent conditions for the existence of equilibrium solutions for each projection
\begin{align}\label{eq:equilibrium_conditions_1}
    \begin{cases}
    \Dot{\theta}^{(+)} = 0 \iff 
    L^{k+1}_{\downarrow}\sin(\theta^{(+)}) = \frac{\omega^{(+)}}{\sigma^\uparrow}\\
    \Dot{\theta}^{(-)} = 0\iff
    L^{k-1}_{\uparrow}\sin(\theta^{(-)}) = \frac{\omega^{(-)}}{\sigma^\downarrow}\, .
    \end{cases}
\end{align}
Since $\omega^{(+)}\in \Ima D^{k} = \Ima L^{k+1}_{\downarrow}$ and $\omega^{{(-)}}\in \Ima B^{k} = \Ima L^{k-1}_{\uparrow}$, the equilibrium equations can be solved using the pseudoinverse
\begin{align}\label{eq:equilibrium_conditions_2}
\begin{cases}
\sin(\theta^{(+)}) = (L^{k+1}_{\downarrow})^\dagger\frac{\omega^{(+)}}{\sigma^\uparrow} + x^{(+)}\\
\sin(\theta^{(-)}) = (L^{k+1}_{\uparrow})^\dagger \frac{\omega^{(-)}}{\sigma^\downarrow} + x^{(-)}
\end{cases}\, ,
\end{align}
for any weighted $(k+1)$-cycle $x^{(+)} \in \ker B^{k+1}$ and $(k-1)$-cocycle $x^{(-)} \in \ker D^{k-1}$. 
Moreover, by applying well-known properties of the Moore-Penrose pseudoinverse, we can simplify these expressions.
\begin{Lemma}
We have the following equalities
\begin{align}
(L^{k+1}_\downarrow)^\dagger\omega^{(+)} = (B^{k+1})^\dagger\omega,\ \ (L^{k-1}_\uparrow)^\dagger\omega^{(-)} = (D^{k-1})^\dagger\omega\, .
\end{align}
\end{Lemma}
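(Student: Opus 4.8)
The plan is to reduce both identities to a single standard property of the Moore--Penrose pseudoinverse, namely that $(A^{*}A)^{\dagger}A^{*} = A^{\dagger}$ for any linear map $A$ between finite-dimensional inner product spaces. First I would rewrite the half-Laplacians on the left-hand sides in terms of boundary and coboundary operators alone. From \cref{hodge_laplacian} we have $L^{k+1}_{\downarrow} = D^{k}B^{k+1}$ and $L^{k-1}_{\uparrow} = B^{k}D^{k-1}$, while the adjointness relation \cref{eq:boundary} gives $D^{k} = (B^{k+1})^{*}$ and $B^{k} = (D^{k-1})^{*}$, with adjoints taken with respect to the weighted inner products. Hence $L^{k+1}_{\downarrow} = (B^{k+1})^{*}B^{k+1}$ and $L^{k-1}_{\uparrow} = (D^{k-1})^{*}D^{k-1}$, and likewise $\omega^{(+)} = D^{k}\omega = (B^{k+1})^{*}\omega$ and $\omega^{(-)} = B^{k}\omega = (D^{k-1})^{*}\omega$.

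Then I would apply the pseudoinverse identity twice: with $A = B^{k+1}$ this yields $(L^{k+1}_{\downarrow})^{\dagger}\omega^{(+)} = \bigl((B^{k+1})^{*}B^{k+1}\bigr)^{\dagger}(B^{k+1})^{*}\omega = (B^{k+1})^{\dagger}\omega$, and with $A = D^{k-1}$ it yields $(L^{k-1}_{\uparrow})^{\dagger}\omega^{(-)} = \bigl((D^{k-1})^{*}D^{k-1}\bigr)^{\dagger}(D^{k-1})^{*}\omega = (D^{k-1})^{\dagger}\omega$, which are exactly the two claimed equalities.

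The only point needing care --- and what I would regard as the main (though minor) obstacle --- is that the pseudoinverses and adjoints here are the \emph{weighted} ones, built from the inner products $\inner{\cdot}{\cdot}_{w^{k}}$ rather than the Euclidean structure, so one must be sure the identity $(A^{*}A)^{\dagger}A^{*} = A^{\dagger}$ is invoked in the right generality. It does hold in any pair of finite-dimensional inner product spaces: writing $B \defeq (A^{*}A)^{\dagger}A^{*}$, one checks the four defining Penrose conditions for $B$, using that $\ker(A^{*}A) = \ker A$ (from $\inner{A^{*}Ax}{x} = \norm{Ax}^{2}$) and hence $\Ima(A^{*}A) = \Ima A^{*} = (\ker A)^{\perp}$, so that $(A^{*}A)^{\dagger}(A^{*}A)$ is the orthogonal projection onto $\Ima A^{*}$; alternatively one can cite the weighted-SVD characterisation of the weighted pseudoinverse in \cite{weighted_pseudoinverse}. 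One could also bypass the general statement and directly verify the four Penrose conditions for $(B^{k+1})^{\dagger}$ relative to $L^{k+1}_{\downarrow}$ and for $(D^{k-1})^{\dagger}$ relative to $L^{k-1}_{\uparrow}$. Either way the argument is routine and uses no structure of the simplicial complex beyond the adjointness of $B^{k}$ and $D^{k-1}$; note in particular that the hypotheses $\omega^{(+)}\in\Ima D^{k}$ and $\omega^{(-)}\in\Ima B^{k}$ from \cref{eq:equilibrium_conditions_1} are not even needed for the identity itself.
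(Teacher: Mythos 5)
Your proof is correct and takes essentially the same route as the paper: both recognize $L^{k+1}_{\downarrow}=(B^{k+1})^{*}B^{k+1}$ and $\omega^{(+)}=(B^{k+1})^{*}\omega$ (and likewise for the $(-)$ case) and then invoke standard Moore--Penrose identities for the weighted pseudoinverse. The paper merely unpacks your identity $(A^{*}A)^{\dagger}A^{*}=A^{\dagger}$ into two steps, namely the reverse-order law $(D^{k}B^{k+1})^{\dagger}=(B^{k+1})^{\dagger}(D^{k})^{\dagger}$ (valid here because $B^{k+1}=(D^{k})^{*}$) followed by the observation that $(D^{k})^{\dagger}D^{k}$ projects onto $\Ima B^{k+1}$.
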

\begin{proof}
We have
\begin{align*}
(L^{k+1}_{\downarrow})^\dagger\omega^{(+)} &= (D^k B^{k+1})^\dagger D^k\omega 
=(B^{k+1})^\dagger (D^k)^\dagger D^k\omega \\
&= (D^{k})^{*\dagger} (D^k)^\dagger D^k\omega 
= (B^{k+1})^\dagger\omega,
\end{align*}
and, analogously, $(L^{k-1}_{\uparrow})^\dagger\omega^{(-)} = (D^{k-1})^\dagger\omega$.
\end{proof}

\begin{definition}[Natural potentials]\label{def:characteristic_vectors}
We call \emph{natural potentials} of order $k$ the quantities
\begin{align}\label{eq:natural_potentials}   
\beta^{(+)} = (B^{k+1})^\dagger\omega\in\R^{n_{k+1}},\  \beta^{(-)} = (D^{k-1})^\dagger\omega\in\R^{n_{k-1}}\, .
\end{align}
\end{definition}
The name \emph{potential} comes from the fact that, for $k=1$, $\beta^{(-)}$ is an assignment of potentials to the nodes such that for each edge the difference of potential between its end-points (i.e. the voltage) is equal to $\omega$. 
Moreover, it can be easily proven that they correspond to the higher and lower order signals that appear in the Hodge components of the natural frequency vector $\omega$ as
\begin{align*}
\omega = B^{k+1}\beta^{(+)} + \omega_{\mathrm{H}} + D^{k-1}\beta^{(-)}\, .
\end{align*}
The values of the natural potentials are expressed in terms of the weighted Moore-Penrose pseudoinverse, which in \cref{eq:natural_potentials} is computed with respect to the inner products on the cochain spaces $W^{-1}_{k}$ $k=1,\dots,K$ \cref{eq:weight}. 
To compute the natural potentials of a weighted simplicial complex, the weights have to be included correctly in the pseudoinverse. The explicit formula, written in terms of the standard unweighted pseudoinverse is the following (\cite[Remark~2]{weighted_pseudoinverse})
\begin{align}
\beta^{(+)} &= W^{\frac{1}{2}}_{k+1}(W^{-\frac{1}{2}}_{k}B^{k+1}W^{\frac{1}{2}}_{k+1})^\dagger W^{-\frac{1}{2}}_{k}\omega \\ 
\beta^{(-)} &= W^{\frac{1}{2}}_{k-1}(W^{-\frac{1}{2}}_{k}D^{k-1}W^{\frac{1}{2}}_{k-1})^\dagger W^{-\frac{1}{2}}_{k}\omega \, .
\end{align}

Using the definition of natural potentials, we can rewrite the equilibrium conditions of \cref{eq:equilibrium_conditions_2} as
\begin{align}\label{eq:equilibrium_conditions_2.5}
\begin{cases}
\sin(\theta^{(+)}) = \frac{\beta^{(+)}}{\sigma^\uparrow} + x^{(+)}\\
\sin(\theta^{(-)}) = \frac{\beta^{(-)}}{\sigma^\downarrow} + x^{(-)}
\end{cases}\, ,
\end{align}
where we see that a \emph{necessary} condition for the existence of a solution is
for the right-hand sides to be bounded in $[-1,1]$ or, equivalently, for $x^{(+)}\in \ker B^{k+1}$ and $x^{(-)}\in \ker D^{k-1}$ to satisfy the following condition of \emph{admissibility}.
\begin{definition}[Admissible cycles]\label{def:admissible_vectors}
We call a (weighted) cycle $x^{(+)}\in \ker B^{k+1}$ \emph{admissible} if
\begin{align}
\norm{\frac{\beta^{(+)}}{\sigma^\uparrow} +x^{(+)}}_\infty \leq 1\, .
\end{align}
We call a cocycle $x^{(-)}\in \ker D^{k-1}$
\emph{admissible} if
\begin{align}
\norm{\frac{\beta^{(-)}}{\sigma^\downarrow} +x^{(-)}}_\infty \leq 1\, .
\end{align}
With a slight abuse of notation, we call them both \emph{admissible cycles}, and we name their sets $\mathcal{A}^{(+)}$ and $\mathcal{A}^{(-)}$.
\end{definition}
\begin{proposition}[Necessary condition from admissible cycles]
A \emph{necessary} condition for the existence of equilibrium solutions of the $(\pm)$ dynamics is that $\mathcal{A}^{(\pm)}\neq \emptyset$.
\end{proposition}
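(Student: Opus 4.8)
The plan is to argue by contraposition, reading off everything from the equilibrium characterization already established in \cref{eq:equilibrium_conditions_2.5}. Suppose the $(+)$ dynamics has an equilibrium, i.e. a configuration $\theta$ with $\Dot{\theta}^{(+)}=0$. By \cref{eq:equilibrium_conditions_1} this is equivalent to $L^{k+1}_{\downarrow}\sin(\theta^{(+)}) = \omega^{(+)}/\sigma^\uparrow$, and since $\omega^{(+)}\in\Ima D^k = \Ima L^{k+1}_{\downarrow}$ the system is consistent, so its general solution is the pseudoinverse particular solution plus an arbitrary element of $\ker L^{k+1}_{\downarrow}$. The one small point to nail down is that $\ker L^{k+1}_{\downarrow} = \ker(D^k B^{k+1}) = \ker B^{k+1}$: if $D^k B^{k+1}y = 0$ then $B^{k+1}y$ lies both in $\ker D^k$ and in $\Ima B^{k+1} = \Ima(D^k)^*$, which are orthogonal complements, forcing $B^{k+1}y = 0$. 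Feeding this together with the Lemma, which gives $(L^{k+1}_{\downarrow})^\dagger\omega^{(+)} = (B^{k+1})^\dagger\omega = \beta^{(+)}$, recovers exactly $\sin(\theta^{(+)}) = \beta^{(+)}/\sigma^\uparrow + x^{(+)}$ for some weighted $(k+1)$-cycle $x^{(+)}\in\ker B^{k+1}$.

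The key step is then elementary: the left-hand side $\sin(\theta^{(+)})$ is the entrywise sine of a real vector, so each component lies in $[-1,1]$ and hence $\norm{\beta^{(+)}/\sigma^\uparrow + x^{(+)}}_\infty \le 1$. This is precisely the admissibility condition in \cref{def:admissible_vectors}, so $x^{(+)}\in\mathcal{A}^{(+)}$ and in particular $\mathcal{A}^{(+)}\neq\emptyset$; equivalently, if $\mathcal{A}^{(+)}=\emptyset$ no equilibrium of the $(+)$ dynamics can exist. I would then remark that the $(-)$ case is verbatim the same, replacing $B^{k+1}, D^k, \sigma^\uparrow, \beta^{(+)}$ by $D^{k-1}, B^k, \sigma^\downarrow, \beta^{(-)}$ and using $\ker L^{k-1}_{\uparrow} = \ker D^{k-1}$, which follows by the mirror-image orthogonality argument.

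There is no real obstacle here: the substantive content — the reduction of the equilibrium equations to \cref{eq:equilibrium_conditions_2.5} via the pseudoinverse and the Lemma — has already been done in the text preceding the proposition. The only line that deserves a word of care is the identification of the half-Laplacian kernels with $\ker B^{k+1}$ and $\ker D^{k-1}$, and that drops out immediately from the orthogonality of $\Ima B^{k+1}$ and $\ker D^k$ (resp. $\Ima D^{k-1}$ and $\ker B^k$) built into the Hodge decomposition; the remainder is just the boundedness $\abs{\sin}\le 1$.
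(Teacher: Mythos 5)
Your proof is correct and follows essentially the same route as the paper, which presents this proposition as an immediate consequence of the equilibrium characterization in \cref{eq:equilibrium_conditions_2.5}: any equilibrium forces $\sin(\theta^{(\pm)}) = \beta^{(\pm)}/\sigma^{\updownarrow} + x^{(\pm)}$ with $x^{(\pm)}$ in the appropriate kernel, and the entrywise bound $\abs{\sin}\le 1$ makes that $x^{(\pm)}$ admissible. Your extra remark identifying $\ker L^{k+1}_{\downarrow}$ with $\ker B^{k+1}$ (and $\ker L^{k-1}_{\uparrow}$ with $\ker D^{k-1}$) via the $\ker A^*A = \ker A$ orthogonality argument is a detail the paper leaves implicit, and it is correct.
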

Intuitively, each cochain $\beta^{(\pm)}/\sigma^\updownarrow$ should be \emph{close} to, respectively, the vector space of weighted $(k+1)$-cycles (for $(+)$) and $(k-1)$-cocycles (for $(-)$).

When both $x^{(+)}$ and $x^{(-)}$ are admissible, we can invert the sine function in \cref{eq:equilibrium_conditions_2} and get an explicit expression for the set of equilibrium configurations of the projections.
\begin{definition}[Equilibrium sets]\label{def:eq_sets}
We define the equilibrium sets of the projections as
\begin{gather}
\begin{split}
\mathcal{E}^{(+)} = \bigg\{&(-1)^{s_+}\odot\arcsin\left(\frac{\beta^{(+)}}{\sigma^\uparrow} +x^{(+)}\right) + s_+\pi:\\ &s_+\in\sset{0,1}^{n_{k+1}},\ x^{(+)}\in\mathcal{A}^{(+)}\bigg\}
\end{split}\, ,
\\
\begin{split}
\mathcal{E}^{(-)} = \bigg\{&(-1)^{s_-}\odot\arcsin\left(\frac{\beta^{(-)}}{\sigma^\downarrow} +x^{(-)}\right) + s_-\pi:\\ &s_-\in\sset{0,1}^{n_{k-1}},\ x^{(-)}\in\mathcal{A}^{(-)}\bigg\}\, ,
\end{split}
\end{gather}
where $\odot$ is the component-wise Hadamard product. 
\end{definition}
Any $\theta^{(\pm)}\in\mathcal{E}^{(\pm)}$ will thus be a fixed point of the projected dynamics from \cref{eq:projection_dynamics}. 

From a geometrical point of view, the equilibrium set $\mathcal{E}^{(\pm)}$ is a subset of $\R^{n_{k\pm 1}}$ and, for each given $s_\pm$, is a manifold of dimension given by $\mathrm{dim} \ker D^{k-1}$ for $(-)$ and $\mathrm{dim} \ker B^{k+1}$ for $(+)$. For example, for the projection on the nodes of edge dynamics, we have that $\mathrm{dim}\ker D^0$ is the number of connected components. If the simplicial complex is connected then $\mathcal{E}^{(-)}$ is a collection of curves in an $n_0$-dimensional space (see \cref{fig:bounds_compare}b). 
\begin{proposition}\label{prop:equilibrium_pm_dynamics}
$\Dot{\theta}^{(\pm)} = 0$ if $\theta^{(\pm)}\in\mathcal{E}^{(\pm)}$ modulo $2\pi$\, .
\end{proposition}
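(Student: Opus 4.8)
The plan is to verify directly that every element of $\mathcal{E}^{(\pm)}$ satisfies the equilibrium conditions \cref{eq:equilibrium_conditions_1} for the projected dynamics \cref{eq:projection_dynamics}; the statement is then essentially a repackaging of the derivation that led to \cref{eq:equilibrium_conditions_2.5}. I will carry out the $(+)$ case in detail, the $(-)$ case being identical under the exchange $B^{k+1}\leftrightarrow D^{k-1}$, $L^{k+1}_\downarrow\leftrightarrow L^{k-1}_\uparrow$, $\sigma^\uparrow\leftrightarrow\sigma^\downarrow$, $\beta^{(+)}\leftrightarrow\beta^{(-)}$.

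First I would unpack \cref{def:eq_sets}: given $\theta^{(+)}=(-1)^{s_+}\odot\arcsin\!\big(\beta^{(+)}/\sigma^\uparrow+x^{(+)}\big)+s_+\pi$ with $s_+\in\sset{0,1}^{n_{k+1}}$ and $x^{(+)}\in\mathcal{A}^{(+)}$, admissibility of $x^{(+)}$ ensures each component of $\beta^{(+)}/\sigma^\uparrow+x^{(+)}$ lies in $[-1,1]$, so the $\arcsin$ is well defined. Applying $\sin$ componentwise and using the elementary identity $\sin\!\big((-1)^{s}\arcsin y+s\pi\big)=y$ for $s\in\sset{0,1}$ and $y\in[-1,1]$ (which for $s=1$ is just $\sin(\pi-z)=\sin z$), one gets $\sin(\theta^{(+)})=\beta^{(+)}/\sigma^\uparrow+x^{(+)}$. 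Since $\sin$ is $2\pi$-periodic, adding any integer multiple of $2\pi$ to the components of $\theta^{(+)}$ leaves this identity — and hence the conclusion — unchanged, which is exactly the meaning of ``modulo $2\pi$'' in the statement.

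Next I would substitute into the right-hand side of \cref{eq:projection_dynamics}. With $L^{k+1}_\downarrow=D^kB^{k+1}$,
\[
\Dot{\theta}^{(+)}=\omega^{(+)}-\sigma^\uparrow L^{k+1}_\downarrow\Big(\tfrac{\beta^{(+)}}{\sigma^\uparrow}+x^{(+)}\Big)=\omega^{(+)}-L^{k+1}_\downarrow\beta^{(+)}-\sigma^\uparrow D^kB^{k+1}x^{(+)} .
\]
The last term vanishes because $x^{(+)}\in\mathcal{A}^{(+)}\subseteq\ker B^{k+1}$. For the middle term, $\beta^{(+)}=(B^{k+1})^\dagger\omega$ gives $L^{k+1}_\downarrow\beta^{(+)}=D^k\,\Pi\,\omega$ with $\Pi=B^{k+1}(B^{k+1})^\dagger$ the orthogonal projector onto $\Ima B^{k+1}$, so $\Pi\omega=\omega_{\mathrm{df}}$ by the orthogonality of the Hodge decomposition \cref{eq:hodge_decomposition}. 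Then $D^k\omega_{\mathrm{H}}=0$ since $\ker L^k=\ker B^k\cap\ker D^k$, and $D^k\omega_{\mathrm{cf}}=D^kD^{k-1}\beta^{(-)}=0$ by the fundamental theorem of topology, whence $D^k\Pi\omega=D^k\omega=\omega^{(+)}$. Therefore $L^{k+1}_\downarrow\beta^{(+)}=\omega^{(+)}$ and $\Dot{\theta}^{(+)}=0$. (Alternatively one may invoke the Lemma $(L^{k+1}_\downarrow)^\dagger\omega^{(+)}=(B^{k+1})^\dagger\omega=\beta^{(+)}$ together with $\omega^{(+)}\in\Ima D^k=\Ima L^{k+1}_\downarrow$, so that $L^{k+1}_\downarrow\beta^{(+)}=L^{k+1}_\downarrow(L^{k+1}_\downarrow)^\dagger\omega^{(+)}=\omega^{(+)}$.)

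I do not expect a genuine obstacle here; the two points that need care are (i) the componentwise trigonometric identity ensuring that the branch choice $s_\pm$ returns the prescribed value of $\sin(\theta^{(\pm)})$, and (ii) the fact that the pseudoinverse formula for $\beta^{(\pm)}$ really solves the relevant linear system, which rests on the image identities $\Ima D^k=\Ima L^{k+1}_\downarrow$ and $\Ima B^k=\Ima L^{k-1}_\uparrow$ recorded just before \cref{eq:equilibrium_conditions_2}. The $(-)$ case follows verbatim, using $L^{k-1}_\uparrow x^{(-)}=B^kD^{k-1}x^{(-)}=0$ for $x^{(-)}\in\ker D^{k-1}$ and $L^{k-1}_\uparrow\beta^{(-)}=B^kD^{k-1}(D^{k-1})^\dagger\omega=B^k\omega_{\mathrm{cf}}=B^k\omega=\omega^{(-)}$, where $B^k\omega_{\mathrm{df}}=B^kB^{k+1}\beta^{(+)}=0$ and $B^k\omega_{\mathrm{H}}=0$.
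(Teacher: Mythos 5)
Your proposal is correct and follows essentially the same route as the paper, which states this proposition without a separate proof because it is an immediate consequence of the derivation leading from \cref{eq:equilibrium_conditions_1} through the Lemma $(L^{k+1}_\downarrow)^\dagger\omega^{(+)}=\beta^{(+)}$ to \cref{eq:equilibrium_conditions_2.5} and \cref{def:eq_sets}. You simply run that derivation backwards by direct substitution, and the two points you single out for care (the branch identity for $\sin$ and the image identities $\Ima D^k=\Ima L^{k+1}_\downarrow$, $\Ima B^k=\Ima L^{k-1}_\uparrow$) are exactly the facts the paper relies on.
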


This condition, however, is only necessary and does not fully characterize the equilibrium configurations of the projected dynamics. To see why, notice that the dynamics for the $(-)$ component (the same holds for $(+)$) in \cref{eq:projection_dynamics} states that the time derivative of $\theta^{(-)}$ will be the vector $\omega^{(-)}-\sigma^\downarrow L^{k-1}_{\uparrow}\sin(\theta^{(-)})$, which always belongs to $\Ima B^k$. This, together with the initial configuration $\theta^{(-)}_0 = B^{k}\theta_0\in \Ima B^k$, means that the trajectories of the dynamics live in the subspace $\Ima B^k$. Only the equilibria of $\mathcal{E}^{(-)}$, also in $\Ima B^k$, are \emph{reachable} by the dynamics. 
\begin{definition}[Reachable equilibria]\label{def:reachable_equilibria}
We define the sets of reachable equilibria as
\begin{align}
  \mathcal{R}^{(-)} & = \mathcal{E}^{(-)}\cap  \Ima B^k \\
  \mathcal{R}^{(+)} & = \mathcal{E}^{(-)}\cap  \Ima D^k\, .
\end{align}
\end{definition}
We then have our final result.
\begin{proposition}[Equivalence phase-locking reachability]\label{prop:curldiv-projected_equivalence}
The curl-free (divergence-free) component admits equilibria if and only if $\theta^{(-)}$ (resp. $\theta^{(+)}$) admits reachable equilibria. 
\end{proposition}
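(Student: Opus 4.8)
The plan is to prove the ``below'' statement relating the curl-free component to $\theta^{(-)}$; the ``above'' statement follows by the symmetric substitutions $B^k\leftrightarrow D^k$, $D^{k-1}\leftrightarrow B^{k+1}$, $\sigma^\downarrow\leftrightarrow\sigma^\uparrow$, $\beta^{(-)}\leftrightarrow\beta^{(+)}$. All the needed facts are already available: \cref{prop:phase-locking-equivalence} (so $\dot\theta_{\mathrm{cf}}=0\iff\dot\theta^{(-)}=0$), the pseudoinverse Lemma above (giving $(L^{k-1}_\uparrow)^\dagger\omega^{(-)}=\beta^{(-)}$), \cref{prop:equilibrium_pm_dynamics} (characterising the equilibria of the projected dynamics), and the linear-algebraic facts that $\ker L^{k-1}_\uparrow=\ker D^{k-1}$, that $\omega^{(-)}=B^k\omega=B^k\omega_{\mathrm{cf}}\in\Ima B^k=\Ima L^{k-1}_\uparrow$ (because $B^k$ annihilates $\omega_{\mathrm{df}}$ and $\omega_{\mathrm{H}}$), and that $\Ima D^{k-1}\cap\ker B^k=\sset{0}$ since $B^k=(D^{k-1})^*$. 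Everything is understood modulo componentwise $2\pi$ shifts, as in \cref{prop:equilibrium_pm_dynamics}.

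($\Rightarrow$) Starting from an equilibrium $\theta_{\mathrm{cf}}$ of the curl-free dynamics, I would set $\theta^{(-)}\defeq B^k\theta_{\mathrm{cf}}\in\Ima B^k$; then $\dot\theta^{(-)}=B^k\dot\theta_{\mathrm{cf}}=0$, so $L^{k-1}_\uparrow\sin(\theta^{(-)})=\omega^{(-)}/\sigma^\downarrow$. Since $\omega^{(-)}\in\Ima L^{k-1}_\uparrow$, this forces $x^{(-)}\defeq\sin(\theta^{(-)})-(L^{k-1}_\uparrow)^\dagger\omega^{(-)}/\sigma^\downarrow$ to lie in $\ker L^{k-1}_\uparrow=\ker D^{k-1}$, and by the Lemma $\sin(\theta^{(-)})=\beta^{(-)}/\sigma^\downarrow+x^{(-)}$ with $\norm{\sin(\theta^{(-)})}_\infty\le1$, so $x^{(-)}\in\mathcal{A}^{(-)}$ is admissible. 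Inverting the sine componentwise with the suitable sign pattern $s_-$ then presents $\theta^{(-)}$ (mod $2\pi$) as an element of $\mathcal{E}^{(-)}$, hence of $\mathcal{E}^{(-)}\cap\Ima B^k=\mathcal{R}^{(-)}$, which is thus nonempty.

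($\Leftarrow$) Conversely, take $\theta^{(-)}_\star\in\mathcal{R}^{(-)}$. By \cref{prop:equilibrium_pm_dynamics} it is an equilibrium of the projected dynamics, i.e.\ $\omega^{(-)}=\sigma^\downarrow L^{k-1}_\uparrow\sin(\theta^{(-)}_\star)$. Since $\theta^{(-)}_\star\in\Ima B^k$, pick any preimage and replace it by its curl-free Hodge component $\theta_{\mathrm{cf}}\in\Ima D^{k-1}$ (which leaves $B^k$ of it unchanged), so $B^k\theta_{\mathrm{cf}}=\theta^{(-)}_\star$. It then remains to show that $v\defeq\dot\theta_{\mathrm{cf}}=\omega_{\mathrm{cf}}-\sigma^\downarrow D^{k-1}\sin(\theta^{(-)}_\star)$ vanishes. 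On one hand $v\in\Ima D^{k-1}$ by construction; on the other, $B^kv=\omega^{(-)}-\sigma^\downarrow L^{k-1}_\uparrow\sin(\theta^{(-)}_\star)=0$, so $v\in\ker B^k$. As $\Ima D^{k-1}\cap\ker B^k=\sset{0}$, we conclude $v=0$, i.e.\ $\theta_{\mathrm{cf}}$ is an equilibrium; one could equivalently invoke \cref{prop:phase-locking-equivalence} here.

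I expect the $(\Leftarrow)$ pull-back to be the only delicate point: one has to manufacture an equilibrium of the genuinely nonlinear curl-free equation from data about its linear projection, and this is exactly where \emph{reachability} is indispensable. A generic point of $\mathcal{E}^{(-)}$ need not lie in $\Ima B^k$, in which case no curl-free preimage exists; and even with a preimage in hand, one only knows a priori that the residual $v$ is killed by $B^k$ (i.e.\ harmonic-plus-curl-free), so it is the transversality $\Ima D^{k-1}\cap\ker B^k=\sset{0}$ that upgrades $B^kv=0$ to $v=0$. For the $(+)$ version the matching ingredients are $\ker L^{k+1}_\downarrow=\ker B^{k+1}$, $\omega^{(+)}=D^k\omega_{\mathrm{df}}\in\Ima D^k$, and $\Ima B^{k+1}\cap\ker D^k=\sset{0}$.
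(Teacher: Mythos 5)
Your proof is correct. The paper actually states this proposition without proof --- the only justification offered is the preceding remark that the trajectories of the projected dynamics are confined to $\Ima B^k$ (resp.\ $\Ima D^k$) --- so your argument supplies the missing details rather than duplicating an existing one. The forward direction is routine and matches the paper's setup exactly (apply $B^k$ to the curl-free equilibrium equation, identify the resulting $x^{(-)}\in\ker D^{k-1}$ as admissible, invert the sine). The substantive content is your backward direction, and you have correctly isolated the two points that make it work: reachability guarantees a curl-free preimage $\theta_{\mathrm{cf}}\in\Ima D^{k-1}$ with $B^k\theta_{\mathrm{cf}}=\theta^{(-)}_\star$, and the transversality $\Ima D^{k-1}\cap\ker B^k=\sset{0}$ (which holds precisely because $B^k=(D^{k-1})^*$, so $\ker B^k=(\Ima D^{k-1})^\perp$) upgrades $B^k\dot\theta_{\mathrm{cf}}=0$ to $\dot\theta_{\mathrm{cf}}=0$. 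The only caveat, which you flag and which is inherited from the paper's own \cref{prop:equilibrium_pm_dynamics}, is that membership in $\mathcal{E}^{(\pm)}$ is only defined up to componentwise $2\pi$ shifts, so the intersection defining $\mathcal{R}^{(\pm)}$ should be read modulo such shifts; this is a defect of the paper's definitions, not of your argument.
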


The framework we developed in this section can be fruitfully exploited to independently discuss the existence of equilibrium configurations of the divergence-free and of the curl-free components of the simplicial Kuramoto dynamics (\cref{eq:simplicial_kuramoto_decomposed}). 
In particular, \cref{prop:curldiv-projected_equivalence} tells us that such configurations will exist if and only if there are reachable equilibria for the projections. 
These, in turn, are subsets of the larger sets of fixed points (\cref{def:eq_sets}), whose explicit expression is known and whose non-emptiness can thus be controlled more easily, giving us necessary conditions for equilibrium. 

\subsection{Necessary conditions for phase-locking}\label{subsection:necessary}

In this section, we investigate the relation between the equilibrium properties of the simplicial Kuramoto model and the value of the coupling strength. It is natural to think that having a stronger interaction would make it easier for the system to reach a synchronized configuration as the intrinsic differences among the oscillators, encoded by their natural frequencies, become secondary. This intuition is extensively confirmed by numerous results proved about the node Kuramoto (see \cite{critical_coupling,Stability_kuramoto_model,synch_survey}), some of which we extend to the simplicial case. We will thus derive bounds on the coupling strength, which gives us necessary and sufficient conditions for the existence of reachable equilibria, i.e. for phase-locking from below and from above. 
Note that all the results below refer to the \textit{existence} of phase-locked configurations and provide no information about whether the dynamics will actually converge to them.

Let us consider a simplicial complex whose $m$-simplices have weights $w^m_1,\dots,w^m_{n_{m}}$ for any order $m$, and focus on the $k$-th order simplicial Kuramoto dynamics.
The easiest conditions to derive are those that ensure that there are no admissible cycles $\mathcal{A}^{(\pm)}=\emptyset$. If it holds then the equilibrium sets are empty $\mathcal{E}^{(\pm)}=\emptyset$ (\cref{def:eq_sets}) and, by inclusion, the reachable sets are as well $\mathcal{R}^{(\pm)}=\emptyset$ i.e. there are no reachable equilibria/phase-locked configurations.
\begin{proposition}[Sufficient condition for no phase-locking]\label{prop:necessary_condition_2_ball}
If 
\begin{align}
\sigma^\updownarrow < \sigma^{(\pm)}_s \defeq \frac{1}{\sqrt{n_{k\pm 1}^w}}\norm{\beta^{(\pm)}}_{w^{k\pm 1}}\, ,
\end{align}
where
\begin{align}
n_{k\pm 1}^w \defeq \sum_{i=1}^{n_{k\pm 1}} \frac{1}{w^{k\pm 1}_i}\, ,
\end{align}
then $\mathcal{E}^{(\pm)}=\emptyset$ and the $(\pm)$ projection admits no equilibria.
\end{proposition}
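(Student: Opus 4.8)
The plan is to show that the stated bound on $\sigma^\updownarrow$ forces the admissibility set $\mathcal{A}^{(\pm)}$ to be empty, after which $\mathcal{E}^{(\pm)}=\emptyset$ follows immediately from \cref{def:eq_sets}, since every element of $\mathcal{E}^{(\pm)}$ is built from a choice of admissible cycle. Recall that $x^{(+)}\in\ker B^{k+1}$ is admissible precisely when $\norm{\beta^{(+)}/\sigma^\uparrow + x^{(+)}}_\infty\le 1$, and similarly for $x^{(-)}$. So I must prove: if $\sigma^\uparrow < \sigma^{(+)}_s$ then \emph{no} weighted cycle $x^{(+)}$ satisfies that $\infty$-norm bound (and the analogous statement for $(-)$). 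I will argue the contrapositive: if some admissible $x^{(\pm)}$ exists, then $\sigma^\updownarrow \ge \sigma^{(\pm)}_s$.

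The key step is a norm comparison combined with orthogonality. Suppose $x^{(+)}\in\ker B^{k+1} = \ker (D^k)^*$ is admissible, so $\norm{\beta^{(+)}/\sigma^\uparrow + x^{(+)}}_\infty\le 1$. Since the weighted $2$-norm of any $y\in\R^{n_{k+1}}$ satisfies $\norm{y}_{w^{k+1}} = \sqrt{\sum_i y_i^2/w^{k+1}_i} \le \norm{y}_\infty\sqrt{\sum_i 1/w^{k+1}_i} = \norm{y}_\infty\sqrt{n^w_{k+1}}$, we get
\begin{align*}
\norm{\frac{\beta^{(+)}}{\sigma^\uparrow} + x^{(+)}}_{w^{k+1}} \le \sqrt{n^w_{k+1}}\, .
\end{align*}
Now the crucial observation: $\beta^{(+)} = (B^{k+1})^\dagger\omega \in \Ima (B^{k+1})^* = \Ima D^k = (\ker B^{k+1})^\perp$ with respect to the weighted inner product $\inner{\cdot}{\cdot}_{w^{k+1}}$, because the range of a weighted pseudoinverse is the orthogonal complement of the kernel of the operator. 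Hence $\beta^{(+)}$ and $x^{(+)}$ are orthogonal, so by the Pythagorean theorem
\begin{align*}
\norm{\frac{\beta^{(+)}}{\sigma^\uparrow} + x^{(+)}}_{w^{k+1}}^2 = \frac{1}{(\sigma^\uparrow)^2}\norm{\beta^{(+)}}_{w^{k+1}}^2 + \norm{x^{(+)}}_{w^{k+1}}^2 \ge \frac{1}{(\sigma^\uparrow)^2}\norm{\beta^{(+)}}_{w^{k+1}}^2\, .
\end{align*}
Combining the two displays gives $\norm{\beta^{(+)}}_{w^{k+1}}^2/(\sigma^\uparrow)^2 \le n^w_{k+1}$, i.e. $\sigma^\uparrow \ge \norm{\beta^{(+)}}_{w^{k+1}}/\sqrt{n^w_{k+1}} = \sigma^{(+)}_s$. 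Contrapositively, $\sigma^\uparrow < \sigma^{(+)}_s$ forces $\mathcal{A}^{(+)}=\emptyset$, hence $\mathcal{E}^{(+)}=\emptyset$ and $\mathcal{R}^{(+)}=\emptyset$ by inclusion, so the $(+)$ projection admits no equilibria. The $(-)$ case is verbatim the same with $B^{k+1}\to D^{k-1}$, $D^k\to B^k$, using $\beta^{(-)} = (D^{k-1})^\dagger\omega \in (\ker D^{k-1})^\perp$ and that the $(-)$ equilibrium dynamics lives in $\Ima B^k$; the lemma preceding \cref{def:characteristic_vectors} guarantees $\beta^{(-)}$ has this form.

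The only point requiring care — the "main obstacle," though it is more of a bookkeeping subtlety than a genuine difficulty — is making sure the orthogonality and the norm comparison are taken consistently with respect to the \emph{weighted} inner product $\inner{\cdot}{\cdot}_{w^{k\pm 1}}$ on $C^{k\pm 1}(\mathcal{X})$, since $\norm{\cdot}_\infty$ is not the weighted Euclidean norm and the pseudoinverse in \cref{def:characteristic_vectors} is the weighted Moore--Penrose pseudoinverse. The identity $\Ima (B^{k+1})^\dagger = (\ker B^{k+1})^\perp$ (orthogonal complement in the weighted metric) is the standard characterization of pseudoinverse range and is exactly what pins $\beta^{(+)}$ into the orthogonal complement of the space where the admissible cycles $x^{(+)}$ live; once that is in place the Pythagorean step and the elementwise bound $\norm{y}_{w^{k+1}}\le\sqrt{n^w_{k+1}}\,\norm{y}_\infty$ are immediate. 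I would state these two facts explicitly at the start of the proof and then assemble the two displays above.
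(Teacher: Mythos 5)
Your proof is correct and follows essentially the same route as the paper's: the comparison $\norm{y}_{w^{k\pm 1}}\le\sqrt{n^w_{k\pm 1}}\,\norm{y}_\infty$ combined with the weighted orthogonality of $\beta^{(\pm)}$ and $x^{(\pm)}$ (via $\Ima(B^{k+1})^\dagger=(\ker B^{k+1})^\perp$) and the Pythagorean step. The only cosmetic difference is that you phrase the argument contrapositively while the paper chains the inequalities directly; the content is identical.
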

\begin{proof}
First, see that we can bound the weighted $w^{k\pm 1}$ norm [\cref{k-norm}] with the $\infty$-norm:
\begin{align*}
\norm{v}_{w^{k\pm 1}} &= \sqrt{\sum_{i=1}^{n_{k\pm 1}} \frac{1}{w^{k\pm 1}_i} v_i^2} \leq \sqrt{n_{k\pm 1}^w \left(\max_i v_i^2\right)}\\ 
&\leq \sqrt{n_{k\pm 1}^w} \norm{v}_\infty.
\end{align*}
With this in mind, we can write
\begin{align*}
\norm{\frac{\beta^{(\pm)}}{\sigma^\updownarrow} + x^{(\pm)}}_\infty \geq \frac{1}{\sqrt{n_{k\pm 1}^w}}\norm{\frac{\beta^{(\pm)}}{\sigma^\updownarrow} + x^{(\pm)}}_{w^{k\pm 1}}.
\end{align*}
The two addenda in the norm are orthogonal with respect to the inner product $W^{-1}_{k\pm 1}$ because, in the $(-)$ case, $
x^{(-)} \in \ker D^{k-1}$ and 
\begin{align*}
\beta^{(-)} &\in \Ima (D^{k-1})^\dagger = \Ima (D^{k-1})^* = (\ker D^{k-1})^\perp\, ,
\end{align*}
thus
\begin{align*}
&\frac{1}{\sqrt{n_{k\pm 1}^w}}\norm{\frac{\beta^{(\pm)}}{\sigma^\updownarrow} + x^{(\pm)}}_{w^{k\pm 1}} \\
&= \frac{1}{\sqrt{n_{k\pm 1}^w}}\sqrt{\norm{\frac{\beta^{(\pm)}}{\sigma^\updownarrow}}^2_{w^{k\pm 1}}+\norm{x^{(\pm)}}^2_{w^{k\pm 1}}} \\ 
&\geq \frac{1}{\sqrt{n_{k\pm 1}^w}}\norm{\frac{\beta^{(\pm)}}{\sigma^\updownarrow}}_{w^{k\pm 1}}\, .
\end{align*}
If this last term is strictly greater than $1$ then there will be no admissible cycles and, therefore, no equilibria.
\end{proof}

The condition in \cref{prop:necessary_condition_2_ball} is easy to check and provides a way to tune the coupling constants to make the set of admissible cycles empty, and thus phase-locking (from above or from below) impossible. 
It is now natural to ask what is the minimum value of $\sigma$ such that there are admissible cycles, to get a sharper necessary condition for the existence of phase-locked configurations.
\begin{definition}[Critical coupling]\label{def:critical_coupling}
We call \emph{critical coupling} $\sigma_*^{(\pm)}$ for the $(\pm)$ projection the minimum value of $\sigma$ such that there are admissible cycles ($\mathcal{A}^{(\pm)}\neq\emptyset)$.
\end{definition}
It follows directly from the definition that $\sigma^{(\pm)}_s < \sigma^{(\pm)}_*$. To find its value, notice first that there can be admissible cycles $x^{(\pm)}$ (Def.~\ref{def:admissible_vectors}) if and only if
\begin{align}\label{eq:existence_of_admissible_cycles}    
\min_{x\in \ker D^{k-1}}\norm{\frac{\beta^{(\pm)}}{\sigma^\updownarrow} + x}_\infty \leq 1\, .
\end{align}
By manipulating this expression, we can get the exact value of the critical coupling as a solution to a linear optimization problem. 

\begin{theorem}[Value of the critical coupling]\label{theorem:critical_coupling}
The critical coupling $\sigma^{(\pm)}_*$ can be found in the solution of a linear optimization problem
\begin{align}\label{eq:critical_sigma_problem}
\sigma^{(+)}_* &= \min_{x\in \ker B^{k+1}}\norm{\beta^{(+)}+ x}_\infty\\
\sigma^{(-)}_* &= \min_{x\in \ker D^{k-1}}\norm{\beta^{(-)}+ x}_\infty\, ,
\end{align}
which corresponds to the $\infty$-distance of $\beta^{(\pm)}$ from the space of weighted $(k+1)$-cycles (resp. $(k-1)$-cocycles).
\end{theorem}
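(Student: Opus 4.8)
The plan is to exploit the scale invariance of the cycle (resp. cocycle) subspace together with the homogeneity of the $\infty$-norm, collapsing the defining condition for the existence of admissible cycles into a single scalar inequality on the coupling. First I would start from the characterization already recorded in \cref{eq:existence_of_admissible_cycles}: for a fixed coupling, $\mathcal{A}^{(+)}\neq\emptyset$ iff $\min_{x\in\ker B^{k+1}}\norm{\beta^{(+)}/\sigma^\uparrow+x}_\infty\le 1$, and $\mathcal{A}^{(-)}\neq\emptyset$ iff $\min_{x\in\ker D^{k-1}}\norm{\beta^{(-)}/\sigma^\downarrow+x}_\infty\le 1$. The inner minimum is genuinely attained, since $x\mapsto\norm{\beta^{(\pm)}/\sigma^\updownarrow+x}_\infty$ is continuous and coercive on the subspace by the reverse triangle inequality $\norm{\beta/\sigma+x}_\infty\ge\norm{x}_\infty-\norm{\beta/\sigma}_\infty$.

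Next I would use that $\ker B^{k+1}$ (resp. $\ker D^{k-1}$) is a linear subspace, hence invariant under the rescaling $x\mapsto\sigma^\updownarrow x$. Substituting $y=\sigma^\updownarrow x$ and pulling the positive scalar out of the norm,
\begin{align*}
\min_{x}\norm{\frac{\beta^{(\pm)}}{\sigma^\updownarrow}+x}_\infty=\frac{1}{\sigma^\updownarrow}\min_{y}\norm{\beta^{(\pm)}+y}_\infty=\frac{d^{(\pm)}}{\sigma^\updownarrow},
\end{align*}
where $d^{(\pm)}\defeq\min_{y}\norm{\beta^{(\pm)}+y}_\infty$ does not depend on $\sigma^\updownarrow$ (the minimum ranging over $\ker B^{k+1}$ for $(+)$ and $\ker D^{k-1}$ for $(-)$). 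Therefore $\mathcal{A}^{(\pm)}\neq\emptyset$ exactly when $\sigma^\updownarrow\ge d^{(\pm)}$, so the admissible couplings form the closed half-line $[d^{(\pm)},\infty)$, whose minimum is $\sigma_*^{(\pm)}$ by \cref{def:critical_coupling}; this gives $\sigma_*^{(\pm)}=d^{(\pm)}$, which is precisely \cref{eq:critical_sigma_problem}.

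Finally I would observe that computing $d^{(\pm)}$ is a linear program: introducing an epigraph variable $t$,
\begin{align*}
d^{(-)}=\min\sset{t:\ -t\ones\le\beta^{(-)}+x\le t\ones,\ D^{k-1}x=0},
\end{align*}
and analogously for $(+)$ with the equality $B^{k+1}x=0$; the objective and all constraints are linear in $(x,t)$, and the optimal value is by construction the $\infty$-distance of $\beta^{(\pm)}$ from the space of weighted $(k+1)$-cycles (resp. $(k-1)$-cocycles), yielding the stated geometric interpretation. I do not expect a serious obstacle here — the content is the homogeneity/scale-invariance reduction above. The only delicate points, which I would make sure to state explicitly, are the attainment of the inner minimum and the closedness of $[d^{(\pm)},\infty)$ (i.e. that at the critical value $\sigma^\updownarrow=d^{(\pm)}$ an admissible cycle still exists), both of which follow from the coercivity and continuity already noted.
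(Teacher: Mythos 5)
Your proof is correct and rests on the same key observation as the paper's: the homogeneity of the $\infty$-norm together with the scale-invariance of the linear subspace $\ker B^{k+1}$ (resp.\ $\ker D^{k-1}$), which lets the coupling be factored out of the admissibility condition \cref{eq:existence_of_admissible_cycles}. Your direct rescaling $\min_x\norm{\beta^{(\pm)}/\sigma+x}_\infty=d^{(\pm)}/\sigma$ is a cleaner packaging of the paper's contradiction-plus-change-of-variables argument, and your explicit remarks on attainment of the minimum and closedness of the admissible half-line tidy up points the paper leaves implicit.
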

\begin{proof}
Using \cref{eq:existence_of_admissible_cycles}, we first show that the critical couplings $\sigma_*^{(-)}, \sigma_*^{(+)}$ satisfy, respectively.
\begin{align}
\min_{x\in \ker D^{k-1}}&\norm{ \frac{\beta^{(-)}}{\sigma_*^{(-)}} + x}_\infty = 1,\ \\
\min_{x\in \ker B^{k+1}}&\norm{ \frac{\beta^{(+)}}{\sigma_*^{(+)}} + x}_\infty = 1\, . 
\label{cond1}
\end{align}
If the statement were false and
\begin{align*}
\min_{x\in \ker D^{k-1}}\norm{ \frac{\beta^{(-)}}{\sigma_*^{(-)}} + x}_\infty = a\, ,
\end{align*}
with $0 < a<1$, then we could divide both sides by  $a$ and get
\begin{align*}
\min_{x\in \ker D^{k-1}}\norm{ \frac{\beta^{(-)}}{a\sigma_*^{(-)}} + \frac{1}{a}x}_\infty = 1\, ,
\end{align*}
which means that for $\sigma = a\sigma_*^{(-)}<\sigma_*^{(-)}$ there is an admissible cycle $\frac{x}{a}$, which is impossible because we assumed that $\sigma_*^{(-)}$ is the smallest coupling with that property.

Then, multiplying both terms of \cref{cond1} by $\sigma_*^{(-)}$, we have
\begin{align*}
\min_{x\in \ker D^{k-1}}\norm{\beta^{(-)} + \sigma_*^{(-)}x}_\infty = \sigma_*^{(-)}\, .
\end{align*}
It is now possible to perform a linear change of variable in the optimization problem $\sigma_*^{(-)}x \rightarrow \widetilde{x}$ which will change the optimal solution position but not the optimum itself. This means that $\sigma_*^{(-)}$ disappears from the left-hand side and it is found as the solution to the optimization problem above.
\end{proof}

In some special cases, the critical coupling admits a closed formula. For example, for the $(-)$ projection of the simplicial Kuramoto dynamics on the edges of a connected simplicial complex, the set of admissible vectors and the critical coupling can both be found explicitly.
\begin{theorem}[Critical coupling in the edge simplicial Kuramoto]\label{theorem:critical_coupling_edge}
For the $(-)$ component of the edge dynamics on a connected simplicial complex, it holds that
\begin{align}
x^{(-)}\in\mathcal{A}^{(-)} \iff x^{(-)} = x\ones\, , 
\end{align}
where 
\begin{align}
 -\min\left(\frac{\beta^{(-)}}{\sigma^\downarrow}\right)-1\leq x\leq -\max\left(\frac{\beta^{(-)}}{\sigma^\downarrow}\right)+1\, , 
 \label{cond}
 \end{align}
and
\begin{align}
\sigma^{(-)}_* = \frac{\max\left (\beta^{(-)}\right )-\min\left (\beta^{(-)}\right )}{2}\, .
\end{align}
\end{theorem}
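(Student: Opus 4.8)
The plan is to reduce the whole statement to an elementary one-dimensional computation, since for edge dynamics ($k=1$) the space of $(k-1)$-cocycles collapses to a line. First I would pin down $\ker D^{0}$: a $0$-cochain $x$ is a cocycle iff $D^{0}x = B_{1}^{\top}x = 0$, which by the definition of the incidence matrix \cref{eq:incidence_matrix} says $x_i = x_j$ for every edge $[i,j]$, i.e. $x$ is constant on each connected component. Under the connectedness hypothesis this forces $\ker D^{0} = \mathrm{span}\{\ones\}$ — note no weighting convention intervenes, since $D^{0}$ is the unweighted coboundary. In particular $\mathcal{A}^{(-)}\subseteq\mathrm{span}\{\ones\}$, which already gives the ``only if'' direction of the first claimed equivalence: every admissible cocycle is of the form $x\ones$ for a scalar $x\in\R$.

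Next I would substitute $x^{(-)} = x\ones$ into the admissibility condition of \cref{def:admissible_vectors}. The inequality $\norm{\beta^{(-)}/\sigma^{\downarrow} + x\ones}_\infty \le 1$ is equivalent to the family of scalar inequalities $-1 \le \beta^{(-)}_i/\sigma^{\downarrow} + x \le 1$ over all nodes $i$. Solving each for $x$ and keeping only the binding constraints gives $\max_i(-1 - \beta^{(-)}_i/\sigma^{\downarrow}) \le x \le \min_i(1 - \beta^{(-)}_i/\sigma^{\downarrow})$, i.e., using $\sigma^{\downarrow}>0$, exactly $-\min(\beta^{(-)}/\sigma^{\downarrow}) - 1 \le x \le -\max(\beta^{(-)}/\sigma^{\downarrow}) + 1$, which is \cref{cond}. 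This establishes the remaining direction of the equivalence and characterizes $\mathcal{A}^{(-)}$ completely.

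For the value of the critical coupling I would then observe that the interval just obtained is nonempty precisely when $-\min(\beta^{(-)}/\sigma^{\downarrow}) - 1 \le -\max(\beta^{(-)}/\sigma^{\downarrow}) + 1$, i.e. $\max(\beta^{(-)}) - \min(\beta^{(-)}) \le 2\sigma^{\downarrow}$. Hence the set of couplings for which $\mathcal{A}^{(-)}\neq\emptyset$ is the closed half-line $[\tfrac12(\max\beta^{(-)} - \min\beta^{(-)}),\,\infty)$, and by \cref{def:critical_coupling} its left endpoint is $\sigma^{(-)}_*$, giving the stated formula (the minimum is attained, since at equality the interval degenerates to a single point). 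Equivalently, one can invoke \cref{theorem:critical_coupling}: $\sigma^{(-)}_* = \min_{x\in\R}\norm{\beta^{(-)} + x\ones}_\infty$ is the $\ell^\infty$ Chebyshev radius of the multiset of entries of $\beta^{(-)}$, attained at $x = -\tfrac12(\max\beta^{(-)} + \min\beta^{(-)})$ with value $\tfrac12(\max\beta^{(-)} - \min\beta^{(-)})$.

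I do not expect a genuine obstacle here; the only step needing care is the first one, namely verifying that $\ker D^{0}$ is one-dimensional under connectedness and that the weighted inner-product convention does not affect this, after which everything is bookkeeping with $\min$, $\max$, and the sup-norm.
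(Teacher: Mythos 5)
Your proposal is correct and follows essentially the same route as the paper's proof: identify $\ker D^0=\mathrm{span}\{\ones\}$ from connectedness, reduce admissibility to the scalar inequalities $-1\le \beta^{(-)}_i/\sigma^\downarrow+x\le 1$, and read off the critical coupling from the nonemptiness of the resulting interval. The closing remark identifying $\sigma^{(-)}_*$ as the Chebyshev radius of the entries of $\beta^{(-)}$ is a pleasant reformulation but does not change the argument.
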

\begin{proof}
If the complex is connected we have that $D^0$ has a 1-dimensional kernel given by $\mathrm{span}\sset{\ones}$. This means that there are admissible vectors if and only if
\begin{align*}
\norm{\frac{\beta^{(-)}}{\sigma^\downarrow} + x\ones}_\infty \leq 1 \iff -1\leq \frac{\beta^{(-)}_i}{\sigma^\downarrow} + x \leq 1\, ,
\end{align*}
$\forall i=1,\dots,n_0$, which holds if and only if \cref{cond} holds, and has solutions only when
\begin{align*}
&\max\left(-\frac{\beta^{(-)}}{\sigma^\downarrow}\right)-1 \leq \min\left(-\frac{\beta^{(-)}}{\sigma^\downarrow}\right)+1 \\
&\iff \sigma^\downarrow \geq \frac{\max(\beta^{(-)})-\min(\beta^{(-)})}{2}\, .
\end{align*}
\end{proof}

\begin{figure*}
    \centering
    \includegraphics[width=0.96\linewidth]{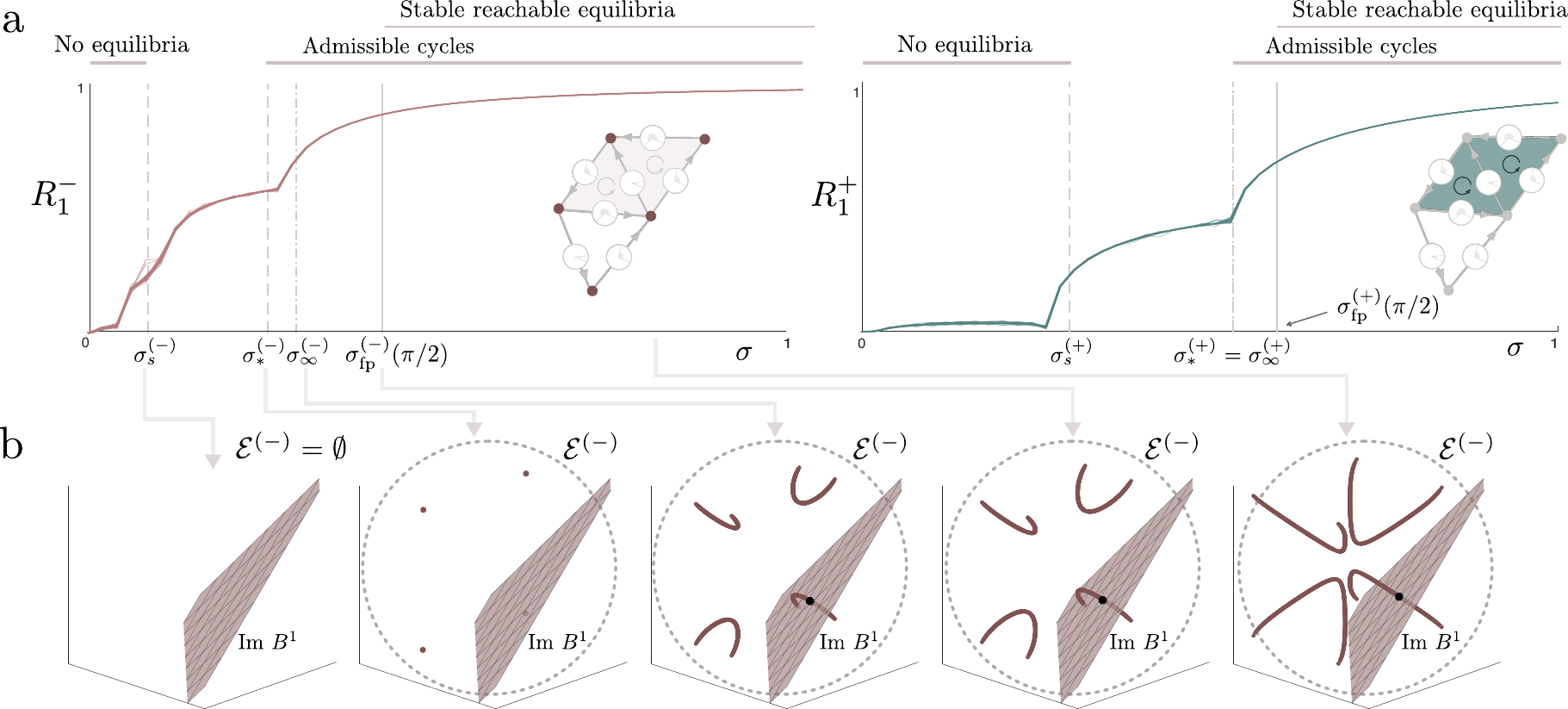}
    \caption{{\bf a.} Fixing the natural frequencies $\omega$, we simulate the edge simplicial Kuramoto model on a small simplicial complex with $20$ different initial phase configurations, for values of $\sigma\in [0,1]$, and compute the time-averaged partial order parameters $R_1^-$ (left), $R^+_1$ (right) from $t=0$ to $t=1000$. The vertical lines correspond to the values of $\sigma_s$ (\cref{prop:necessary_condition_2_ball}), $\sigma_*$ (\cref{theorem:critical_coupling}), $\sigma_\infty$ (\cref{eq:sigma_infinity}) and $\sigma_{\mathrm{fp}}$ (\cref{theorem:sigma_fp}). If we identify the last ``jump'' in the order with the emergence of reachable equilibria, then we see how the special values of $\sigma$ we derived actually bound its value from below and from above. As predicted by \Cref{theorem:critical_coupling_+_no_cycles}, the equilibrium transition value for the $(+)$ projection is exactly $\sigma^{(+)}_* = \sigma^{(+)}_\infty$. {\bf b.} The meaning of the different values of $\sigma$ is depicted by numerically computing the equilibrium set $\mathcal{E}^{(-)}\subset\R^3$ (\cref{def:eq_sets}) for the edge dynamics on a $2$-simplex. We see how the equilibrium set $\mathcal{E}^{(-)}$ is empty for $\sigma=\sigma_s^{(-)}$, it first appears as a discrete set of points for $\sigma = \sigma_*^{(-)}$ and grows, intersecting the plane $\Ima B^1$ for $\sigma \geq \sigma^{(-)}_{\mathrm{fp}}$ giving rise to a reachable equilibrium (\cref{def:reachable_equilibria}), marked here as a black dot.}
    \label{fig:bounds_compare}
\end{figure*}

It is now worth noting that the properties of the projections $\theta^{(+)} =D^k\theta$ and $\theta^{(-)}=B^k\theta$ are not entirely symmetrical, as the space of $(k+1)$-cycles can be trivial ($\ker B^{k+1} = \sset{0}$) and thus there are situations in which the space of admissible cycles is simply $\mathcal{A}^{(+)}=\sset{0}$. The same cannot be said for $\ker D^{k-1}$ resulting in the down projection $(-)$ being generally harder to treat. 
To shed more light on this, let us consider the $k$-th order dynamics on a simplicial complex $\mathcal{X}$ which has at least one $k$-simplex. From \cref{def:admissible_vectors}, the existence of equilibria for both of them depends on the presence or absence of admissible cycles which, respectively, must belong to $\ker B^{k+1} $ and $\ker D^{k-1}$. The asymmetry stems from the fact that $D^{k-1}$ cannot have a trivial kernel because
\begin{align}\label{eq:decomp_of_admissible}
\ker D^{k-1} = (\Ima B^k)^\perp \underbrace{=}_{\text{Hodge}} \Ima D^{k-2} \oplus \ker L^{k-1}\, , 
\end{align}
and thus
\begin{itemize}
    \item if $k = 1$ then $\ker D^{0} = \ker L^0$, which is non-trivial as there is at least one connected component;
    \item if $k>1$ then $\mathrm{dim} \ker D^{k-1} \geq \mathrm{dim} \Ima D^{k-2}$ which is nonzero because, by inclusion, there is a nonzero number of $(k-1)$-simplices and $D^{k-2}$ is not an all-zero matrix.
\end{itemize}
The same cannot be said for $B^{k+1}$ as, in general, there is no restriction on the number of $(k+1)$-cycles. In fact, on the same line of \cref{eq:decomp_of_admissible},
\begin{align*}
\ker B^{k+1} = (\Ima D^k)^\perp = \Ima B^{k+2}\oplus\ker L^{k+1}\, ,
\end{align*}
which is empty when there are no $(k+2)$-simplices and no $(k+1)$-holes. 
Therefore, the case of $ \ker B^{k+1} = \sset{0}$ deserves a special treatment.
\begin{theorem}[No higher-order cycles]\label{theorem:critical_coupling_+_no_cycles}
If there are no $(k+1)$-cycles (~$\ker B^{k+1}=\sset{0}$) then the following properties hold:
\begin{enumerate}
    \item if $\mathcal{A}^{(+)}\neq\emptyset$ then $\mathcal{A}^{(+)}=\sset{0}$;
    \item if $\mathcal{A}^{(+)}\neq\emptyset$ then the equilibrium set is a discrete set of points given by
    \begin{align}
    \mathcal{E}^{(+)} = \sset{ (-1)^{s_+}\odot\arcsin\left(\frac{\beta^{(+)}}{\sigma^\uparrow}\right) + s_+\pi: s\in\sset{0,1}^{n_{k+1}}}\, ;
    \end{align}
    \item $\sigma^{(+)}_* = \norm{\beta^{(+)}}_{\infty}$;
    \item All equilibria are reachable $\mathcal{E}^{(+)} = \mathcal{R}^{(+)}$.
\end{enumerate}
\end{theorem}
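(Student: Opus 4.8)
The plan is to obtain all four items by specializing the general machinery of this section to the degenerate situation $\ker B^{k+1} = \sset{0}$, in which most of the structure collapses.

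Items 1--3 are immediate. Since $\sset{0}$ is the only cochain in $\ker B^{k+1}$, the set $\mathcal{A}^{(+)}$ of admissible cycles (Def.~\ref{def:admissible_vectors}) satisfies $\mathcal{A}^{(+)} \subseteq \sset{0}$, so if it is nonempty it equals $\sset{0}$: this is item 1. Plugging $x^{(+)} = 0$ into Def.~\ref{def:eq_sets} then collapses the equilibrium set to the claimed finite collection indexed by the shift vectors $s_+ \in \sset{0,1}^{n_{k+1}}$ (at most $2^{n_{k+1}}$ points), which is item 2. For item 3 I would invoke \Cref{theorem:critical_coupling}: the minimization $\sigma_*^{(+)} = \min_{x \in \ker B^{k+1}} \norm{\beta^{(+)} + x}_\infty$ runs over a single point, hence $\sigma_*^{(+)} = \norm{\beta^{(+)}}_\infty$. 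Equivalently, one checks directly that $\mathcal{A}^{(+)} = \sset{0}$ is nonempty exactly when $\norm{\beta^{(+)}/\sigma^\uparrow}_\infty \leq 1$, i.e.\ when $\sigma^\uparrow \geq \norm{\beta^{(+)}}_\infty$.

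Item 4, $\mathcal{E}^{(+)} = \mathcal{R}^{(+)}$, is the only part needing a separate observation. Recall $\mathcal{R}^{(+)} = \mathcal{E}^{(+)} \cap \Ima D^k$ (Def.~\ref{def:reachable_equilibria}): this constraint encodes that the projected $(+)$ trajectory is confined to $\Ima D^k$. But the identity $\ker B^{k+1} = (\Ima D^k)^\perp$ recorded just above the theorem, combined with the hypothesis, gives $\Ima D^k = \sset{0}^\perp = C^{k+1}(\mathcal{X})$ (equivalently, $D^k$ is surjective, since $\rank D^k = \rank B^{k+1} = n_{k+1} - \dim\ker B^{k+1} = n_{k+1}$). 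Hence the intersection defining $\mathcal{R}^{(+)}$ is vacuous and every fixed point of the projected dynamics is automatically reachable.

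The theorem carries no real difficulty: the absence of $(k+1)$-cycles forces $D^k$ to be surjective and $\mathcal{A}^{(+)}$ to be a singleton, after which each claim is a one-line substitution into the definitions of this section. The only mild care required is bookkeeping --- treating $D^k$ and $B^{k+1}$ consistently as a mutually adjoint pair, and reading the definition of $\mathcal{R}^{(+)}$ in Def.~\ref{def:reachable_equilibria} as the intersection of $\mathcal{E}^{(+)}$ with $\Ima D^k$.
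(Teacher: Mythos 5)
Your proposal is correct and follows essentially the same route as the paper: items 1--3 by noting $\ker B^{k+1}=\sset{0}$ forces $x^{(+)}=0$ in the definitions of $\mathcal{A}^{(+)}$ and $\mathcal{E}^{(+)}$ (the paper checks admissibility of $0$ directly rather than via \cref{theorem:critical_coupling}, but the two are equivalent), and item 4 by computing $\Ima D^k = (\ker B^{k+1})^\perp = \R^{n_{k+1}}$ so that the reachability constraint is trivially satisfied.
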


\begin{proof}
We prove each statement below.\\
1. It is trivial because $0$ is the only vector in $\ker B^{k+1}$.\\
2. Directly follows from \cref{eq:equilibrium_conditions_2.5} with $x^{(+)}=0$.\\
3. $0$ is the only vector in $\ker B^{k+1}$ so it will be admissible if and only if
\begin{align*}
\norm{\frac{\beta^{(+)}}{\sigma^\uparrow}}_\infty \leq 1\, .
\end{align*}
The smallest value of $\sigma^\uparrow$ for which this holds is $\sigma^\uparrow = \norm{\beta^{(+)}}_\infty$.
\item According to Definition \ref{def:reachable_equilibria}, an equilibrium is reachable for the $(+)$ projection if it belongs to $\Ima D^k$. In this case,
\begin{align*}
    \Ima D^k &= (\ker (D^k)^*)^\perp \\
     &= (\ker B^{k+1})^\perp = \sset{0}^\perp = \R^{n_{k+1}}\, ,
\end{align*}
hence the thesis.
\end{proof}

Notice how, in this case, the critical coupling $\sigma^{(+)}_*$ is both the transition value for the existence of admissible cycles (by \cref{def:admissible_vectors}) and the existence of the equilibria of the divergence-free component (by \cref{prop:curldiv-projected_equivalence}).
This means that $\sigma^\uparrow \geq \sigma^{(+)}_*$ is a necessary and sufficient condition for the existence of phase-locked configuration. 
From this general result, we can obtain \emph{for free} the well-known \cite{Stability_kuramoto_model} exact equilibrium transition for the node Kuramoto on trees as, by definition, they have no $1$-cycles.

\subsection{Sufficient condition for phase locking}\label{subsection:sufficient}
Necessary conditions for equilibrium are useful in a setting where we are interested in pushing the system to a \emph{non-equilibrium} state. If, in fact, we are able to tune the coupling strength below one of the bounds derived above ($\sigma_s$ or $\sigma_*$), we are guaranteed that the system will not reach a phase-locked configuration. If, however, we want the system to be phase-locked, we need \textit{sufficient} conditions that can ensure the existence of such equilibria.

An elegant bound on $\sigma^\updownarrow$, that both ensures the existence of equilibria and that is easy to compute, can be found  generalizing one of the results proven in \cite[Theorem 4.7]{dorfler2012exploring} by using the proof technique first introduced in~\cite{Stability_kuramoto_model} for the node Kuramoto.
\begin{theorem}[Sufficient condition for the existence of stable reachable equilibria]\label{theorem:sigma_fp}
For any $\gamma\in (0,\pi/2)$, if
\begin{align}\label{sigma_fixed_point}
    \sigma^\updownarrow \geq \sigma_{\mathrm{fp}}^{(\pm)}(\gamma) \defeq \frac{\sqrt{\max_i w^{k\pm 1}_i}}{\sin(\gamma)}\norm{\beta^{(\pm)}}_{w^{k\pm 1}}\, ,
\end{align}
there exists an asymptotically stable reachable equilibrium for the $(\pm)$ dynamics such that
\begin{align}
\norm{\theta^{(\pm)}}_{\infty} \leq \gamma\, .
\end{align}
\end{theorem}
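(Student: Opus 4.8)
The plan is to reduce the statement to a fixed-point problem on the subspace of reachable configurations and then solve it by a homotopy/continuation argument, adapting the technique of~\cite{Stability_kuramoto_model} (which generalizes~\cite[Thm.~4.7]{dorfler2012exploring}). I treat the $(-)$ projection; the $(+)$ case is identical after exchanging $B^k\leftrightarrow D^k$, $D^{k-1}\leftrightarrow B^{k+1}$, $L^{k-1}_{\uparrow}\leftrightarrow L^{k+1}_{\downarrow}$, $W_{k-1}\leftrightarrow W_{k+1}$, $\beta^{(-)}\leftrightarrow\beta^{(+)}$ and $\sigma^\downarrow\leftrightarrow\sigma^\uparrow$. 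First I would rewrite the equilibrium condition of the $(-)$ dynamics in \cref{eq:projection_dynamics}: combining the Hodge decomposition of the $k$-cochain $\omega$ with $B^kB^{k+1}=0$ and $B^k\omega_{\mathrm{H}}=0$ gives $\omega^{(-)}=L^{k-1}_{\uparrow}\beta^{(-)}$, so the $(-)$ dynamics is at equilibrium exactly when $L^{k-1}_{\uparrow}\big(\sin\theta^{(-)}-\beta^{(-)}/\sigma^\downarrow\big)=0$, i.e. when $\sin\theta^{(-)}-\beta^{(-)}/\sigma^\downarrow\in\ker L^{k-1}_{\uparrow}$. Since $\ker L^{k-1}_{\uparrow}=\ker D^{k-1}=(\Ima B^k)^\perp$ in $\inner{\cdot}{\cdot}_{w^{k-1}}$ [using $B^k=(D^{k-1})^*$, \cref{eq:boundary}] and $\beta^{(-)}\in\Ima B^k$, a \emph{reachable} equilibrium (\cref{def:reachable_equilibria}) is precisely a cochain $\theta^{(-)}\in\Ima B^k$ satisfying $P\sin\theta^{(-)}=\beta^{(-)}/\sigma^\downarrow$, where $P$ is the $w^{k-1}$-orthogonal projector onto $\Ima B^k$; the goal becomes producing such a $\theta^{(-)}$ with $\norm{\theta^{(-)}}_\infty\le\gamma$.

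For the continuation I would study, for $\lambda\in[0,1]$, the equation $\Psi(\theta)\defeq P\sin\theta=\lambda\,\beta^{(-)}/\sigma^\downarrow$ on $\Ima B^k$, which is solved by $\theta=0$ at $\lambda=0$. Regarded as a self-map of $\Ima B^k$, $\Psi$ has Fr\'echet derivative $D\Psi(\theta)=P\,\mathrm{diag}(\cos\theta)|_{\Ima B^k}$; for $\norm{\theta}_\infty<\pi/2$ this is self-adjoint and positive definite for $\inner{\cdot}{\cdot}_{w^{k-1}}$ (because $\mathrm{diag}(\cos\theta)\succ 0$), hence invertible, and the implicit function theorem yields a smooth solution branch $\lambda\mapsto\theta(\lambda)\in\Ima B^k$ that can be continued as long as it stays in $\{\norm{\cdot}_\infty<\pi/2\}$.

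The heart of the proof is the a priori estimate that confines the branch to $\{\norm{\cdot}_\infty\le\gamma\}$. Writing $\sin\theta(\lambda)=\lambda\,\beta^{(-)}/\sigma^\downarrow + c$ with $c\in\ker D^{k-1}=(\Ima B^k)^\perp$ and pairing with $\theta(\lambda)\in\Ima B^k$ in $\inner{\cdot}{\cdot}_{w^{k-1}}$ kills the $c$-term, giving $\inner{\sin\theta}{\theta}_{w^{k-1}}=(\lambda/\sigma^\downarrow)\inner{\beta^{(-)}}{\theta}_{w^{k-1}}\le(\lambda/\sigma^\downarrow)\norm{\beta^{(-)}}_{w^{k-1}}\norm{\theta}_{w^{k-1}}$. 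On the other hand, monotonicity of $t\mapsto\sin t/t$ on $(0,\pi/2)$ gives $t\sin t\ge(\sin\gamma/\gamma)\,t^2$ for $\abs{t}\le\gamma$, hence $\inner{\sin\theta}{\theta}_{w^{k-1}}\ge(\sin\gamma/\gamma)\norm{\theta}_{w^{k-1}}^2$ whenever $\norm{\theta}_\infty\le\gamma$. Combining the two, using $\norm{v}_\infty\le\sqrt{\max_i w^{k-1}_i}\,\norm{v}_{w^{k-1}}$ [a consequence of \cref{k-norm}] and the hypothesis $\sigma^\downarrow\ge\sigma^{(-)}_{\mathrm{fp}}(\gamma)$, one obtains $\norm{\theta(\lambda)}_\infty\le\lambda\gamma$ --- strictly below $\gamma$ for $\lambda<1$. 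A standard open--closed argument on the set of admissible $\lambda$ (at any would-be exit point one has $\norm{\theta}_\infty\le\gamma<\pi/2$, so both the IFT and the estimate still apply) then shows $\theta(\lambda)$ is defined for all $\lambda\in[0,1]$ with $\norm{\theta(\lambda)}_\infty\le\gamma$; taking $\lambda=1$ yields the sought reachable equilibrium $\theta^{(-)}$ with $\norm{\theta^{(-)}}_\infty\le\gamma$.

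Asymptotic stability then follows by linearizing \cref{eq:projection_dynamics} at $\theta^{(-)}$: the Jacobian is $-\sigma^\downarrow L^{k-1}_{\uparrow}\mathrm{diag}(\cos\theta^{(-)})$, whose range equals $\Ima L^{k-1}_{\uparrow}=\Ima B^k$, the invariant subspace on which the $(-)$ dynamics evolves. As $L^{k-1}_{\uparrow}$ is self-adjoint, positive semidefinite with $\ker L^{k-1}_{\uparrow}=(\Ima B^k)^\perp$, and $\mathrm{diag}(\cos\theta^{(-)})\succ 0$ (since $\norm{\theta^{(-)}}_\infty<\pi/2$), the operator $L^{k-1}_{\uparrow}\mathrm{diag}(\cos\theta^{(-)})$ is similar (via $\mathrm{diag}(\cos\theta^{(-)})^{1/2}$) to the positive-semidefinite symmetric matrix $\mathrm{diag}(\cos\theta^{(-)})^{1/2}L^{k-1}_{\uparrow}\mathrm{diag}(\cos\theta^{(-)})^{1/2}$, whose kernel intersects $\Ima B^k$ trivially; so the Jacobian restricted to $\Ima B^k$ has spectrum with strictly negative real part, and $\theta^{(-)}$ is locally asymptotically stable. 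The step I expect to be the main obstacle is exactly the interplay between the continuation and the estimate: invertibility of $D\Psi$ holds only on $\{\norm{\cdot}_\infty<\pi/2\}$ while the bound $t\sin t\ge(\sin\gamma/\gamma)t^2$ needs $\abs{t}\le\gamma$, so one must argue --- through the \emph{strict} inequality $\norm{\theta(\lambda)}_\infty<\gamma$ for $\lambda<1$, which is precisely where the margin in $\sigma^\downarrow\ge\sigma^{(-)}_{\mathrm{fp}}(\gamma)$ gets used --- that the branch cannot touch $\norm{\cdot}_\infty=\gamma$ before $\lambda=1$. The remaining items (verifying $\omega^{(-)}=L^{k-1}_{\uparrow}\beta^{(-)}$, the kernel identities, and that every estimate transfers verbatim to the $(+)$ projection and to arbitrary positive weights) are routine.
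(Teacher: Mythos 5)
Your proposal is correct, and it reaches the theorem by a genuinely different route from the paper. The paper's proof (Appendix~B) passes to the coefficients $c$ of $\theta^{(\pm)}$ in an orthonormal eigenbasis $\widetilde{V}$ of $\Ima L^{k\mp1}_{\updownarrow}$, rewrites the equilibrium condition as a fixed-point equation $c=f(c)$ with $f(c)=(\widetilde{V}^*S(c)\widetilde{V})^{-1}\widetilde{\Lambda}^{-1}\widetilde{V}^*\omega^{(\pm)}/\sigma^{\updownarrow}$, and shows via the operator-norm bound $\norm{(\widetilde{V}^*S(c)\widetilde{V})^{-1}}\le\gamma/\sin\gamma$ that $f$ maps a closed ball to itself, so Brouwer's theorem yields existence; you instead reformulate reachable equilibria as $P\sin\theta^{(\pm)}=\beta^{(\pm)}/\sigma^{\updownarrow}$ on $\Ima B^k$ (resp.\ $\Ima D^k$) and run a continuation in $\lambda$ from $\theta=0$, with the a priori estimate obtained by pairing $\sin\theta$ against $\theta$ and using $t\sin t\ge(\sin\gamma/\gamma)\,t^2$. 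The two arguments rest on the same two quantitative ingredients --- the constant $\gamma/\sin\gamma$ (which in the paper appears as $\max_{|x|\le\gamma}\sinc^{-1}(x)$) and the norm comparison $\norm{v}_\infty\le\sqrt{\max_i w_i}\,\norm{v}_{w}$ --- and your identity $\omega^{(\pm)}=L^{k\mp1}_{\updownarrow}\beta^{(\pm)}$, the kernel/image bookkeeping, and the stability argument via similarity to $\diagm(\cos\theta)^{1/2}L\,\diagm(\cos\theta)^{1/2}$ all check out and mirror the paper's Jacobian computation. What Brouwer buys is robustness: it needs only continuity and the self-mapping property, with no nondegeneracy along the way. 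What your continuation buys is more information: a distinguished solution branch emanating from the origin, local uniqueness along it, and invertibility of the linearization at the endpoint delivered for free, which feeds directly into the stability claim. The one place where your write-up is only sketched is the open--closed argument (uniform invertibility of $D\Psi$ giving a bounded $d\theta/d\lambda$ so the branch extends to the closure, plus the strict inequality $\norm{\theta(\lambda)}_\infty\le\lambda\gamma<\gamma$ for $\lambda<1$ preventing exit before $\lambda=1$); you correctly flag this as the delicate step and the sketch closes, so I see no genuine gap.
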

\begin{proof}
The proof directly follows the constructions in ~\cite[Theorem 2]{Stability_kuramoto_model} by rewriting the equilibrium equation for the projection dynamics as a fixed point equation (hence the subscript \emph{fp} in $\sigma$) $x=f(x)$ and finding $\sigma^\updownarrow$ such that $f$ is a continuous function from a convex compact set to itself. Brouwer's fixed point theorem then provides the existence of a fixed point (a reachable equilibrium) in this set. The full proof can be found in Appendix~\ref{appendix:proof}.
\end{proof}
Four important observations should be highlighted from this result: 
\begin{enumerate}
    \item it is \textit{always} possible to tune the coupling strengths in order for the curl-free and divergence-free components to independently reach equilibrium; 
    \item after a certain value of the coupling strength, these equilibrium configurations always exist and at least one of them is close to the origin;
    \item increasing the coupling will also increase the closeness of the equilibrium to the origin;
    \item we see from the definition of the simplicial order parameter \cref{eq:SOP} that, if each component of the projection is close to $0$, then the configuration will be such that $R_k \approx 1$ i.e. phase synchronized (\cref{subsection:SOP}).
\end{enumerate}
We also highlight that, when the complex is unweighted, the expression of the bound becomes
\begin{align}
    \sigma^{(\pm)}_{\mathrm{fp}}(\gamma) = \frac{1}{\sin(\gamma)}\norm{\beta^{(\pm)}}_2\, .
\end{align}
Tor the node Kuramoto and for $\gamma = \frac{\pi}{2}$, it reduces to
\begin{align}
    \sigma^{(+)}_{\mathrm{fp}} = \norm{B_1^\dagger\omega}_2 = \norm{(L^0)^\dagger B_1^\top \omega}_2\, , 
\end{align}
which, when approximated, gives the well-known bound
\begin{align}
\sigma \geq \frac{1}{\lambda_2(L^0)}\norm{B_1^\top\omega}_2\, ,
\end{align}
where $\lambda_2(L^0)$ is the Fiedler eigenvalue of the network.
Another interesting observation is that
\begin{align*}
\norm{\beta^{(+)}}^2_{w^{k+1}} &= \inner{(B^{k+1})^\dagger\omega}{(B^{k+1})^\dagger\omega}_{w^{k+1}} \\
&= \inner{\omega}{(B^{k+1}D^{k})^\dagger\omega}_{w^k}\\ 
&= \inner{\omega}{(L^k_{\uparrow})^\dagger\omega}_{w^k}\, ,
\end{align*}
which is exactly the effective resistance of $\omega$ as defined in \cite{resistance}. 
In other words, to have equilibrium, the coupling must overcome the ``structural'' resistance of the simplicial complex, encoded in both the incidence structure ($L^k_{\updownarrow}$) and the natural frequencies. 
This is a powerful observation because it means that it might be possible to define pairs of structures and frequencies to reach particular types of dynamics or control the frequencies to move across regimes. 

Finally, while \cref{theorem:sigma_fp} ensures the existence of reachable equilibria, in practice its value $\sigma^{(\pm)}_{\mathrm{fp}}$ tends to be conservative and to overestimate the minimum value of $\sigma$ for which stable reachable equilibria exist. In perfect analogy with the node Kuramoto literature~\cite{synch_survey}, it is often seen in practice that 
\begin{align}\label{eq:sigma_infinity}
\sigma^{(\pm)}_\infty \defeq \norm{\beta^{(\pm)}}_\infty
\end{align}
is closer to the true reachability threshold, and thus provides a sharper bound. This value, moreover, exactly coincides with the reachability transition in some special cases, such as in Thm.~\ref{theorem:critical_coupling_+_no_cycles}. 
The different bounds on $\sigma$ found in \cref{subsection:necessary,subsection:sufficient} are shown in \cref{fig:bounds_compare}, where they are related to the partial order parameters on a small simplicial complex. 
We see how $\sigma_*$ and $\sigma_{\mathrm{fp}}$ actually bound the point of the last jump, corresponding to the transition value after which the dynamics admits reachable equilibria.

\section{Coupling the Hodge components}\label{coupling}

The simplicial Kuramoto model of \cref{eq:simplicial_kuramoto} provides a natural way to formulate synchronization dynamics of topological signals interacting on a simplicial complex~\cite{ghorbanchian2021higher}.
Building upon its form, many different variants with interesting behaviors can be formulated. The first models we consider are those for which the Hodge decomposition of the dynamics does not lead to decoupled equations.
\begin{itemize}
    \item In \cref{subsection:explosive}, we review the explosive model, proposed in \cite{millan2020explosive}, which couples the Hodge components through the order parameters. We state the model and propose a similar variant, obtained as a gradient flow, which lends itself to an easier analytical treatment.
    \item In \cref{subsection:frustration}, we consider Sakaguchi-Kuramoto type models, where the dynamics is frustrated by an external parameter. This classical variant is extended to the simplicial case in two ways: the first directly follows from the node Kuramoto and the second, proposed in~\cite{arnaudon2022connecting}, adds frustration in an orientation-independent fashion.
\end{itemize} 

\subsection{Explosive simplicial Kuramoto}\label{subsection:explosive}

One of the early works on the simplicial Kuramoto model proposed to couple the different Hodge components of the dynamics with factors depending on the partial order parameters~\cite{millan2020explosive}. 
In that work, with the partial order parameters
\begin{align}\label{eq:nonnegative_order}    
\begin{split}
R_k^{[+]}(\theta) &= \frac{1}{n_{k+1}}\abs{\sum_{\alpha=1}^{n_{k+1}} e^{i(D^k\theta)_\alpha}}\\
R_k^{[-]}(\theta) &= \frac{1}{n_{k-1}}\abs{\sum_{\alpha=1}^{n_{k-1}} e^{i(B^k\theta)_\alpha}}\, , 
\end{split}
\end{align}
for a $k$-cochain $\theta$, the following dynamical system is defined
\begin{align}\label{eq:explosive_millan}
\Dot{\theta} = \omega &- \sigma^\downarrow R_k^{[+]}(\theta)D^{k-1}\sin\left(B^k\theta\right)\nonumber \\
&- \sigma^\uparrow R_k^{[-]}(\theta) B^{k+1}\sin\left(D^k\theta\right)\, ,
\end{align}
which was shown to display explosive transitions in the order parameters $R_k^{[\pm]}$ when varying $\sigma$. 

The partial order parameters used by~\cite{millan2020explosive} (\cref{eq:nonnegative_order}) are different from the ones defined here in \cref{def:partial_order_parameters_1,def:partial_order_parameters_2}.
Indeed, our formulation allows for negative values (\cref{subsection:SOP}) as well as a derivation of the simplicial Kuramoto dynamics as a gradient flow. 
We show here that a nonlinearity introduced into the potential allows us to formulate an explosive model analogous to~\cite{millan2020explosive}.
From the two partial order parameters defined in~\cref{def:partial_order_parameters_1,def:partial_order_parameters_2}, we can consider their product to define the \emph{explosive simplicial Kuramoto} model as
\begin{align}
    \Dot{\theta} = C^{+}_k C^{-}_k W_{k}\nabla_{\theta} (R^+_k R^-_k)\, , 
\end{align}
whose explicit dynamics is
\begin{align}\label{eq:adaptively_coupled}
\begin{split}
\Dot{\theta} = \omega -\sigma^{\downarrow} R^+_k(\theta)D^{k-1}\sin\left(B^k\theta\right)\\
-\sigma^{\uparrow}R^-_k(\theta)B^{k+1}\sin\left(D^k\theta\right)\, .
\end{split}
\end{align}
where we have introduced coupling strengths and natural frequencies for generality.
The projected dynamics and the Hodge components are now coupled because the interaction term from below depends only on $\theta^{(-)}$ but $R^+_k$ depends on $\theta^{(+)}$, and vice versa for the interaction from above. 
This nonlinear gradient flow dynamics is different from \cref{eq:explosive_millan}, but still displays an explosive phase transition in the order parameter, even for small simplicial complexes (\cref{fig:model_comparison}a). 
The possibility of having a negative order parameter in front of the interaction terms, moreover, can make the model behave in such a way as to maximize the phase difference between interacting oscillators~\cite{StrogatzContrarian}, giving rise to new dynamical phenomena. 
As shown in~\cref{fig:model_comparison}a, the model shows a second phase transition in $\sigma$ after which the dynamics is bistable and can converge to both phase ($R_k\approx 1$) and anti-phase ($R_k\approx -1$) synchronized configurations.

\begin{figure*}[htp]
    \centering
    \includegraphics[width=0.95\linewidth]{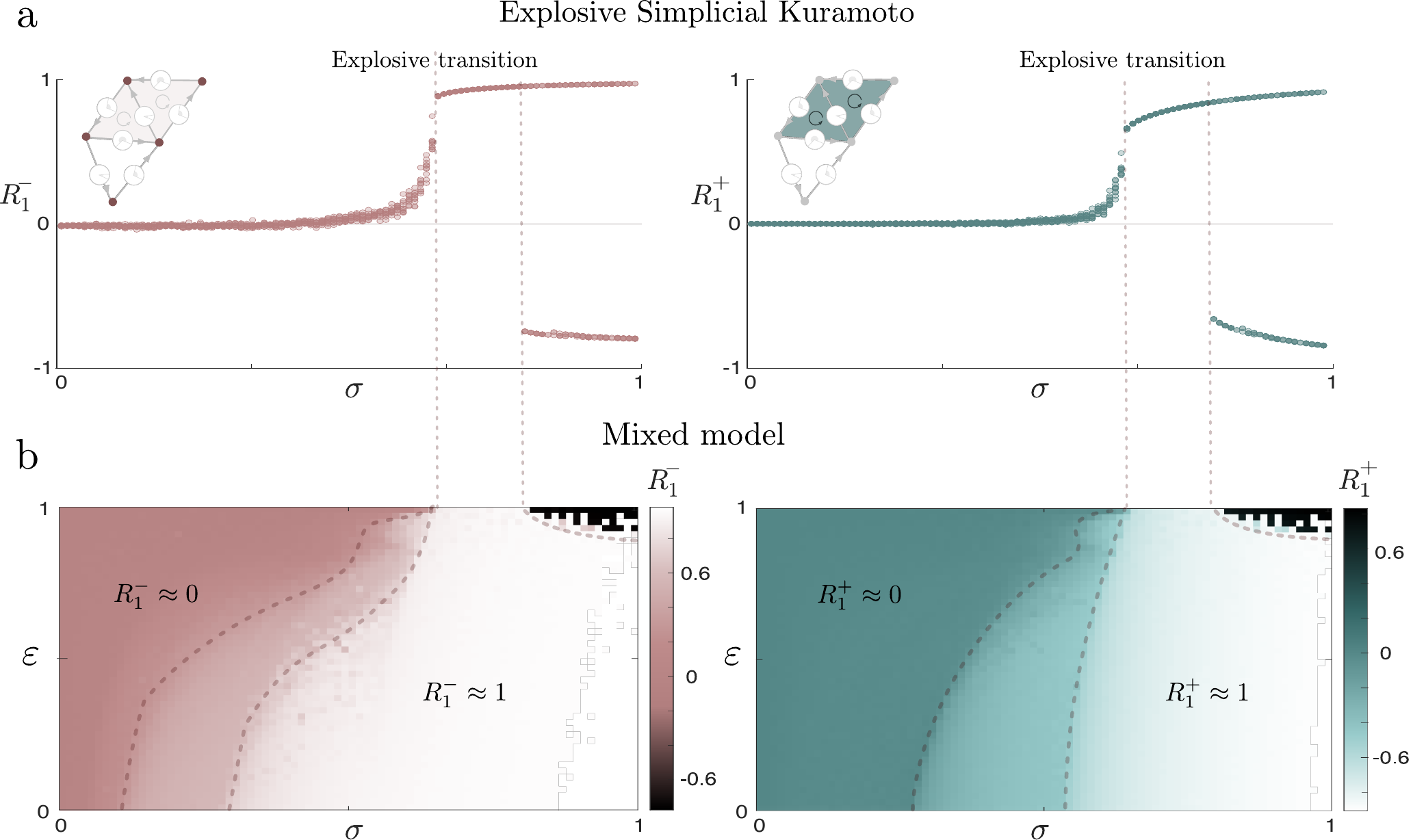}
    \caption{
    Fixing the natural frequencies and the frustrations, we run variants of the simplicial Kuramoto model on a small simplicial complex for different values of $\sigma\in [0,1]$ and compute the time-averaged partial order parameters (red: lower order parameter, blue: upper order parameter). {\bf a.} The explosive model \cref{eq:adaptively_coupled} shows an explosive transition in both the down- (left) and up- (right) partial order parameters. After a certain value of $\sigma$, moreover, some of the trajectories converge to anti-phase synchronized configurations characterized by the order parameter being close to $-1$. 
    {\bf b.} The phase diagram of mixed model \cref{eq:mixed_model} is depicted for $\sigma\in[0,1],\, \varepsilon\in [0,1]$. For any given $\varepsilon$, the dashed lines show the $\sigma$ corresponding to the first and last jump in the order. As we can see, they converge to a single point when $\varepsilon = 1$, signaling the explosiveness of the model. The dashed line on the right encircles the region in which the system is bistable, and the trajectories can converge to both phase and anti-phase synchronized configurations.}
    \label{fig:model_comparison}
\end{figure*}

Interestingly, for this model, we can find a sufficient condition for the existence of a phase-locked configuration analogous to \cref{theorem:sigma_fp}. In this case, as expected, the bounds related to the $(+)$ and $(-)$ projections are coupled. 
\begin{theorem}[Sufficient condition for the existence of reachable equilibria]\label{theorem:sigma_fp_explosive}
For any $\gamma^{(+)},\gamma^{(-)}\in (0,\pi/2)$, if
\begin{align}\label{sigma_fixed_point_explosive}
\begin{cases}
    \sigma^\uparrow \geq \frac{\sqrt{\max_i w^{k+1}_i}}{\sin(\gamma^{(+)})\cos(\gamma^{(-)})}\norm{\beta^{(+)}}_{w^{k+1}}\\[10pt]
    \sigma^\downarrow \geq \frac{\sqrt{\max_i w^{k-1}_i}}{\sin(\gamma^{(-)})\cos(\gamma^{(+)})}\norm{\beta^{(-)}}_{w^{k-1}}
\end{cases}\, , 
\end{align}
then both the projections $\theta^{(+)},\theta^{(-)}$ of the explosive simplicial Kuramoto model \cref{eq:adaptively_coupled} admit reachable equilibria such that
\begin{align}
\norm{\theta^{(+)}}_\infty \leq \gamma^{(+)},\ \norm{\theta^{(-)}}_\infty \leq \gamma^{(-)}\, .
\end{align}
\end{theorem}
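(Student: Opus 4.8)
The plan is to mirror the proof of \cref{theorem:sigma_fp}, which reduces each projection's equilibrium equation to a fixed-point equation and invokes Brouwer, but now carrying the extra multiplicative factors $R_k^+(\theta)$ and $R_k^-(\theta)$ that couple the two projections. Recall from \cref{eq:projection_dynamics} and the discussion in \cref{subsection:explosive} that an equilibrium of \cref{eq:adaptively_coupled} corresponds to
\begin{align*}
R_k^-(\theta)\,\sigma^\uparrow L^{k+1}_\downarrow \sin(\theta^{(+)}) = \omega^{(+)},\quad R_k^+(\theta)\,\sigma^\downarrow L^{k-1}_\uparrow\sin(\theta^{(-)}) = \omega^{(-)}\, ,
\end{align*}
so, using the Lemma before \cref{def:characteristic_vectors} and the pseudoinverse trick exactly as in the proof of \cref{theorem:sigma_fp}, I would rewrite these as a coupled fixed-point system $\theta^{(+)} = F_+(\theta^{(+)},\theta^{(-)})$, $\theta^{(-)} = F_-(\theta^{(+)},\theta^{(-)})$ on the product of boxes $K_+ \times K_-$ where $K_\pm = \{x : \norm{x}_\infty \le \gamma^{(\pm)}\}$ intersected with the relevant image subspace ($\Ima D^k$ for $(+)$, $\Ima B^k$ for $(-)$), which is convex and compact.

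The key new ingredient is controlling the order-parameter prefactors. On the box $K_-$ every component of $\theta^{(-)} = B^k\theta$ has absolute value at most $\gamma^{(-)} < \pi/2$, so each $\cos((B^k\theta)_\alpha) \ge \cos(\gamma^{(-)}) > 0$; plugging into the definition \cref{def:partial_order_parameters_1} of $R_k^-$ (all weights positive, normalization $C_k^- = \ones^\top W_{k-1}^{-1}\ones$) gives the uniform lower bound $R_k^-(\theta) \ge \cos(\gamma^{(-)})$ on $K_-$, and symmetrically $R_k^+(\theta) \ge \cos(\gamma^{(+)})$ on $K_+$. This is exactly why the hypotheses \cref{sigma_fixed_point_explosive} carry the extra $\cos(\gamma^{(\mp)})$ in the denominator: it compensates for the worst-case shrinkage of the coupling by the cross prefactor. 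Concretely, following the estimate in the proof of \cref{theorem:sigma_fp} (Appendix~\ref{appendix:proof}), the map $F_+$ lands in $K_+$ provided $\sigma^\uparrow R_k^-(\theta) \sin(\gamma^{(+)}) \ge \sqrt{\max_i w^{k+1}_i}\,\norm{\beta^{(+)}}_{w^{k+1}}$ pointwise; since $R_k^-(\theta)\ge\cos(\gamma^{(-)})$, it suffices that $\sigma^\uparrow \ge \sqrt{\max_i w^{k+1}_i}\,\norm{\beta^{(+)}}_{w^{k+1}}/(\sin(\gamma^{(+)})\cos(\gamma^{(-)}))$, which is the first line of \cref{sigma_fixed_point_explosive}, and symmetrically for $F_-$.

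With the self-map property established, Brouwer's fixed point theorem applied to the continuous map $(F_+,F_-)$ on the convex compact set $K_+\times K_-$ yields a joint fixed point, i.e. a pair $(\theta^{(+)},\theta^{(-)})$ with $\norm{\theta^{(+)}}_\infty\le\gamma^{(+)}$, $\norm{\theta^{(-)}}_\infty\le\gamma^{(-)}$ solving both equilibrium equations; since these fixed points live in $\Ima D^k$ and $\Ima B^k$ respectively they are reachable in the sense of \cref{def:reachable_equilibria}, and asymptotic stability follows from the same Jacobian/monotonicity argument used for \cref{theorem:sigma_fp} (the prefactors, being bounded away from zero and smooth near the equilibrium, do not spoil the sign structure). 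I expect the main obstacle to be bookkeeping the coupling cleanly: unlike \cref{theorem:sigma_fp}, the two fixed-point maps are not independent, so one must verify that the \emph{product} box $K_+\times K_-$ is mapped into itself \emph{simultaneously} — the lower bounds on $R_k^\pm$ hold uniformly on the respective boxes, so this goes through, but care is needed to confirm that continuity and the image-subspace constraints are preserved and that no circular dependence sneaks in when both prefactors degrade at once.
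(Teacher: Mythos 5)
Your proposal is correct and follows essentially the same route as the paper's proof in \cref{appendix:proof_explosive}: rewrite the two projected equilibrium equations as a coupled fixed-point system on the product of convex compact sets, bound the order-parameter prefactors uniformly from below by $\cos(\gamma^{(\mp)})$ on the respective sets (exactly the origin of the extra cosines in \cref{sigma_fixed_point_explosive}), and apply Brouwer to the joint map; the only cosmetic difference is that the paper works in weighted-norm balls in the coefficient space of the reachable bases $\widetilde{V},\widetilde{V}'$ rather than directly in $\infty$-norm boxes. One caveat: your closing claim that asymptotic stability follows ``by the same Jacobian argument'' is not justified --- because each prefactor depends on the \emph{other} projection's phases, the Jacobian of the coupled system acquires off-diagonal blocks that break the negative-definiteness argument of \cref{theorem:sigma_fp}; consistently, the theorem here asserts only existence, and the paper explicitly states that the stability analysis of this model is left for future work.
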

\begin{proof}
The proof is similar to the one of \cref{theorem:sigma_fp} and can be found in \cref{appendix:proof_explosive}.
\end{proof}
It is interesting to see that, in the bound above, there is a tradeoff between the coupling strengths of the two projections. If we want a low bound on the coupling for the $(+)$ projection, then we need $\gamma^{(+)}$ to be high and $\gamma^{(-)}$ to be low. By doing so, however, we will result in a high value of the bound for the $(-)$ component.
Notice, moreover, how this result is only concerned with stating the presence of a phase-locked configuration whose projections can independently be made arbitrarily close to the origin (and thus with high values of the order parameters) by tuning the couplings, but states nothing about whether the dynamics will actually converge to them. In addition, stability analysis is challenging for this system and is left for future work.

This gradient flow approach suggests a more general way to build variants of the simplicial Kuramoto model which couple the dynamics across Hodge subspaces. 
For this, we can consider a general function $f$ of the partial order parameters and take its gradient
\begin{align}
\label{eq:explosive-gradient-flow}
   \dot \theta  = W_{k}\nabla_{\theta} f(R^+_k,R^-_k)\, ,
\end{align}
which, modulo normalization constants, reduces to the standard simplicial Kuramoto \cref{eq:simplicial_kuramoto} for $f(x,y) = x + y$.
As an example, if we consider a linear interpolation between the standard potential and the explosive one, $f_\varepsilon(x,y) = (1-\varepsilon)(x+y)+\varepsilon xy$, parametrized by $\varepsilon\in [0,1]$, we have what we call \emph{mixed model}
\begin{align*}
    \dot\theta = W_{k} \nabla_{\theta} ((1-\varepsilon)C_kR_k + \varepsilon C_k^+C_k^- R_k^+ R_k^-)\, ,
\end{align*}
which explicitly reads
\begin{align}\label{eq:mixed_model}
\begin{split}
    \Dot \theta 
     &= - (1-\varepsilon+\varepsilon R^-_k(\theta)) B^{k+1}\sin(D^k\theta)\\
     &- (1-\varepsilon + \varepsilon R^+_k(\theta)) D^{k-1}\sin(B^k\theta)\, .
    \end{split}
\end{align}
For $\varepsilon=0$, we recover the standard simplicial Kuramoto model, and for $\varepsilon = 1$ we get \cref{eq:adaptively_coupled}. The phase diagram of this dynamics is shown in \cref{fig:model_comparison}b, where the explosive transition for $\varepsilon = 1$ is evident and a region of bistability appears for high values of $\sigma$. 
Notice that, although the potential is linear in $\varepsilon$, the stationary dynamic is not, and, through an analogous proof, it is possible to derive a result equivalent to \cref{theorem:sigma_fp_explosive} to get sufficient conditions for the existence of reachable equilibria.

\subsection{Simplicial Sakaguchi-Kuramoto}\label{subsection:frustration}
The Sakaguchi-Kuramoto model~\cite{SakaguchiKuramoto} is a well-known extension of the Kuramoto model, which modifies the interaction function by including a phase lag parameter. 
Given a frustration vector $\alpha$ on the edges, we can write it as a modification of \cref{eq:standard_kuramoto}
\begin{align}\label{eq:kuramoto_sakaguchi_nodes}
\Dot{\theta}_i = \omega_i - \sigma\sum_j A_{ij}\sin(\theta_i - \theta_j + \alpha_{ij})\, .
\end{align}
We can extend it to the simplicial case in a simple manner by considering two frustration cochains $\alpha_{k-1}\in C^{k-1}$ and writing the simplicial Kuramoto model 
\begin{align}\label{eq:sakaguchi_kuramoto_dependent}
\Dot{\theta} &=\omega - \sigma^\uparrow B^{k+1}\sin\left(D^k\theta + \alpha_{k+1}\right) \nonumber \\ 
&- \sigma^\downarrow D^{k-1}\sin\left(B^k\theta + \alpha_{k-1}\right)\, ,
\end{align}
where $\alpha$ is the effect of an external field on each interaction simplex.
As this model does not couple the Hodge subspaces, we refer to it as the \textit{simple} frustrated model.
In addition, while it has a simple form, it can be proven that it does not reduce to the Sakaguchi-Kuramoto model of \cref{eq:kuramoto_sakaguchi_nodes} for $k = 0$ \cite{arnaudon2022connecting}.
Before considering how to include frustrations in a more meaningful way, we notice that this simple frustration has the surprising property that $\alpha$ can be used to control the system by making any projected configuration reachable and stable.

\begin{theorem}[Control of reachable equilibrium in simple model]\label{theorem:make_eq_reachable}
If $\sigma^\updownarrow > \sigma^{(\pm)}_\infty = \norm{\beta^{(\pm)}}_\infty$, then, for any chosen projected configuration $\theta^{(\pm)}_{*}$, i.e. $\theta_{*}^{(+)}\in \Ima D^k$, $\theta^{(-)}_{*}\in\Ima B^k$, if
\begin{align}
\alpha_{k\pm 1} = \arcsin\left(\frac{\beta^{(\pm)}}{\sigma^\updownarrow}\right) -\theta^{(\pm)}_{*}\, , 
\end{align}
then $\theta_{*}^{(\pm)}$ is an asymptotically stable, reachable equilibrium for the $(\pm)$ projection of the simple frustrated dynamics \cref{eq:sakaguchi_kuramoto_dependent}.
\end{theorem}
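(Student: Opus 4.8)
The plan is to exploit the fact that the simple frustrated model does not couple the Hodge subspaces, so exactly as in \cref{eq:simplicial_kuramoto_decomposed} we may study the two projected dynamics independently. First I would project \cref{eq:sakaguchi_kuramoto_dependent} onto upper and lower simplices by multiplying by $D^k$ and $B^k$ respectively, obtaining, in analogy with \cref{eq:projection_dynamics},
\begin{align*}
\Dot{\theta}^{(+)} &= \omega^{(+)} - \sigma^\uparrow L^{k+1}_{\downarrow}\sin\!\left(\theta^{(+)} + \alpha_{k+1}\right)\, ,\\
\Dot{\theta}^{(-)} &= \omega^{(-)} - \sigma^\downarrow L^{k-1}_{\uparrow}\sin\!\left(\theta^{(-)} + \alpha_{k-1}\right)\, .
\end{align*}
For $\theta^{(\pm)}_*$ to be an equilibrium we need $\sin(\theta^{(\pm)}_* + \alpha_{k\pm1})$ to solve $L^{k+1}_\downarrow (\cdot) = \omega^{(+)}/\sigma^\uparrow$ (resp.\ the $(-)$ analogue). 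Plugging in the prescribed $\alpha_{k\pm1} = \arcsin(\beta^{(\pm)}/\sigma^\updownarrow) - \theta^{(\pm)}_*$ gives $\sin(\theta^{(\pm)}_* + \alpha_{k\pm1}) = \beta^{(\pm)}/\sigma^\updownarrow$, and since $L^{k+1}_\downarrow \beta^{(+)} = L^{k+1}_\downarrow (B^{k+1})^\dagger\omega = D^k B^{k+1}(B^{k+1})^\dagger \omega = D^k\omega = \omega^{(+)}$ (using the Lemma and that $\omega^{(+)}\in\Ima D^k = \Ima L^{k+1}_\downarrow$), the equilibrium equation is satisfied; likewise for $(-)$. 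The hypothesis $\sigma^\updownarrow > \norm{\beta^{(\pm)}}_\infty$ is exactly what makes $\arcsin(\beta^{(\pm)}/\sigma^\updownarrow)$ well-defined with all entries strictly inside $(-\pi/2,\pi/2)$, so $\alpha_{k\pm1}$ is a legitimate real cochain.

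Next I would check reachability. The projected trajectory $\theta^{(-)}(t)$ starts in $\Ima B^k$ and its velocity $\omega^{(-)} - \sigma^\downarrow L^{k-1}_\uparrow\sin(\cdot)$ always lies in $\Ima B^k$, so the dynamics is confined to $\Ima B^k$; by hypothesis $\theta^{(-)}_*\in\Ima B^k$, hence it is reachable in the sense of \cref{def:reachable_equilibria}. The argument for $\theta^{(+)}_*\in\Ima D^k$ is identical.

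The part that needs the most care is asymptotic stability. I would linearize the $(-)$ projected dynamics about $\theta^{(-)}_*$: the Jacobian is $-\sigma^\downarrow L^{k-1}_\uparrow \diagm\!\big(\cos(\theta^{(-)}_* + \alpha_{k-1})\big) = -\sigma^\downarrow L^{k-1}_\uparrow \diagm\!\big(\cos(\arcsin(\beta^{(-)}/\sigma^\downarrow))\big)$, and since every entry of the $\arcsin$ lies strictly in $(-\pi/2,\pi/2)$, the diagonal matrix $\Lambda \defeq \diagm(\cos(\arcsin(\beta^{(-)}/\sigma^\downarrow)))$ is strictly positive definite. Thus the Jacobian restricted to the invariant subspace $\Ima B^k$ is $-\sigma^\downarrow L^{k-1}_\uparrow\Lambda$, which is similar (via $\Lambda^{1/2}$) to $-\sigma^\downarrow \Lambda^{1/2} L^{k-1}_\uparrow \Lambda^{1/2}$, a symmetric negative semidefinite matrix whose kernel is exactly $\Lambda^{-1/2}\ker L^{k-1}_\uparrow = \Lambda^{-1/2}\ker B^k$, which intersects $\Ima B^k$ trivially once we account for the $\Lambda$-weighting — more carefully, on $\Ima B^k$ one shows $\langle v, L^{k-1}_\uparrow\Lambda v\rangle > 0$ for $v\neq 0$ because $\Lambda v$ pairs nontrivially with $\Ima B^k = (\ker D^{k-1})^\perp$ through $L^{k-1}_\uparrow = D^{k-1}B^k$. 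Hence all eigenvalues of the linearization on the invariant subspace have strictly negative real part, giving asymptotic stability within $\Ima B^k$, and the same computation with $L^{k+1}_\downarrow$ and weights $w^{k+1}$ handles the $(+)$ case. The one genuine subtlety is making the "kernel intersects the invariant subspace trivially" step rigorous under the weighted inner product $W_{k\pm1}^{-1}$; I would phrase it as: $L^{k-1}_\uparrow\Lambda v = 0$ with $v\in\Ima B^k$ forces $\Lambda v\in\ker L^{k-1}_\uparrow = \ker B^k = (\Ima D^{k-1})^\perp$, and combined with $v\in(\ker D^{k-1})^\perp$ and positivity of $\Lambda$ this forces $v=0$, so the restricted Jacobian is Hurwitz.
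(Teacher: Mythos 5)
Your proposal is correct and follows essentially the same route as the paper: verify the equilibrium by substituting the prescribed frustration (using $\sigma^\updownarrow > \norm{\beta^{(\pm)}}_\infty$ to invert the sine), note reachability from the invariance of $\Ima B^k$ (resp.\ $\Ima D^k$), and prove asymptotic stability by showing the Jacobian restricted to the reachable subspace is similar to a negative definite self-adjoint matrix because $\cos\bigl(\arcsin(\beta^{(\pm)}/\sigma^\updownarrow)\bigr) = \sqrt{\ones - (\beta^{(\pm)}/\sigma^\updownarrow)^2} > 0$; the paper does the same but works from the outset in the coordinates of an orthonormal basis $\widetilde{V}$ of the reachable subspace, which sidesteps the restriction argument you flag as the delicate step. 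One minor slip: you should write $\ker L^{k-1}_\uparrow = \ker D^{k-1} = (\Ima B^k)^\perp$ (subspaces of $C^{k-1}$), not $\ker B^k = (\Ima D^{k-1})^\perp$, though your final inference ($\Lambda v \perp v$ with $\Lambda$ positive definite forces $v=0$) is unaffected.
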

\begin{proof}
We prove it for the $(+)$ projection, as the $(-)$ case is analogous.
We see from \Cref{eq:sakaguchi_kuramoto_dependent} that an equilibrium $\theta^{(+)}_{eq}$ of the $(+)$ projection of the frustrated dynamics will satisfy
\begin{align*}
\sin\left(\theta^{(+)}_{eq}+\alpha_{k+1}\right) = \frac{\beta^{(+)}}{\sigma^\uparrow} + x^{(+)}\, .
\end{align*}
As $\sigma^\uparrow \geq \norm{\beta^{(+)}}_\infty
$, $x^{(+)}=0$ is an admissible cycle (Def. \ref{def:admissible_vectors}) and thus
\begin{align*}
\theta^{(+)}_{eq} = \arcsin\left(\frac{\beta^{(+)}}{\sigma^\uparrow}\right)-\alpha_{k+1} = \theta^{(+)}_*\in \Ima D^k\, ,
\end{align*}
is a reachable equilibrium. The proof of stability can be found in \cref{appendix:make_eq_reachable}.
\end{proof}
We can visualize this result by looking at any panel of \cref{fig:bounds_compare}b and noticing that the action of a linear frustration corresponds to a translation of $\mathcal{E}^{(-)}$, resulting in a different intersection with the reachable subspace.
The strength of \cref{theorem:make_eq_reachable} is in the fact that, with a fine-tuned frustration, it is possible to have equilibrium configurations as ordered as we want while keeping the coupling strengths comparatively low. By exploiting this idea, we can get the following corollary.
\begin{corollary}\label{cor:0reachable}
Under the hypotheses of \cref{theorem:make_eq_reachable}, if
\begin{align}
\alpha_{k\pm 1} = \arcsin\left(\frac{\beta^{(\pm)}}{\sigma^{\updownarrow}}\right)\, , 
\end{align}
then $0\in\R^{n_{k\pm1}}$ is a stable, reachable equilibrium for the $(\pm)$ projection and thus there is a stable equilibrium configuration of the frustrated dynamics \cref{eq:sakaguchi_kuramoto_dependent} with partial order parameter $R^\pm_k(\theta) = 1$.
\end{corollary}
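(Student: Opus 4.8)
The plan is to read this off directly from \cref{theorem:make_eq_reachable} by taking the target projected configuration to be $\theta^{(\pm)}_{*}=0$. First I would note that the zero cochain lies in every subspace, so in particular $0\in\Ima D^k$ and $0\in\Ima B^k$, and hence it is a legitimate choice of $\theta^{(\pm)}_{*}$ in \cref{theorem:make_eq_reachable}. Plugging $\theta^{(\pm)}_{*}=0$ into the frustration prescribed there gives $\alpha_{k\pm1}=\arcsin(\beta^{(\pm)}/\sigma^\updownarrow)-0=\arcsin(\beta^{(\pm)}/\sigma^\updownarrow)$, which is exactly the choice in the statement. Since the hypotheses of \cref{theorem:make_eq_reachable} (in particular $\sigma^\updownarrow>\sigma^{(\pm)}_\infty$) are assumed, the theorem immediately returns that $0$ is an asymptotically stable, reachable equilibrium of the $(\pm)$ projection of the simple frustrated dynamics \cref{eq:sakaguchi_kuramoto_dependent}.

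It then remains to lift this to the full system. Because the simple frustrated model does not couple the Hodge subspaces, its divergence-free and curl-free components evolve independently as in \cref{eq:simplicial_kuramoto_decomposed}, and the interaction terms lie in $\Ima B^{k+1}\oplus\Ima D^{k-1}$, so the harmonic part still obeys $\dot\theta_{\mathrm H}=\omega_{\mathrm H}$. By \cref{prop:phase-locking-equivalence} and \cref{prop:curldiv-projected_equivalence}, the reachable equilibria $\theta^{(+)}=0$ and $\theta^{(-)}=0$ force $\dot\theta_{\mathrm{df}}=\dot\theta_{\mathrm{cf}}=0$, and since $D^k$ is injective on $\Ima B^{k+1}$ and $B^k$ is injective on $\Ima D^{k-1}$ this actually pins $\theta_{\mathrm{df}}=\theta_{\mathrm{cf}}=0$. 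Working in the harmonic gauge $\theta\mapsto\theta-\omega_{\mathrm H}t$ (equivalently assuming $\omega_{\mathrm H}=0$, which one may always do) then exhibits an equilibrium configuration $\theta$ of \cref{eq:sakaguchi_kuramoto_dependent} with $\theta^{(\pm)}=0$, and its stability is inherited from the stability of the decoupled projected dynamics shown in \cref{appendix:make_eq_reachable}. As a consistency check, one may verify directly that $\theta=0$ is an equilibrium in this gauge by substituting $\sin(\alpha_{k\pm1})=\beta^{(\pm)}/\sigma^\updownarrow$ and using $\omega=B^{k+1}\beta^{(+)}+\omega_{\mathrm H}+D^{k-1}\beta^{(-)}$.

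Finally I would evaluate the order parameters at this $\theta$: since $D^k\theta=\theta^{(+)}=0$ and $B^k\theta=\theta^{(-)}=0$, every entry of $\cos(D^k\theta)$ and of $\cos(B^k\theta)$ equals $1$, so from \cref{def:partial_order_parameters_1,def:partial_order_parameters_2} we get $\ones^\top W^{-1}_{k\pm1}\cos(\cdot)=\ones^\top W^{-1}_{k\pm1}\ones=C^{\pm}_k$, hence $R^{\pm}_k(\theta)=1$. There is no real obstacle here: the only step that is not pure arithmetic is the lifting one, i.e.\ confirming that the two reachable projected equilibria combine into a genuinely stable equilibrium of the coupled system modulo the harmonic gauge, but this is exactly the mechanism already contained in the proof of \cref{theorem:make_eq_reachable}, so the corollary is essentially bookkeeping around the choice $\theta^{(\pm)}_{*}=0$.
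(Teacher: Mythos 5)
Your proposal is correct and follows essentially the same route as the paper: instantiate \cref{theorem:make_eq_reachable} with $\theta^{(\pm)}_{*}=0$ and read off $R^{\pm}_k(\theta)=1$ from the definition of the partial order parameters. The extra discussion of lifting the projected equilibria back to the full dynamics via the Hodge decomposition and the harmonic gauge is sound but goes beyond what the paper records, which treats the corollary as an immediate consequence.
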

\begin{proof}
Simply follows by applying \Cref{theorem:make_eq_reachable} with $\theta^{(\pm)}_* = 0$ and using the definition of simplicial order parameter \cref{eq:SOP}.
\end{proof}

\begin{figure}
    \centering
    \includegraphics[width=0.85\linewidth]{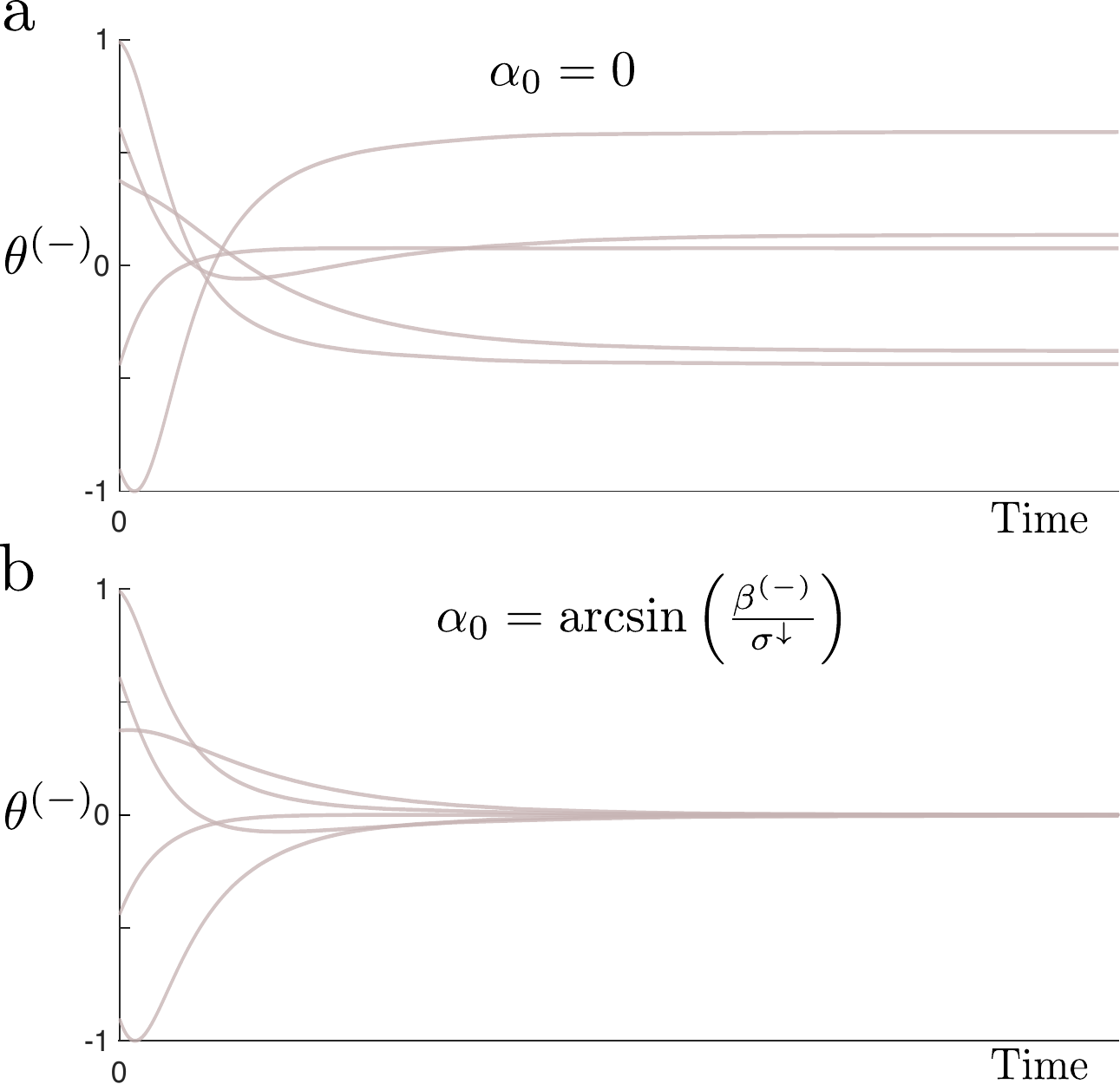}
    \caption{
    Application of \cref{cor:0reachable} to the $(-)$ projection of the dynamics on a small simplicial complex. Tuning the frustration cochain it is possible to have a stable equilibrium configuration such that $\theta^{(-)}_{eq}=0$, as shown by the bottom panel.}
    \label{fig:makeitreachable}
\end{figure}

An application of this corollary can be seen in \Cref{fig:makeitreachable}, where the configuration where all phases differences $\theta^{(-)}$ are equal to $0$ is made reachable by tuning the frustration.
The problem with this simple frustration formulation comes from the oriented nature of the simplices. Intuitively, two oscillating simplices with a common subface $a$ will see the frustration on $a$ with a different sign, depending on their relative orientations. In~\cite{arnaudon2022connecting}, this issue is addressed by lifting the phases cochains, similarly to~\cite{SchaubSirev}, into another space where both orientations are present, and by then projecting back to obtain a model which is independent on the orientation of $(k+1)$-simplices. 
We write here the resulting equation, slightly generalized from~\cite{arnaudon2022connecting} to include orientation-independent frustrations on $(k-1)$-simplices too.
We define the lift operators
\begin{align}
V^k = 
\begin{pmatrix}
I_{n_k} \\
-I_{n_k}
\end{pmatrix},\ \ 
U^k = 
\begin{pmatrix}
I_{n_k} \\
I_{n_k}
\end{pmatrix}\, , 
\end{align}
and indicate with $(A)^\pm \defeq (A\pm |A|)/2$ the projection of a matrix onto its positive or negative components.
Using these definitions, we can write the \emph{orientation-independent simplicial Sakaguchi-Kuramoto model}
\begin{align}\label{eq:sakaguchi_kuramoto_independent}
\dot{\theta} = \omega - \sigma^\uparrow \left(B^{k+1}(V^{k+1})^\top\right)^- \sin\left(V^{k+1}D^k\theta + U^{k+1}\alpha_{k+1}\right) \nonumber\\
-\sigma^\downarrow\left(D^{k-1}(V^{k-1})^\top\right)^- \sin\left(V^{k-1}B^k\theta + U^{k-1}\alpha_{k-1}\right)\, . 
\end{align}

Note that the projection of the external operator onto its negative components is nonlinear, and thus changes its image and kernel. This means that the Hodge decomposition of \Cref{eq:sakaguchi_kuramoto_independent} will lead to components that are not evolving independently but are coupled. 
\begin{proposition}\label{prop:orientation_independence} \cite{arnaudon2022connecting}
It holds that \cref{eq:sakaguchi_kuramoto_independent} is independent on the orientation of the simplices of order $k-1$ and $k+1$.
\end{proposition}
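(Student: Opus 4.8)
The plan is to reduce the proposition to the invariance of the right-hand side of \cref{eq:sakaguchi_kuramoto_independent} under reorienting a \emph{single} simplex of order $k+1$ (for the term from above) or of order $k-1$ (for the term from below). This suffices because an arbitrary change of orientation of the $(k\pm1)$-simplices is a composition of such elementary reorientations, and none of them acts on the $k$-cochains $\theta,\dot\theta,\omega$, which therefore remain literally the same vectors. Fix a $(k+1)$-simplex with index $j$ and let $S_j = I - 2e_je_j^\top$ be the diagonal sign matrix flipping the $j$-th coordinate. I would first record the elementary transformation rules: since the weight matrices are orientation-independent and commute with $S_j$, reorienting simplex $j$ acts by $D^k\mapsto S_jD^k$ and $B^{k+1}\mapsto B^{k+1}S_j$, while the frustration $\alpha_{k+1}$ --- being a quantity attached to the $(k+1)$-simplex independently of its orientation, which is exactly why it enters through the symmetric lift $U^{k+1}$ rather than the antisymmetric lift $V^{k+1}$ --- stays unchanged.

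Next I would introduce the involution $P_j$ of $\R^{2n_{k+1}}$ that exchanges the two ``orientation copies'' of simplex $j$, i.e. swaps coordinates $j$ and $n_{k+1}+j$, and verify from the block forms of $V^{k+1}$ and $U^{k+1}$ the two identities $V^{k+1}S_j = P_jV^{k+1}$ and $P_jU^{k+1}\alpha_{k+1} = U^{k+1}\alpha_{k+1}$ (the latter simply because $U^{k+1}$ places the same value $\alpha_{k+1,j}$ on both copies). From these it follows that the argument of the sine transforms as
\begin{align*}
V^{k+1}D^k\theta + U^{k+1}\alpha_{k+1}\ \longmapsto\ V^{k+1}S_jD^k\theta + U^{k+1}\alpha_{k+1} = P_j\bigl(V^{k+1}D^k\theta + U^{k+1}\alpha_{k+1}\bigr)\, ,
\end{align*}
and, since $\sin$ acts entrywise, the sine term picks up a left factor $P_j$. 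For the prefactor I would use $S_j(V^{k+1})^\top = (V^{k+1}S_j)^\top = (P_jV^{k+1})^\top = (V^{k+1})^\top P_j$ together with the fact that the entrywise negative-part operation $(\cdot)^-$ commutes with right multiplication by a permutation matrix, to get $\bigl(B^{k+1}(V^{k+1})^\top\bigr)^-\mapsto \bigl(B^{k+1}(V^{k+1})^\top\bigr)^- P_j$. Multiplying the transformed prefactor by the transformed sine term then produces the factor $P_jP_j = I$, so the whole interaction term from above is unchanged. The term from below is treated identically: reorienting a $(k-1)$-simplex $j$ sends $B^k\mapsto S_jB^k$, $D^{k-1}\mapsto D^{k-1}S_j$, fixes $\alpha_{k-1}$, and the same computation with $V^{k-1},U^{k-1}$ and the corresponding copy-swapping involution shows invariance; since the $\omega$ term is manifestly untouched, the claim follows.

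I expect the only genuinely delicate point to be pinning down the correct transformation law for the frustration cochains --- namely that $\alpha_{k\pm1}$ must be read as orientation-independent data entering through the symmetric lift $U^{k\pm1}$, so that the copy-swapping involution $P_j$ acts trivially on $U^{k\pm1}\alpha_{k\pm1}$. Once that is granted, the rest is the bookkeeping that $\sin$ and the projections $(\cdot)^-$ both commute with coordinate permutations, while the sign flips $S_j$ are converted into such permutations by the lift operators, and the two permutation factors cancel because $P_j$ is symmetric and squares to the identity.
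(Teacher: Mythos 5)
Your proposal is correct and follows essentially the same route as the paper's proof: an elementary sign-flip $S_j$ on a single $(k\pm1)$-simplex is converted by the lift $V^{k\pm1}$ into a row-swap permutation $P_j$ of the two orientation copies (the paper's $\widetilde P$), which fixes $U^{k\pm1}\alpha_{k\pm1}$, commutes with the entrywise $\sin$ and $(\cdot)^-$, and cancels via $P_j^2=I$. The only cosmetic differences are that you work out the term from above while the paper does the term from below, and that you make explicit the reduction to single-simplex flips and the commutation with the weight matrices, which the paper leaves implicit.
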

\begin{proof}
Let us focus on the interaction from below, as the other case is completely symmetrical.
A change of orientation of a $(k-1)$-simplex indexed by $i$ can be encoded in the action of a diagonal matrix $P$ such that
$P_{jj} = 1$ if $j\neq i$ and $P_{ii} = -1$. The boundary and coboundary operators in the new simplicial complex with the orientation of $i$ flipped are
$\widetilde{D}^{k-1} = D^{k-1}P,\ \widetilde{B}^k = P B^k$.
We see that the change of orientation matrix related to simplex $i$ acts on the lift matrix from the right by swapping rows $i$ and $2i$
$V^{k-1}P = \widetilde{P}V^{k-1}$, where $\widetilde{P}$ is the corresponding permutation matrix.
The interaction term from below will then become
\begin{align*} 
&\left(\widetilde{D}^{k-1}(V^{k-1})^\top\right)^- \sin\left(V^{k-1}\widetilde{B}^k\theta + U^{k-1}\alpha_{k-1}\right) \\
&= \left(D^{k-1}(V^{k-1})^\top\right)^- \widetilde{P} \sin\left(\widetilde{P} V^{k-1}B^k\theta + \widetilde{P}U^{k-1}\alpha_{k-1}\right)\\
&= \left(D^{k-1}(V^{k-1})^\top\right)^- \sin\left(V^{k-1}B^k\theta + U^{k-1}\alpha_{k-1}\right)\, ,
\end{align*}
as $\widetilde{P}^2 = I$ and $\widetilde PU^{k-1}= U^{k-1}$, being $\widetilde{P}$ a row-swap operation between two equal rows.
\end{proof}

The orientation-independent model, and its associated  \Cref{prop:orientation_independence} (proven in \cite{arnaudon2022connecting}) can be better understood by making the following observations. 
If we consider arbitrary $\widetilde\alpha_{k\pm 1} = (\underline{\alpha}_{k\pm 1},\overline{\alpha}_{k\pm 1}) \in \R^{2n_{k\pm 1}}$ which are not necessarily of the form $U^{k+1}\alpha_{k\pm1} = (\alpha_{k\pm 1},\alpha_{k\pm 1})$, we can define the more general 
\emph{orientation-selective simplicial Sakaguchi-Kuramoto model}
\begin{align}\label{eq:sakaguchi_kuramoto_selective}
\dot{\theta} = \omega - \sigma^\uparrow \left(B^{k+1}(V^{k+1})^\top\right)^- \sin\left(V^{k+1}D^k\theta + \widetilde\alpha_{k+1}\right) \nonumber\\
-\sigma^\downarrow\left(D^{k-1}(V^{k-1})^\top\right)^- \sin\left(V^{k-1}B^k\theta + \widetilde\alpha_{k-1}\right)\, . 
\end{align}
In this case, we can see that a different frustration will act on $k$-simplices depending on their relative orientation. 
In particular, the elements of $\underline{\alpha}_{k\pm 1}$ represent frustrations on $(k\pm 1)$ simplices acting only on the $k$-simplices which are incoherently oriented with them, while the last components $\overline{\alpha}_{k\pm 1}$ will act only on coherently oriented simplices. 
Hence, if these two coincide, we have orientation independence. 
Indeed, it is enough to expand the lift matrices and projection operators to see that, for example, the term of the interaction from above can be rewritten as
\begin{align}
(B^{k+1})^-\sin(D^k\theta + \underline{\alpha}_{k+1}) + (B^{k+1})^+\sin(D^k\theta - \overline{\alpha}_{k-1})\, , 
\end{align}
and notice that the nonzero elements of $(B^{k+1})^-$ contain the adjacencies between incoherently oriented $k$ and $(k+1)$-simplices, while $(B^{k+1})^+$ contains only the coherently-oriented adjacencies. 
Moreover, when $\underline{\alpha}_{k+1} = - \overline{\alpha}_{k+1}$, then we can compact the two matrices $(B^{k+1})^\pm$ and get back the simple frustration of \cref{eq:sakaguchi_kuramoto_dependent}, which can now be interpreted as inducing opposite frustrations on coherently or incoherently oriented simplices. 
Finally, it should be possible to also control the equilibrium solution via $\widetilde \alpha_{k\pm1}$, but as this system is now coupled, both projections will have to be controlled together to obtain a consistent system.

\section{Coupling the different orders with the Dirac operator}\label{section:coupling_orders}

Up to this point, we have considered topological signals of a fixed order, on nodes, edges faces, and so on.
This approach gives rise to interesting types of interactions.
However, it does not fully exploit the multi-order nature of simplicial complexes, because it involves only $k$-simplices and their upper/lower adjacencies. 
This is a direct consequence of the fact that $B^{k}B^{k+1} =0$: coupling signals between, for example, nodes and faces, cannot be done with a simple concatenation of boundary operators.
Instead, we can generalize the simplicial Sakaguchi-Kuramoto models by letting the frustration vector be the signal of a lower/higher order on the same simplicial complex.
This can be formalized through the discrete Dirac operator (also known as Gauss-Bonnet operator~\cite{anne2015gauss}), first introduced in~\cite{lloyd2016quantum} in the context of simplicial complexes, and later used for synchronization\cite{calmon2022dirac} and signal processing ~\cite{calmon2023dirac}. 

\subsection{Discrete Dirac formalism}

For a simplicial complex with simplices up to order $K$, we can gather the phases into a single vector $\Theta = (\theta_{(0)},\theta_{(1)}, \ldots, \theta_{(K)})$ and define the Dirac operator on simplicial complexes~\cite{Bianconi_2021_Dirac,lloyd2016quantum,wee2023persistent} as the square, block tridiagonal matrix 
\begin{align}
\mathbf{D} \defeq \mathrm{tridiag}([D^0,\dots, D^{K-1}], [0,\dots, 0], [B^1,\dots, B^K])\, ,
\end{align}
where $0$ indicates the matrix of the right size with all zero elements.
The Dirac operator contains all the adjacency structure of the simplicial complex and it is, by construction, the ``square root'' of the Laplacian matrix of the complex, in the sense that its square is the block diagonal matrix of the Hodge Laplacians
\begin{align}
\mathbf{L} \defeq \mathbf{D}^2 = \diagm(L^0,\dots,L^{K})\, .
\end{align}
In~\cite{calmon2022dirac} it is shown how, on a network ($K=1$), we can elegantly write the evolution of the phases of oscillating nodes and edges under the simplicial Kuramoto dynamics with the Dirac operator as
\begin{align}\label{eq:Kuramoto_dirac}   
\Dot{\Theta} = \Omega - \sigma \mathbf{D}\sin(\mathbf{D}\Theta)\, ,
\end{align}
where $\Omega = (\omega_{(0)},\omega_{(1)})$ contains the natural frequencies. 
\Cref{eq:Kuramoto_dirac}, however, only corresponds to the simplicial Kuramoto model for phases on edges of a network and does not generalize to simplicial complexes of arbitrary order. 
Indeed, for a simplicial complex with nodes, edges, and triangles ($K = 2)$, the Dirac operator is
\begin{align*}
\mathbf{D} =
\begin{pmatrix}
0 & B^1 & 0\\
D^0 & 0 & B^2\\
0 & D^1 & 0
\end{pmatrix}\, ,
\end{align*}
and the corresponding Kuramoto model becomes
\begin{align*}
\mathbf{D}\sin(\mathbf{D}\Theta) = 
\begin{pmatrix}
B^1\sin\left(D^0\theta_{(0)}+B^2\theta_{(2)}\right)\\[2ex]
D^0\sin\left(B^1\theta_{(1)}\right)+B^2\sin\left(D^1\theta_{(1)}\right)\\[2ex]
D^1\sin\left(D^0\theta_{(0)} + B^2\theta_{(2)}\right)\, ,
\end{pmatrix}
\end{align*}
which does not correspond to three uncoupled simplicial Kuramoto models on the nodes, edges, and triangles.
It is nevertheless possible to write the simplicial Kuramoto models on all orders with the Dirac operator $\mathbf{D}$ by considering its decomposition into the sum of its upper and lower block triangular matrices. 
Indeed, instead of splitting $\mathbf{D}$ by order~\cite{calmon2023dirac}, we split  by \textit{type} of interaction to obtain a direct generalization for the boundary operators appropriate for the Dirac case
\begin{align}
\mathbf{D} = \boldsymbol{d} + \boldsymbol{\delta}\, ,
\end{align}
where
\begin{align}
   \boldsymbol{\delta} = \mathrm{tridiag}([0,\dots, 0], [0,\dots, 0], [B^1,\dots, B^K])\, ,
\end{align}
and 
\begin{align}
\boldsymbol{d} = \mathrm{tridiag}([D^0,\dots, D^{K-1}], [0,\dots, 0], [0,\dots, 0])\, .
\end{align}
If seen as operators from the direct sum of the cochain spaces $C^0(\mathcal{X})\oplus \dots\oplus C^K(\mathcal{X}) \cong \R^{n_0+\dots + n_K}$ (whose inner product is given in matrix form by $\mathbf{W}^{-1} = \mathrm{
diag
}(W^{-1}_0,\dots, W^{-1}_K)$) to itself, then one is the adjoint of the other, $\boldsymbol{d} = \boldsymbol{\delta}^*$ i.e.
\begin{align*}
\boldsymbol{d}= \mathbf{W}\boldsymbol{\delta}^\top \mathbf{W}^{-1}\, .
\end{align*}
It also follows from $B^{k}B^{k+1} = 0$ that these operators are nilpotent
\begin{align}
\boldsymbol{d}^2 & = \boldsymbol{\delta}^2 = 0\, , 
\end{align}
and their products give the block diagonal matrices of up and down Laplacians
\begin{align}
\mathbf{L}_\downarrow  \defeq \boldsymbol{d}\boldsymbol{\delta}\qquad\mathrm{and}\qquad \mathbf{L}_\uparrow  \defeq \boldsymbol{\delta}\boldsymbol{d}\, .
\end{align}

As an example, for $K=2$, we have
\begin{align*}
\boldsymbol{\delta}
+ \boldsymbol{d} = 
\begin{pmatrix}
0 & B^1 & 0\\
0 & 0 & B^2\\
0 & 0 & 0
\end{pmatrix} +
\begin{pmatrix}
0 & 0 & 0\\
D^0 & 0 & 0\\
0 & D^1 & 0
\end{pmatrix}\, ,
\end{align*}
hence the two Laplacians are
\begin{align*}
\boldsymbol{L}_\uparrow = 
\begin{pmatrix}
    L^0_{\uparrow} & 0 & 0\\
    0 & L^1_{\uparrow} & 0\\
    0 & 0 & 0
\end{pmatrix},\, \boldsymbol{L}_\downarrow = 
\begin{pmatrix}
    0 & 0 & 0\\
    0 & L^1_{\downarrow} & 0\\
    0 & 0 & L^2_{\downarrow}
\end{pmatrix}\, .
\end{align*}
Moreover, we have that
\begin{align}
\mathbf{L} = \mathbf{D}^2 = (\boldsymbol{d} + \boldsymbol{\delta})^2 = \boldsymbol{d}\boldsymbol{\delta} + \boldsymbol{\delta}\boldsymbol{d}\, , 
\end{align}
which suggests an elegant way to write the evolution of the phases of all simplices in the complex under the simplicial Kuramoto dynamics as
\begin{align}\label{eq:dirac_synch}
    \dot \Theta = \Omega - \sigma^\uparrow\boldsymbol{\delta} \sin\left (\boldsymbol{d}\Theta\right) - \sigma^\downarrow\boldsymbol{d} \sin\left (\boldsymbol{\delta}\Theta\right)\, , 
\end{align}
where $\boldsymbol{\delta}\sin(\boldsymbol{d}\Theta)$ contains all the interaction terms from above and $\boldsymbol{d}\sin(\boldsymbol{\delta}\Theta)$ all the ones from below. 
It is easy to check that on a network ($K = 1$) we recover $\boldsymbol{D}\sin(\boldsymbol{D}\Theta)$ as an interaction term. 
\Cref{eq:dirac_synch}, moreover, has the same form of the simplicial Kuramoto model of \cref{eq:simplicial_kuramoto}, and thus can be written as a gradient flow
\begin{align}  \label{eq:super-gradient-flow}
\Dot{\Theta} = \Omega + C\mathbf{W}\nabla_\Theta \mathbf{R}(\Theta)\, ,
\end{align}
with the Dirac order parameter defined as
\begin{align}
\mathbf{R}(\Theta) = \frac{1}{C}\left(\ones^\top\mathbf{W}^{-1}\cos(\boldsymbol{d}\Theta) +  \ones^\top\mathbf{W}^{-1}\cos(\boldsymbol{\delta}\Theta)\right)\, , 
\end{align}
with normalization constant $C = \ones^\top \mathbf{W}^{-1}\ones$.
It can also be written as 
\begin{align}
    \boldsymbol{R}(\Theta) = \frac{1}{C}\sum_{k=1}^K C_kR_k(\theta_{(k)})\, , 
\end{align}
or in terms of the partial Dirac order parameters
\begin{subequations}
\begin{align} \label{def:dirac_partial_order_parameters_2}
\boldsymbol R^-(\theta) &\defeq \frac{1}{C^{-}}\ones^\top W^{-1}\cos\left(\boldsymbol{\delta}\Theta\right)\\ 
\qquad\boldsymbol R^+(\Theta) &\defeq \frac{1}{C^{+}}\ones^\top W^{-1}\cos\left(\boldsymbol{d}\Theta\right)\, ,
\end{align}
\end{subequations}
where the normalization constants $C^{\pm} = \ones^\top \boldsymbol{W}^{-1}\ones$ as
\begin{align}
C \boldsymbol{R}(\Theta) = C^+ \boldsymbol{R}^+(\Theta) + C^- \boldsymbol{R}^-(\Theta) \, . 
\end{align}

Naturally, we also have the Hodge decomposition \cref{eq:hodge_decomposition} on all orders
\begin{align}
    \bigoplus_{k=1}^{K} C^k(\mathcal X) = \mathrm{Im} \, \boldsymbol{\delta} \oplus \mathrm{ker}\, \boldsymbol{L} \oplus \mathrm{Im}\, \boldsymbol{d}\, .
\end{align}
In \cref{eq:dirac_synch}, however, the phases of the simplices of different orders evolve independently of one another, as \Cref{eq:dirac_synch} is just a formal reformulation to include all possible simplicial Kuramoto models that exist on a simplicial complex of order $K$ into a single formula. 
The advantage of this formulation is that it provides a general mathematical framework to couple the dynamics across different orders.

\subsection{Explosive Dirac Kuramoto dynamics}

To couple the dynamics across orders, \cite{ghorbanchian2021higher} proposed to multiply the interaction with the factor depending on the order parameters \cref{eq:nonnegative_order} of the dynamics above and below, as a Dirac generalization of the earlier explosive model of~\cite{millan2020explosive}. 
For a network ($K=1$), this coupling was made into  a so-called \emph{Nodes-Links} (NL) model~\cite{ghorbanchian2021higher}
\begin{align}
\begin{cases}
\Dot{\theta}_{(0)} = \omega_{(0)} - R^{[-]}_1(\theta_{(1)}) B^1\sin(D^0 \theta_{(0)}) \\
\Dot{\theta}_{(1)} = \omega_{(1)} - R_0^{[+]}(\theta_{(0)})D^0\sin(B^1\theta_{(1)})\, , 
\end{cases}       
\label{eq:dirac_NL}
\end{align}
and, with non-oscillating triangles, into the \emph{Nodes-Links-Triangles} (NLT) model~\cite{ghorbanchian2021higher}
\begin{align}
\begin{cases}
\Dot{\theta}_{(0)} = \omega_{(0)} - R^{[-]}_1(\theta_{(1)}) B^1\sin(D^0 \theta_{(0)}) \\
\begin{aligned}
\Dot{\theta}_{(1)} = \omega_{(1)} &- R_0^{[+]}(\theta_{(0)})R_1^{[+]}(\theta_{(1)})D^0\sin(B^1\theta_{(1)})\\[5pt]
& - R_1^{[-]}(\theta_{(1)})B^2\sin(D^1\theta_{(1)})\, .
\end{aligned}
\end{cases}   
\label{eq:dirac_NLT}
\end{align}

Inspired by these formulations, we can write order-coupled models in the Dirac formalism. As we did in \cref{subsection:explosive} with the explosive model, we can write the following nonlinear gradient flow
\begin{align} \label{eq:nonlinear-super-gradient-flow}
\Dot{\Theta} = \Omega + C^+C^-\mathbf{W}\nabla_\Theta\left ( \mathbf{R}^+\mathbf{R}^-\right)\, ,
\end{align}
where the coupling is global as it depends on all simplices of all orders. 
Alternatively, we can have a coupling across only adjacent orders with
\begin{align} \label{eq:local-nonlinear-super-gradient-flow}
\Dot{\Theta} = \Omega + \mathbf{W}\nabla_\Theta \left(\sum_{k=1}^K C^+_{k-1}C^-_{k} R_{k-1}^+R_k^-\right)\, ,
\end{align}
which is such that, for all orders $k$, the interaction term from below of order $k$ will depend on the order parameter from above of order $k-1$ and vice versa.
We leave the analysis of these models for future works.

\begin{figure*}
    \centering    
    \includegraphics[width=0.7\textwidth]{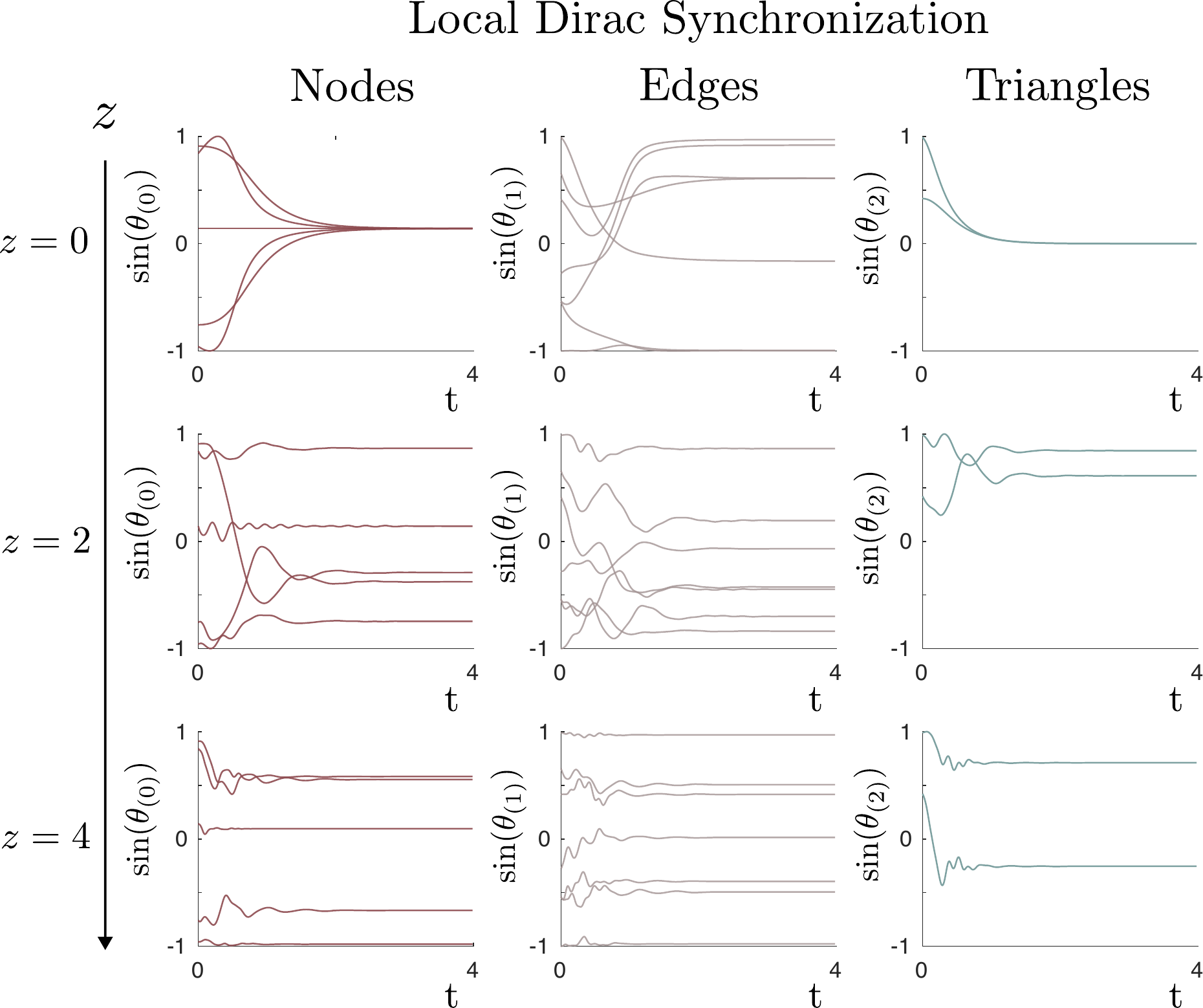}
    \caption{Local Dirac Synchronization (\cref{eq:dirac_nlt}) on nodes, edges, and triangles of the small simplicial complex depicted in \cref{fig:model_comparison}a, with $\Omega = 0$. We simulate the dynamics for different values of $z$ and see how the coupling it induces disrupts synchronization and results in the emergence of damped oscillations.
    }
    \label{fig:lds}
\end{figure*}

\subsection{Frustrated Dirac Kuramoto model}
Just as in~\cite{calmon2022dirac,calmon2023local}, one can instead consider a local coupling with the half super-Laplacian matrices $\mathbf{L}_\uparrow,\mathbf{L}_\downarrow$, by writing
\begin{align}\label{eq:LDS}
\begin{split} 
 \Dot{\Theta} = \Omega &- \sigma^\downarrow\boldsymbol{d}\sin\left(\boldsymbol{\delta}\Theta-z\gamma \mathbf{L}_\uparrow\Theta\right) \\
 &-\sigma^\uparrow\boldsymbol{\delta}\sin\left(\boldsymbol{d}\Theta-z\gamma \mathbf{L}_\downarrow\Theta\right)\, ,
 \end{split}
\end{align}
where $z>0$ regulates the strength of the local cross-order coupling and $\gamma$ is the block-diagonal matrix~\cite{wee2023persistent}
\begin{align}
\gamma = \mathrm{diag}(I_{n_0},-I_{n_1},\dots,(-1)^K I_{n_K})\, ,
\end{align}
which anticommutes with both $\boldsymbol{d}$ and $\boldsymbol{\delta}$. 
This choice of $\gamma$ comes from the fact that for the linearized dynamics (with unit coupling strengths)
$\Dot{\Theta} = \Omega - (\mathbf{D}^2 + \gamma\mathbf{D}^3)\Theta$, 
the matrix $-(\mathbf{D}^2 + \gamma\mathbf{D}^3)$ can be shown to have complex eigenvalues with non-positive real part, resulting in the emergence of damped oscillations (see~\cite[Appendix A]{calmon2023local}), as depicted in \cref{fig:lds}.
\Cref{eq:LDS}, named \emph{local Dirac synchronization}~\cite{calmon2023local}, displays explosive synchronization transitions and stable hysteresis loops.
As an example, for $K=2$, we can write \cref{eq:LDS} explicitly
\begin{align}\label{eq:dirac_nlt}
\begin{cases}
\Dot{\theta}_{(0)} = \omega_{(0)} - \sigma^\uparrow B^1\sin(D^0\theta_{(0)}+L^1_\downarrow\theta_{(1)})\\[6pt]
\begin{aligned}
\Dot{\theta}_{(1)} = \omega_{(1)} - \sigma^\downarrow D^0\sin(B^1\theta_{(1)}-L^0_\uparrow\theta_{(0)})  \\[4pt]
-\sigma^\uparrow B^2\sin(D^1\theta_{(1)}-L^2_\downarrow \theta_{(2)})
\end{aligned}\\[10pt]
\Dot{\theta}_{(2)} = \omega_{(2)} - \sigma^\downarrow D^1\sin(B^2\theta_{(2)} +L^1_\uparrow\theta_{(1)})
\end{cases}\, .
\end{align}
From a more general point of view, if we now define a frustration $A_\updownarrow=(\alpha^\updownarrow_{(0)}, \alpha^\updownarrow_{(1)}, \ldots, \alpha^\updownarrow_{(K)})$ (possibly dependent on $\Theta$) we can generalize the single-order frustrated simplicial Kuramoto model (\cref{eq:sakaguchi_kuramoto_dependent}) as
\begin{align}
\Dot{\Theta} &= \Omega - \boldsymbol{d}\sin(\boldsymbol{\delta}\Theta + A_\uparrow) - \boldsymbol{\delta}\sin(\boldsymbol{d}\Theta + A_\downarrow)\, .
\label{eq:dirac_gen_non_OI}
\end{align}
In addition, if we introduce the \emph{total} lift operators $\mathbf{V} = \mathrm{diag}(V^0, V^1, \ldots V^K)$, $\mathbf{U} = \mathrm{diag}(U^0, U^1, \ldots U^K)$, we have an orientation independent version, akin to \cref{eq:sakaguchi_kuramoto_independent} but in the Dirac framework as
\begin{align}
\Dot{\Theta} = \Omega &- \left(\boldsymbol{d}\mathbf{V}^\top\right)^-\sin(\mathbf{V}\boldsymbol{\delta}\Theta + \mathbf{U}A_\uparrow) \nonumber\\
 &\, - \left(\boldsymbol{\delta}\mathbf{V}^\top\right)^-\sin(\mathbf{V}\boldsymbol{d}\Theta + \mathbf{U}A_\downarrow)\, .
 \label{eq:dirac_gen_OI}
\end{align}
In contrast to \cref{eq:dirac_gen_non_OI}, being orientation independent, this system couples both the simplicial orders and the Hodge subspaces.

It follows that the local Dirac synchronization dynamics of \cref{eq:LDS} is the application of the simple $\Theta$-dependent frustrations $A_\uparrow = -z\gamma \mathbf{L}_\uparrow\Theta,\, A_\downarrow = -z\gamma\mathbf{L}_\downarrow\Theta$. This fact, together with the gradient flow formulation of the Dirac model, gives us a common framework to build and study multiple variants of the model. It would be natural, for example, to consider an analogous model where the local frustration is introduced in an orientation-independent fashion.
Finally, we did not consider here possible extensions of the results on equilibrium solutions but left it for future works.
It should be possible to extend some of the theorems of this work due to the similar structure between a single simplicial Kuramoto model and the Dirac-based formulation.

\section{Application to functional connectivity reconstruction} \label{section:application}

Oscillator models have been extensively used in neuroscience as they offer a powerful and flexible framework for studying simplified versions of the dynamics of neuronal or brain networks~\cite{cumin2007generalising,breakspear2010generative,schmidt2015kuramoto,pope2021modular,pope2023co}. 
By treating neurons or brain regions as oscillators that interact with each other, these models can capture significant features of brain activity observed in experiments, such as the presence of rhythms and oscillations. 
While oscillator models have been widely used to study brain dynamics, it is important to note that most of these have focused on pairwise interactions between neurons or brain regions. 
This is due in part to the fact that pairwise interactions are simpler to model or analyze and that anatomically it is more realistic to consider the dynamics taking place on networks rather than higher-order systems. 
However, recent studies have suggested that higher-order interactions may also play an important role in brain dynamics, both functionally~\cite{schneidman2006weak,yu2011higher} and structurally~\cite{sun2023dynamic,ghorbanchian2021higher}. 
These interactions involve three or more elements and can give rise to emergent phenomena that cannot be explained by pairwise interactions alone.
Given the potential importance of these higher-order interactions, it is natural to apply models of higher-order synchronization to brain data. 
These models might offer a more comprehensive framework for studying the dynamics of large-scale brain networks and have the potential to give us new insights into the mechanisms underlying cognition and behavior.

To test this hypothesis, we study how well simplicial Kuramoto models of various orders could reproduce brain correlation patterns. 
Following the methodology proposed in~\cite{pope2021modular,pope2023co}, we run simulations of $5$ different variants of the simplicial Kuramoto model on a real structural connectome, the network that describes the connectivity structures between regions of the human cerebral cortex, and we investigate how well each model can reproduce the resting-state functional activity experimentally measured. 

\begin{figure*}[htp]
    \centering
    \includegraphics[width=\linewidth]{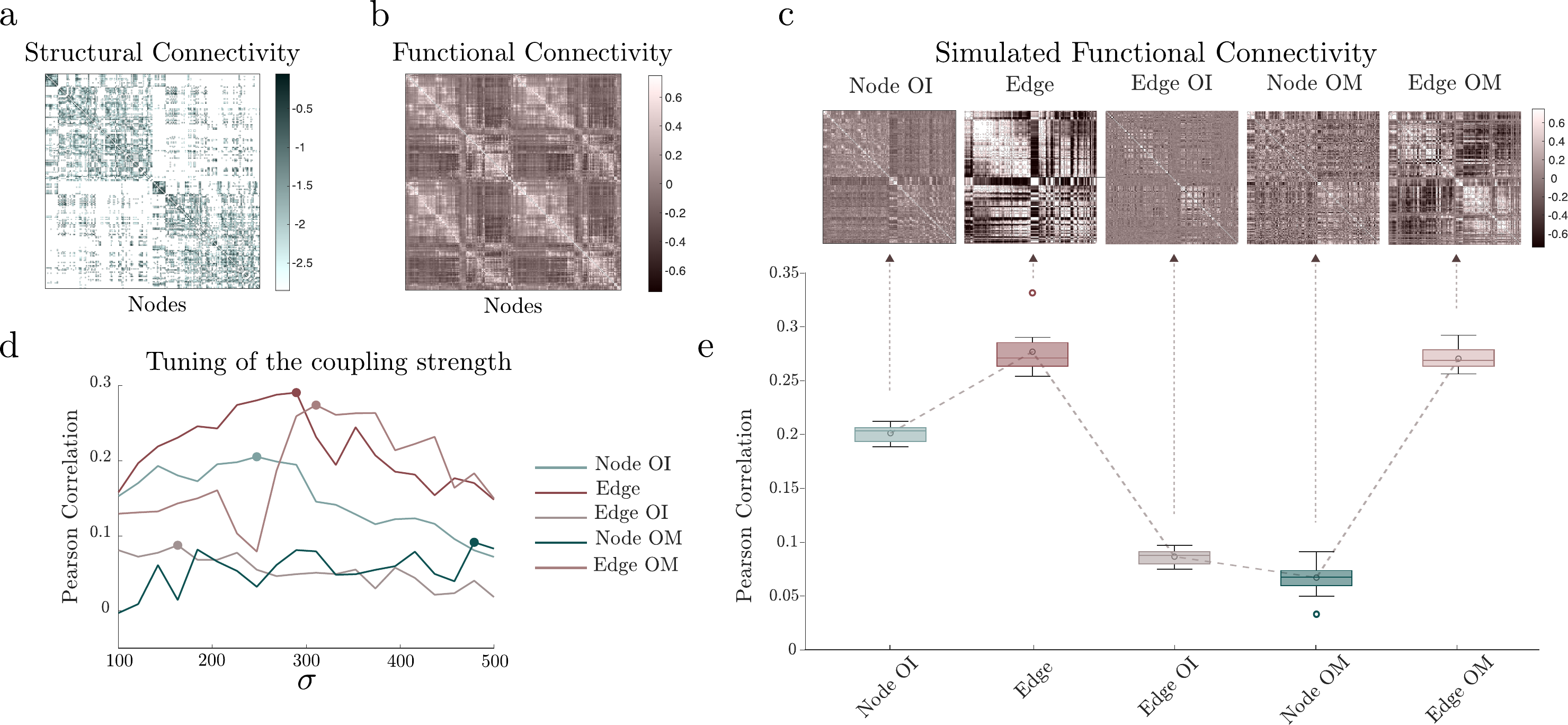}
    \caption{
    {\bf a.} Structural connectivity matrix representing the weighted network onto which we simulate the dynamics. The color represents the logarithm of the weight. 
    {\bf b.} The empirical functional connectivity matrix.
    {\bf c.} We simulate $5$ different variants of the simplicial Kuramoto model and compute the correlation matrices of their post-processed trajectories as simulated FC matrices.  
    {\bf d.} The coupling strength $\sigma$ is tuned for each model by scanning $20$ values between $100$ and $500$. 
    {\bf e.} Pearson correlations between the empirical FC and the simulated FCs for the $6$ models, over $10$ simulations. 
    }
    \label{fig:application}
\end{figure*}

The structural connectome is encoded in a group-averaged weighted structural connectivity matrix (\cref{fig:application}a), obtained by diffusion imaging and tractography, by parcellating the brain into $N=200$ regions, which here take the role of nodes connected by $M=6040$ weighted edges. From the network adjacency matrix, we derive the incidence matrix $B_1\in\sset{-1,1}^{N\times M}$ by choosing randomly edges orientations.
To achieve consistency with~\cite{pope2021modular}, the connection weights $K_1,\dots,K_{M}$ are included, after being inverted, as weights on the edges $W_1 = \mathrm{diag}(\frac{1}{K_1},\dots,\frac{1}{K_{M}})$ and the tract lengths are encoded in an edge frustration vector $\alpha\in\R^{M}$. The natural frequencies for both node-based and edge-based models are sampled independently from a Gaussian distribution with a mean of $2\pi\, 40$ and a standard deviation of $2\pi\, 0.1$.
We compare the following five models:
\begin{enumerate}
    \item \emph{Orientation independent node Kuramoto-Sakaguchi model (Node OI)}. This is, by construction, the classical Kuramoto-Sakaguchi model of \cref{eq:kuramoto_sakaguchi_nodes}, used in \cite{pope2021modular}
    \begin{align}
    \Dot{\theta}_i = \omega_i - \sigma\sum_{j=1}^{200} K_{ij}\sin(\theta_i - \theta_j + \alpha_{ij})\, .
    \end{align}
    \item \emph{Edge Simplicial Kuramoto (Edge)}. The simplest possible simplicial Kuramoto model defined on the edges
    \begin{align}
    \Dot{\theta} = \omega - \sigma B_1^\top\sin(B_1 W_1^{-1}\theta)\, .
    \end{align}
    \item The \emph{Orientation Independent Edge Sakaguchi-Kuramoto (Edge OI)}
    \begin{align}
    \Dot{\theta} = \omega - \sigma \left(B_1^\top (V^0)^\top\right)^- \sin\left(V^0 B_1 W_1^{-1}\theta - U^0 B_1 W_1^{-1}\alpha\right)\, .
    \end{align}
 \end{enumerate}
The explosive simplicial Kuramoto model (\cref{eq:adaptively_coupled}) cannot be directly used as it requires nodes, edges, and triangles for its interaction terms to be nonzero. Triangles are not present in the structural connectivity network and thus, to avoid injecting arbitrary structure into the analysis, we will not use it. As a proxy for its behavior, however, we propose the similar \emph{order-modulated model} (OM), derived by multiplying $\sigma$ by the order parameter. In other words, the OM model is the gradient flow of the square order parameter. 
\begin{align}   
\Dot{\theta} = \omega + \frac{1}{2} C_kW_k\nabla_{\theta} R^2_k(\theta)\, .
\end{align}
We simulate two different OM models.
\begin{enumerate}
    \setcounter{enumi}{4}
    \item The \emph{Order-modulated node Kuramoto-Sakaguchi (Node OM)} 
    \begin{align}
    \Dot{\theta} = \omega - \sigma R_0(\theta) (B_1W_1^{-1} (V^1)^\top)^- \sin(V^1B_1^\top\theta - U^1\alpha)\, .
    \end{align}
    \item The \emph{Order-modulated edge simplicial Kuramoto (Edge OM)}
    \begin{align}
    \Dot{\theta} = \omega - \sigma R_1(\theta) B_1^\top \sin(B_1 W_{1}^{-1}\theta)\, .
    \end{align}
\end{enumerate}

Models 2, 3, and 5 are defined on the edges of the network. 
Given that we want to simulate a node-wise functional connectivity matrix, we consider the projections of their phases onto the nodes $\theta^{(-)}$ to get node-wise trajectories. For this reason, notice that it is not necessary to numerically solve all the $M$ equations on the edges, but it is enough to directly integrate the projected dynamics.

\subsection{Simulations} 
Following~\cite{pope2021modular}, the simulations are run for a total of $T=812$ seconds with a time resolution of $\delta t=1$ms (using MATLAB ode45), and the first $20$ seconds are discarded to allow the dynamics to reach stationarity. 
We then take the  trajectories, convert them into downsampled BOLD signals, filter them with a lowpass cutoff of $c=0.25$Hz, and use them to compute $N\times N$ pairwise Pearson correlation matrices. 
These simulated functional connectivity matrices (\cref{fig:application}b) are then compared to the experimental resting-state functional connectivity (FC) matrix (\cref{fig:application}c) using Pearson correlation (by correlating the vectorized upper triangular matrix). 
We repeat this process multiple times for each model by varying the coupling strength in order to tune it. 
We scan $20$ $\sigma$ values ranging from $100$ to $500$, and select the optimal one w.r.t Pearson correlation (\cref{fig:application}e). 
Given the optimal coupling strength for each model, we then perform $10$ simulations for each one of them with different random starting phases and natural frequencies and confront them with the empirical FC matrix. 
The results are shown in \cref{fig:application}d, where it is easy to see how the two non-frustrated edge-based models outperform the node ones, achieving an average of $r=0.27$ correlation against the $r=0.2$ of the standard node Sakaguchi-Kuramoto.
The result is statistically confirmed by an ANOVA test which achieves p-values lower than $10^{-3}$. The effect size against the node Kuramoto model is $0.0757$ for the edge model and $0.0692$ for the edge OM.

Our findings suggest that an edge-based description of the dynamics might provide a better fit to the experimental data, both outperforming the node-based models and without resorting to additional parameters or internal mechanisms, as for example edge flickering~\cite{pope2023co} (which was shown to obtain a slightly lower correlation than our edge Kuramoto model).  
In fact, arguably edge-based simplicial Kuramoto models might provide a better fit to the observed FC correlation structure exactly because the variables are defined on the connections that link different nodes together, rather than on the nodes themselves.
That is, the observed activity of brain regions might be better explained as the result of the information integration taking place via the structural fibers linking the regions, rather than by looking at the brain regions in themselves~\cite{gidon2020dendritic}, and display interesting parallels with neural frequency mixing behaviours~\cite{haufler2019detection,luff2023neuron}. 
Naturally, these results are preliminary and intended as a simple demonstration of the potential of simplicial (and more generally, higher-order) oscillator models in the context of computational neurobiological models. 

\begin{table*}[htpb]
\begin{tabular}{lcccccccc}
    \toprule
    & \multicolumn{2}{c}{Standard} & \multicolumn{2}{c}{Frustrated} & \multicolumn{2}{c}{OI Frustrated}& \multicolumn{2}{c}{Explosive}  \\ \cmidrule(lr){2-3}\cmidrule(lr){4-5}\cmidrule(lr){6-7}\cmidrule(lr){8-9}
    & Single order & Dirac & Single order & Dirac & Single order & Dirac & Single order & Dirac \\ \midrule
    Equation & \eqref{eq:simplicial_kuramoto} & \eqref{eq:dirac_synch} & \eqref{eq:sakaguchi_kuramoto_dependent}  & \eqref{eq:dirac_gen_non_OI}/\eqref{eq:LDS} & \eqref{eq:sakaguchi_kuramoto_independent} & \eqref{eq:dirac_gen_OI} & \eqref{eq:adaptively_coupled} &(\ref{eq:nonlinear-super-gradient-flow},\ref{eq:local-nonlinear-super-gradient-flow}) \\
    Hodge coupling & no & no & no & no/yes & yes & yes & yes & yes \\
    Order Coupling & - & no & - & no/yes & - & no & - & yes \\
    \bottomrule
\end{tabular}
\caption{Taxonomy of the simplicial Kuramoto models presented in this work. OI stands for orientation independent.}
\label{tab:taxonomy}
\end{table*}

\section{Summary and Outlook}

Simplicial Kuramoto models, where oscillators are defined on simplices rather than on nodes, have grown in numbers, yielding a wide variety of different and interesting dynamics. 
Here, we have attempted to provide a more unified view, akin to a taxonomy, of this simplicial Kuramoto zoo. 
Our description has relied heavily on topology and discrete differential geometry because the simplicial structure of these models naturally lends itself to a topological and geometrical language, including boundary operators and the Hodge Laplacian. 

We have shown that these models can be divided into three main categories:
\begin{itemize}
    \item ``simple'' models, which can all be rewritten in a single framework: that of gradient flows, encoded in \cref{eq:single-order-gradient-flow} for a single order, and \cref{eq:super-gradient-flow} for all orders. These models do not have couplings across orders or frustration.
    \item ``Hodge-coupled'' models, in which different Hodge components of the dynamics are coupled. 
    These include explosive models, which can be rewritten in a similar gradient flow framework (\cref{eq:explosive-gradient-flow}) as the simple models, but also include other models that require additional ingredients, in our case two flavors of frustration (\cref{eq:sakaguchi_kuramoto_independent}). 
    \item ``order-coupled'' (Dirac) models, in which oscillators are coupled across different orders with or without frustrations (\cref{eq:LDS}).
\end{itemize}
This unified view in terms of just two ingredients---gradient flows and frustrations---compresses this model taxonomy to a lower-dimensional space of models and has allowed us to describe the general properties of these models. 

A first example is the possibility to derive a set of bounds on the value of the coupling strength that are necessary or sufficient to obtain synchronization for ``simple'' models, thanks to their simplicity.
This gave us a general description of the space of equilibria and their relative degree of reachability as a function of the coupling strength. 
Additionally, using this taxonomy, it is possible to investigate when two models are genuinely different or not: we demonstrated that the simple simplicial Kuramoto model is strictly equivalent to the standard Kuramoto model on (pairwise) networks if the underlying simplicial complex structure is manifold-like (\Cref{thm:simplicial-kuramoto-manifold}). 
More specifically, by mapping the oscillators defined on simplices of order $k$ to nodes in an effective (pairwise) network, the effective dynamics reduce to a standard Kuramoto model. 
This is a powerful result that bridges the simplicial models and the well-known standard Kuramoto model and shows that the simplicial models are of most interest on non-manifold-like simplicial complexes.

More generally, we showed that the simplicial models can be related to another important class of higher-order Kuramoto models: those where oscillators are defined only on nodes and interact across hypergraphs~\cite{skardal2019abrupt,lucas2020multiorder}. 
Indeed, a simplicial model on a generic complex can be rewritten as a node-Kuramoto model with group interactions occurring on an effective dual hypergraph. 
There is one important difference, however: the models obtained this way do not have the properties usually desired for models defined on nodes, that is, the coupling functions do not vanish when all phases are equal.  
As a consequence, contrary to these other models, the standard 1-cluster synchronization solution is not guaranteed to exist and the equations are not invariant under a uniform phase shift. 
Although an equivalent effective hypergraph can be found to define oscillators on nodes, the simplicial models are more naturally described in the formalism of discrete differential geometry by the boundary operators of the simplicial complex. 
It is an interesting future direction to investigate to what degree and exactly under what conditions these two classes of models can be related to each other. 
Furthermore, the formalism and results presented here, of course, refer to the case of synchronization, but we expect them to be rather straight-forwardly generalizable to more general dynamics, such as consensus~\cite{neuhauser2021multibody,neuhauser2021consensus} or diffusion~\cite{SchaubSirev,schaub2018flow}, and structures, such as cell complexes~\cite{carletti2023global}. 

With respect to applications, we provided a simple example of application to the reconstruction of brain functional connectivity from a structural connectome, a common and still open task in computational neuroscience~\cite{einevoll2019scientific,deco2011emerging}, finding that vanilla models of simplicial edge Kuramoto models are competitive or even outperform more complex node-based models~\cite{pope2023co}. 
We suspect that this might be related to the fact that, when edge phases are projected down to node dynamics, they behave akin to time-evolving temporal delays across node signals, an element that has been recognized as crucial in brain dynamical simulations~\cite{petkoski2019transmission}.
Similar considerations however are relevant also for many other types of real-world complex systems, such as network traffic~\cite{petri2013entangled, lo2001dynamic, levin2016paradoxes} and power grid balancing~\cite{taher2019enhancing,he2009design}. 

Overall, we believe that the proposed framework provides a starting point to shed new light and further research on a number of interlaced theoretical and practical topics across the broader community of complex dynamical systems. 

\subsection*{Acknowledgement}
A.A. was supported by funding to the Blue Brain Project, a research center of the École polytechnique fédérale de Lausanne (EPFL), from the Swiss government’s ETH Board of the Swiss Federal Institutes of Technology. R.P. acknowledges funding from the Deutsche Forschungsgemeinschaft (DFG, German Research Foundation) Project-ID 424778381-TRR 295. \\

\subsection*{Code availability}
An open-source code to numerically solve the presented models is available at \url{https://github.com/arnaudon/simplicial-kuramoto}.

\bibliography{references}

\clearpage

\onecolumngrid

\appendix
\section{Kuramoto model expressed with the boundary matrices}\label{section:kuramoto_with_boundary}
We prove here how the Kuramoto model on a graph with $N$ nodes and set of edges $\mathcal{E}$,
$$
\Dot{\theta}_i = \omega_i - \sigma\sum_{j=1}^N A_{ij}\sin(\theta_i-\theta_j) ,
$$
can be rewritten using the boundary matrices as
$$
\Dot{\theta} = \omega - \sigma B_1\sin(B_1^\top\theta) .
$$
First, compute the action of $B_1^\top$ on the phases vector. For any edge $\epsilon$
$$
(B_1^\top\theta)_{\epsilon}= \sum_{i=1}^N (B_1^\top)_{\epsilon i}\theta_i = \sum_{i=1}^N (B_1)_{i\epsilon}\theta_i = \theta_{\mathbf{h}(\epsilon)} - \theta_{\mathbf{t}(\epsilon)},
$$
where $\mathbf{h}(\epsilon),\mathbf{t}(\epsilon)$ give respectively the head and tail node of edge $\epsilon$. It follows that, for any node $i$,
\begin{align*}
\left[B_1\sin(B_1^\top\theta)\right]_i &= \sum_{\epsilon\in\mathcal{E}} (B_1)_{i\epsilon} \sin(B_1^\top\theta)_\epsilon = \sum_{\epsilon\in\mathcal{E}} (B_1)_{i\epsilon}\sin(\theta_{\mathbf{h}(\epsilon)} - \theta_{\mathbf{t}(\epsilon)}) \\
&= \sum_{\epsilon:\mathbf{h}(\epsilon) = i} \sin(\theta_i - \theta_{\mathbf{t}(\epsilon)}) - \sum_{\epsilon:\mathbf{t}(\epsilon) = i}\sin(\theta_{\mathbf{h}(\epsilon)}-\theta_i)\\
&= \sum_{\epsilon:\mathbf{h}(\epsilon) = i} \sin(\theta_i - \theta_{\mathbf{t}(\epsilon)}) + \sum_{\epsilon:\mathbf{t}(\epsilon) = i}\sin(\theta_i-\theta_{\mathbf{h}(\epsilon)}) \\
&= \sum_{j\in\mathcal{N}(i)} \sin(\theta_i-\theta_j) = \sum_{j=1}^n A_{ij}\sin(\theta_i-\theta_j),
\end{align*}
where $\mathcal{N}(i)$ is the neighborhood of node $i$.

\section{Proof of Theorem \ref{theorem:sigma_fp}}\label{appendix:proof}
For simplicity, we prove the result only for the $(-)$ projection. The $(+)$ case can be easily recovered by replacing $L^{k-1}_{\uparrow}$ with $L^{k+1}_{\downarrow}$, $\omega^{(-)}$ with $\omega^{(+)}$ and $\sigma^\downarrow$ with $\sigma^\uparrow$.\\
The idea of the proof inspired by \cite{Stability_kuramoto_model} is to find the dynamics of the coefficients of $\theta^{(-)}$ w.r.t to a basis of the subspace $\Ima L^{k-1}_{\uparrow}$, rewrite its equilibrium equation as a fixed-point equation and then find conditions to apply Brouwer's fixed-point theorem.

First, as $L^{k-1}_{\uparrow}$ is the matrix representation of a self-adjoint, positive semidefinite operator, we can consider its eigendecomposition
\begin{align}
    L^{k-1}_{\uparrow} = V\Lambda V^*\, ,
\end{align}
where $V$ is a unitary matrix ($V^*V = V^*V = I_{n_{k-1}}$) and $\Lambda$ is diagonal with non-negative elements. 
Recall that $V^* = W_{k-1} V W^{-1}_{k-1}$, which will later ensure that the inner product on the eigenspace is compatible with the original inner product from $W_{k-1}$.
Let us assume that the zero eigenvalues of $\Lambda$ are the last ones in the diagonal, so that
\begin{align*}
    \Lambda = \mathrm{diag}(\lambda_1,  \dots, \lambda_r, 0, \dots,  0)\, , 
\end{align*}
where $r = \rank(L^{k-1}_{\uparrow})$. 

The columns of $V$ provide a basis of $\R^{n_{k-1}}$. We want however to restrict ourselves to the subspace $\Ima L^{k-1}_{\uparrow}$. To do that, we drop the columns associated with zero eigenvalues (which span $\ker L^{k-1}_{\uparrow}$) and consider the compact eigendecomposition
\begin{align}
    L^{k-1}_{\uparrow} = \widetilde{V}\widetilde{\Lambda}\widetilde{V^*}\, ,
\end{align}
$\widetilde{V}$ consists of the first  $r$ columns of $V$, $\widetilde{V^*}$ is made by the first $r$ rows of $V^*$ and $\widetilde{\Lambda} = \diagm(\lambda_1,\dots,\lambda_r)$. The columns of $\widetilde{V}$ are a basis of the reachable subspace $\Ima L_{\uparrow}^{k-1}$. 

In order to carry out the proof, we need to show that $\widetilde{V^*}$ is the adjoint matrix to $\widetilde{V}$ with respect to a particular choice of ``natural'' inner product on the space of coefficients $\R^r$. To do that we need some preliminary definitions and results. We first define the \textit{truncation matrix}
\begin{equation*}
\begin{cases}
I_{a,b} = \begin{pmatrix} I_b \\ 0_{a-b,b}\end{pmatrix} \text{ if } a>b\\[3ex]
I_{a,b} = I_{b,a}^\top \text{ if } a<b\\[1ex]
I_{a,b} = I_a \text{ if } a=b
\end{cases}
\end{equation*}
which, truncates the columns or rows of a matrix when multiplied respectively on the right or left. It follows that
\begin{align}
    \widetilde{V} = VI_{n_{k-1},r}\qquad \mathrm{and} \qquad \ \widetilde{V^*} = I_{r,n_{k-1}}V^*\, ,
\end{align}
from which we have
\begin{Lemma}
$\widetilde{V^*}\widetilde{V} = I_r$.
\end{Lemma}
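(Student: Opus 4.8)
The plan is to prove $\widetilde{V^*}\widetilde{V} = I_r$ by unwinding the definitions of $\widetilde{V}$ and $\widetilde{V^*}$ in terms of the truncation matrices and exploiting the unitarity $V^*V = I_{n_{k-1}}$ of the full eigenbasis. The computation is short: substitute $\widetilde{V} = VI_{n_{k-1},r}$ and $\widetilde{V^*} = I_{r,n_{k-1}}V^*$, then regroup the product so that the factor $V^*V$ appears in the middle.

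Concretely, I would write
\begin{align*}
\widetilde{V^*}\widetilde{V} = \left(I_{r,n_{k-1}}V^*\right)\left(VI_{n_{k-1},r}\right) = I_{r,n_{k-1}}\left(V^*V\right)I_{n_{k-1},r} = I_{r,n_{k-1}}I_{n_{k-1}}I_{n_{k-1},r} = I_{r,n_{k-1}}I_{n_{k-1},r}\, ,
\end{align*}
using associativity of matrix multiplication and $V^*V = I_{n_{k-1}}$ from the unitarity of $V$ (recall $V^* = W_{k-1}VW_{k-1}^{-1}$, so ``unitary'' here means with respect to the weighted inner product). It then remains to observe that $I_{r,n_{k-1}}I_{n_{k-1},r} = I_r$: since $r \le n_{k-1}$, the matrix $I_{n_{k-1},r}$ is the $n_{k-1}\times r$ block $\begin{pmatrix} I_r \\ 0_{n_{k-1}-r,\,r}\end{pmatrix}$ and $I_{r,n_{k-1}} = I_{n_{k-1},r}^\top = \begin{pmatrix} I_r & 0_{r,\,n_{k-1}-r}\end{pmatrix}$, so their product picks out exactly the top-left $r\times r$ identity block and annihilates the zero padding.

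There is no real obstacle here; this lemma is a bookkeeping step whose only purpose is to record that dropping the columns of $V$ associated with zero eigenvalues still leaves a left-inverse relation, which will later be needed to invert the change of coordinates onto $\Ima L^{k-1}_{\uparrow}$. The one point requiring a word of care is that $\widetilde{V^*}\widetilde{V} = I_r$ holds while $\widetilde{V}\widetilde{V^*}$ is \emph{not} the identity but rather the orthogonal projector onto $\Ima L^{k-1}_{\uparrow}$ (with respect to the weighted inner product $W_{k-1}^{-1}$); I would not belabour this in the proof of the lemma itself, but it is the reason the lemma is phrased with the factors in that particular order.
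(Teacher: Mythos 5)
Your proof is correct and is essentially identical to the paper's: both substitute $\widetilde{V} = VI_{n_{k-1},r}$ and $\widetilde{V^*} = I_{r,n_{k-1}}V^*$, cancel the middle factor via $V^*V = I_{n_{k-1}}$, and observe that $I_{r,n_{k-1}}I_{n_{k-1},r} = I_r$. Your added explanation of the truncation-matrix product and the remark on $\widetilde{V}\widetilde{V^*}$ being a projector are fine but not needed.
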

\begin{proof}
\begin{align*}
\widetilde{V^*}\widetilde{V} = I_{r,n_{k-1}} V^* V I_{n_{k-1},r} = I_{r,n_{k-1}}I_{n_{k-1},r} = I_r
\end{align*}
as $r<n_{k-1}$.
\end{proof}
Moreover, one can see that, if $A\in\R^{n_{k-1},n_{k-1}}$ is diagonal, then
\begin{equation}\label{eq:diagonal}  
I_{r,n_{k-1}}A = \widetilde{A} I_{r,n_{k-1}},
\end{equation}
where 
$$
\widetilde{A} = \diagm(a_1,\dots,a_r).
$$
It is now simple to prove that the truncation of the inverse weight matrix $\widetilde{W}_{k-1}^{-1} \defeq \diagm\left(\frac{1}{w^{k-1}_1},\dots,\frac{1}{w^{k-1}_r}\right)$ is the natural inner product of the coefficients space.
\begin{Lemma}
$\widetilde{V^*}$ is the adjoint matrix to $\widetilde{V}$ w.r.t to the inner product $\widetilde{W}^{-1}_{k-1}$ i.e. $\widetilde{V^*} = \widetilde{V}^*$.
\end{Lemma}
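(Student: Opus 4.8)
The plan is to prove the Lemma by unfolding the definition of the adjoint with respect to the inner product $\widetilde{W}^{-1}_{k-1}$ on $\R^r$ and showing it coincides with $\widetilde{V^*}$. Recall that for a matrix $M$ between inner-product spaces with Gram matrices $G_{\text{in}}$ and $G_{\text{out}}$, the adjoint is $M^* = G_{\text{in}}^{-1} M^\top G_{\text{out}}$. Here $\widetilde V$ maps $(\R^r,\widetilde W^{-1}_{k-1})$ into $(\R^{n_{k-1}},W^{-1}_{k-1})$, so by definition $\widetilde V^* = (\widetilde W^{-1}_{k-1})^{-1}\,\widetilde V^\top\, W^{-1}_{k-1} = \widetilde W_{k-1}\,\widetilde V^\top\, W^{-1}_{k-1}$. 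The goal is thus to show $\widetilde W_{k-1}\,\widetilde V^\top\, W^{-1}_{k-1} = \widetilde{V^*}$.

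First I would use the two identities already established in the excerpt: $\widetilde V = V I_{n_{k-1},r}$, $\widetilde{V^*} = I_{r,n_{k-1}} V^*$, together with the given relation $V^* = W_{k-1} V W^{-1}_{k-1}$ for the original (non-truncated) unitary $V$. From $\widetilde V = V I_{n_{k-1},r}$ we get $\widetilde V^\top = I_{n_{k-1},r}^\top V^\top = I_{r,n_{k-1}} V^\top$. Substituting into the candidate expression gives $\widetilde W_{k-1}\,\widetilde V^\top\, W^{-1}_{k-1} = \widetilde W_{k-1}\, I_{r,n_{k-1}}\, V^\top\, W^{-1}_{k-1}$. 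Now I would invoke \cref{eq:diagonal}, which lets me commute the diagonal matrix $W^{-1}_{k-1}$ past the truncation $I_{r,n_{k-1}}$: more precisely $I_{r,n_{k-1}} W^{-1}_{k-1} = \widetilde W^{-1}_{k-1} I_{r,n_{k-1}}$, and symmetrically $\widetilde W_{k-1} I_{r,n_{k-1}} = I_{r,n_{k-1}} W_{k-1}$. Applying the latter, $\widetilde W_{k-1}\, I_{r,n_{k-1}}\, V^\top\, W^{-1}_{k-1} = I_{r,n_{k-1}}\, W_{k-1}\, V^\top\, W^{-1}_{k-1}$.

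Finally I would recognize $W_{k-1} V^\top W^{-1}_{k-1}$ as $V^*$: indeed the stated relation $V^* = W_{k-1} V W^{-1}_{k-1}$ should read with $V^\top$ (the operator $V$ is ``unitary'' w.r.t.\ $W^{-1}_{k-1}$, i.e.\ $V^* = W_{k-1} V^\top W^{-1}_{k-1}$ is its adjoint, which for a $W^{-1}_{k-1}$-orthogonal $V$ equals $V^{-1}$), so $W_{k-1} V^\top W^{-1}_{k-1} = V^*$. Hence $\widetilde W_{k-1}\,\widetilde V^\top\, W^{-1}_{k-1} = I_{r,n_{k-1}} V^* = \widetilde{V^*}$, which is exactly the claim. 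The argument is essentially a bookkeeping exercise in pushing diagonal weight matrices through truncation matrices; the only mild subtlety — and the step I would be most careful about — is keeping the convention for the adjoint consistent (which Gram matrix sits on which side) and verifying that the truncation commutation relation \cref{eq:diagonal} is applied in the correct direction on both the left ($\widetilde W_{k-1}$) and the right ($W^{-1}_{k-1}$) of $\widetilde V^\top$. There is no real analytic content here; it is purely linear-algebraic consistency of the weighted inner products under restriction to the eigenspace.
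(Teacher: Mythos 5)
Your proof is correct and is essentially the paper's own argument read in reverse: the paper starts from $\widetilde{V^*} = I_{r,n_{k-1}}V^* = I_{r,n_{k-1}}W_{k-1}V^\top W^{-1}_{k-1}$ and pushes the truncation matrix through the diagonal weight via \cref{eq:diagonal} to land on $\widetilde{W}_{k-1}\widetilde{V}^\top W^{-1}_{k-1} = \widetilde{V}^*$, which is exactly your chain of equalities traversed in the opposite direction. You also correctly note that the relation stated in the text as $V^* = W_{k-1}VW^{-1}_{k-1}$ should carry a transpose, which is how the paper's own proof actually uses it.
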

\begin{proof}
$$
\widetilde{V^*} = I_{r,n_{k-1}}V^* =  I_{r,n_{k-1}}W_{k-1}V^\top W^{-1}_{k-1} = \widetilde{W}_{k-1} I_{r,n_{k-1}}V^\top W^{-1}_{k-1} = \widetilde{W}_{k-1}\widetilde{V}^\top W^{-1}_{k-1} = \widetilde{V}^* .
$$
because of Eq.~\eqref{eq:diagonal}. 
\end{proof}
With a slight abuse of notation, in the following we will denote the norm on the coefficient space with $\norm{c}_{w^{k-1}}$, keeping in mind the fact that
\begin{equation}
\norm{c}_{w^{k-1}}^2 = \inner{c}{c}_{w^{k-1}} = \inner{\widetilde{V}^*\widetilde{V}c}{c}_{w^{k-1}} = \inner{\widetilde{V}c}{\widetilde{V}c}_{w^{k-1}} = \norm{\widetilde{V}c}_{w^{k-1}}^2.
\end{equation}

We can now rewrite the simplicial Kuramoto dynamics of the $(-)$ projection in the basis $\widetilde{V}$, $\theta^{(-)}=\widetilde{V}c$:
\begin{align}
    \dfrac{d}{dt}\widetilde{V}c = \omega^{(-)} - \sigma^\downarrow L^{k-1}_{\uparrow}\sin(\widetilde{V}c) = \omega^{(-)} - \sigma^\downarrow \widetilde{V}\widetilde{\Lambda}\widetilde{V}^*\sin(\widetilde{V}c)\, .
\end{align}
With this formulation, we are naturally restricting $\theta^{(-)}$ to lie in the reachable subspace. 
We find the dynamics of the coefficients $c$ by left multiplying by $\widetilde{V}^*$ and using $\widetilde{V}^*\widetilde{V} = I$
\begin{equation}\label{eq:reachable_dynamics}
\Dot{c} = \widetilde{V}^*\omega^{(-)} - \sigma^\downarrow \widetilde{\Lambda} \widetilde{V}^* \sin(\widetilde{V}c),
\end{equation}

The coefficients $c$ are associated to a reachable equilibrium configuration if and only if $\Dot{c}=0$, i.e.
\begin{align}
\widetilde{V}^*\frac{\omega^{(-)}}{\sigma^\downarrow} = \widetilde{\Lambda} \widetilde{V}^* \sin(\widetilde{V}c)\, .
\end{align}
We want to reduce this equation to a fixed point equation, of the form $f(c) = c$ for some function $f$. First, we write
\begin{align*}
\widetilde{V}^*\frac{\omega^{(-)}}{\sigma^\downarrow} &= \widetilde{\Lambda} \widetilde{V}^* \sin(\widetilde{V}c) \iff \widetilde{\Lambda}^{-1} \widetilde{V}^*\frac{\omega^{(-)}}{\sigma^\downarrow} = \widetilde{V}^* S(c)\widetilde{V}c
\end{align*}
where we defined $S(c) \defeq \diagm(\sinc(\widetilde{V}c))$ with $\sinc(x) = \sin(x)/x$. 
We then have the fixed point equation
\begin{equation}\label{eq::fixed_point_equation}
c = (\widetilde{V}^*S(c)\widetilde{V})^{-1} \widetilde{\Lambda}^{-1}\widetilde{V}^*\frac{\omega^{(-)}}{\sigma^\downarrow} \defeq f(c)\, ,
\end{equation}
which make sense only if  the matrix $\widetilde{V}^*S(c)\widetilde{V}$ is invertible.

\begin{Lemma}[Invertibility of $\widetilde{V}^*S(c)\widetilde{V}$]
If $S(c)$ has strictly positive elements, then $\widetilde{V}^*S(c)\widetilde{V}$ is invertible. 
\end{Lemma}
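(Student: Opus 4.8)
The plan is to show that the symmetric matrix $\widetilde{V}^*S(c)\widetilde{V}$ is positive definite (with respect to the weighted inner product $\widetilde{W}_{k-1}^{-1}$), which immediately gives invertibility. First I would note that $S(c) = \diagm(\sinc(\widetilde{V}c))$ is a diagonal matrix with strictly positive entries by hypothesis; since it is a positive diagonal matrix it is self-adjoint and positive definite as an operator on $C^{k-1}(\mathcal{X})$ with the weighted inner product $W_{k-1}^{-1}$ (the weights are diagonal, so they commute with $S(c)$ and do not spoil positivity). Hence for any nonzero $v \in \R^{n_{k-1}}$ we have $\inner{v}{S(c)v}_{w^{k-1}} > 0$.

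The key step is then a change of variables. For any nonzero coefficient vector $d \in \R^r$, set $v = \widetilde{V}d$. Because $\widetilde{V}$ has full column rank $r$ (its columns form a basis of $\Ima L^{k-1}_\uparrow$) and $d \neq 0$, we get $v \neq 0$. Using the fact, established earlier in the excerpt, that $\widetilde{V^*} = \widetilde{V}^*$ is the adjoint of $\widetilde{V}$ with respect to the truncated inner product $\widetilde{W}_{k-1}^{-1}$ on $\R^r$, I would compute
\begin{align*}
\inner{d}{\widetilde{V}^*S(c)\widetilde{V}d}_{w^{k-1}} = \inner{\widetilde{V}d}{S(c)\widetilde{V}d}_{w^{k-1}} = \inner{v}{S(c)v}_{w^{k-1}} > 0\, .
\end{align*}
Thus $\widetilde{V}^*S(c)\widetilde{V}$ defines a positive definite bilinear form on $\R^r$, so it is invertible, which is the claim. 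I would also remark that the matrix is self-adjoint with respect to $\widetilde{W}_{k-1}^{-1}$, since $(\widetilde{V}^*S(c)\widetilde{V})^* = \widetilde{V}^*S(c)^*\widetilde{V} = \widetilde{V}^*S(c)\widetilde{V}$ using $S(c)^* = S(c)$ (diagonal) and $(\widetilde{V}^*)^* = \widetilde{V}$, confirming the form is genuinely a symmetric positive-definite one in the weighted geometry.

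The only real subtlety — the step I would be most careful about — is keeping the weighted inner product bookkeeping consistent: one must use $\widetilde{V}^* = \widetilde{V^*}$ (the adjoint, not the plain transpose) and the fact that the diagonal weight matrix commutes with the diagonal matrix $S(c)$, so that $S(c)$ remains self-adjoint and positive in the $W_{k-1}^{-1}$ geometry. Once that is in place, everything is a one-line quadratic-form computation; there is no combinatorial or topological obstacle here, since the nilpotency relations and Hodge structure play no role in this particular lemma. This lemma then justifies that the fixed-point map $f$ in \cref{eq::fixed_point_equation} is well-defined on the region where $S(c)$ has strictly positive entries — in particular on configurations with $\norm{\widetilde{V}c}_\infty < \pi$ — which is exactly the region on which the subsequent Brouwer argument will be run.
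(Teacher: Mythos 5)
Your proof is correct and takes essentially the same approach as the paper: both reduce the claim to the positivity of $S(c)$ together with the injectivity of $\widetilde{V}$, the paper via the factorization $\widetilde{V}^*S(c)\widetilde{V} = A^*A$ with $A = S^{1/2}(c)\widetilde{V}$ and a kernel argument, you via the equivalent quadratic-form computation $\inner{d}{\widetilde{V}^*S(c)\widetilde{V}d}_{w^{k-1}} = \inner{\widetilde{V}d}{S(c)\widetilde{V}d}_{w^{k-1}} > 0$. Your version is marginally more direct in that it dispenses with the explicit square root, and your bookkeeping of the weighted adjoints is consistent with the paper's conventions.
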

\begin{proof}
First, notice that $S$ is diagonal which, together with the inner product matrix being diagonal, means that $S(c)$ is a Hermitian matrix ($S(c)^* = S(c)$), and so is its square root. We get the following
$$
\widetilde{V}^*S(c)\widetilde{V} = (S^{\frac{1}{2}}(c)\widetilde{V})^*(S^{\frac{1}{2}}(c)\widetilde{V}) \defeq A^*A.
$$
Given that $\ker A^* = (\Ima A)^\perp$, we deduce that $A^*A$ is invertible if and only if $A = S^{\frac{1}{2}}(c)\widetilde{V}$ has trivial kernel. Moreover, we know that the columns of $\widetilde{V}$ are a basis and thus $\widetilde{V}c = 0 \iff c = 0$. If $S^{\frac{1}{2}}(c)$ is invertible, then, its kernel will be trivial and, by extension, the same will hold for $A$. 
\end{proof}
This result on the invertibility of $\widetilde{V}^*S(c)\widetilde{V}$ hence translates to a condition on $S(c)$.
\begin{Lemma}[Positive definiteness of $S$]\label{lemma:S_PD}
For any $\gamma\in (0,\pi/2)$, if the coefficients $c$ are such that
\begin{align}
 \norm{\widetilde{V}c}_{w^{k-1}} \leq \frac{\gamma}{\sqrt{\max_i(w^{k- 1}_i)}}\, ,
\end{align}
then $S(c)$ has positive diagonal elements.
\end{Lemma}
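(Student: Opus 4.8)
The goal is to show that a bound on $\norm{\widetilde{V}c}_{w^{k-1}}$ forces every entry of $\widetilde{V}c$ to lie in $(-\gamma,\gamma)\subset(-\pi/2,\pi/2)$, on which $\sinc$ is strictly positive. The plan is to pass from the weighted norm to the $\infty$-norm of the vector $\widetilde{V}c$, then invoke the elementary fact that $\sinc(x)>0$ for $|x|<\pi$ (in particular for $|x|\le\gamma<\pi/2$), which makes the diagonal entries $S(c)_{ii}=\sinc((\widetilde V c)_i)$ strictly positive.

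First I would write out the weighted norm explicitly using \cref{k-norm}: for any cochain $v\in C^{k-1}(\mathcal X)$,
\begin{align*}
\norm{v}_{w^{k-1}}^2 = \sum_{i=1}^{n_{k-1}} \frac{1}{w^{k-1}_i} v_i^2 \geq \frac{1}{\max_i w^{k-1}_i}\sum_{i=1}^{n_{k-1}} v_i^2 \geq \frac{1}{\max_i w^{k-1}_i}\norm{v}_\infty^2\, .
\end{align*}
Applying this with $v = \widetilde{V}c$ and combining with the hypothesis gives
\begin{align*}
\norm{\widetilde{V}c}_\infty \leq \sqrt{\max_i w^{k-1}_i}\,\norm{\widetilde{V}c}_{w^{k-1}} \leq \sqrt{\max_i w^{k-1}_i}\cdot \frac{\gamma}{\sqrt{\max_i w^{k-1}_i}} = \gamma\, .
\end{align*}
Hence each component satisfies $|(\widetilde{V}c)_i|\leq \gamma < \pi/2$.

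Then I would conclude by recalling that $\sinc(x)=\sin(x)/x$ is even, equals $1$ at $x=0$, and is strictly positive on $(-\pi,\pi)$; since $[-\gamma,\gamma]\subset(-\pi/2,\pi/2)\subset(-\pi,\pi)$, every diagonal entry $S(c)_{ii}=\sinc\big((\widetilde{V}c)_i\big)$ is strictly positive, which is the claim. The only mildly delicate point — really the ``main obstacle,'' though it is not a severe one — is keeping the bookkeeping straight between the norm on the coefficient space $\R^r$ and the norm on $C^{k-1}(\mathcal X)$: by the identity $\norm{c}_{w^{k-1}} = \norm{\widetilde{V}c}_{w^{k-1}}$ established just before the lemma, the hypothesis is genuinely a statement about the cochain $\widetilde{V}c$, so the chain of inequalities above is legitimate and no further care is needed.
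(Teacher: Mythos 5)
Your proof is correct and follows essentially the same route as the paper: both reduce the weighted-norm hypothesis to the bound $\norm{\widetilde{V}c}_\infty \leq \sqrt{\max_i w^{k-1}_i}\,\norm{\widetilde{V}c}_{w^{k-1}} \leq \gamma$ (you via lower-bounding the weighted norm, the paper via upper-bounding the $2$-norm, which is the same inequality) and then invoke positivity of $\sinc$ on $[-\gamma,\gamma]\subset(-\pi/2,\pi/2)$. No gaps.
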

\begin{proof}
Under the hypothesis of the lemma it holds that
\begin{align}
\norm{\widetilde{V}c}_\infty \leq \norm{\widetilde{V}c}_2 = \sqrt{\sum_i (\widetilde{V}c)_i^2} = \sqrt{\sum_i w^{k- 1}_i \frac{1}{w^{k- 1}_i} (\widetilde{V}c)^2_i} \leq \sqrt{\max_i(w^{k- 1}_i)}\norm{\widetilde{V}c}_{w^{k-1}} \leq \gamma,
\end{align}
meaning that every component of $\widetilde{V}c$ will belong to the interval $[-\gamma,\gamma]$. The $\sinc$ function, which is applied component-wise to $\widetilde{V}c$, is strictly positive in $[-\gamma,\gamma]$ when $\gamma\in (0,\pi/2)$, hence the positive definiteness of $S(c) = \diagm(\sinc(\widetilde{V}c)))$.
\end{proof}

We now want to prove that the left-hand side of the equilibrium fixed point \cref{eq::fixed_point_equation} is a continuous map from the set 
\begin{align}
   \mathcal{B} = \sset{c:\norm{c}_{w^{k-1}} = \norm{\widetilde{V}c}_{w^{k-1}}\leq\frac{\gamma}{\sqrt{\max_i(w^{k- 1}_i)}}}
\end{align}
to itself.
First, one has the following inequality.
\begin{align}\label{eq::f(c)_bound}
\norm{f(c)}_{w^{k-1}} = \norm{(\widetilde{V}^* S(c) \widetilde{V})^{-1}\widetilde{\Lambda}^{-1}\widetilde{V}^*\frac{\omega^{(-)}}{\sigma^\downarrow}}_{w^{k-1}} \leq \frac{1}{\sigma^\downarrow}\norm{(\widetilde{V}^* S(c) \widetilde{V})^{-1}}_{w^{k-1}}\norm{\widetilde{\Lambda}^{-1}\widetilde{V}^*\omega^{(-)}}_{w^{k-1}},
\end{align}
where the first term is the matrix norm induced by the $w^{k-1}$ vector norm. Let us look at the two terms of \cref{eq::f(c)_bound} separately, starting from the right one.
\begin{Lemma}\label{lemma:right}
If $c\in\mathcal{B}$ then
\begin{align}
\norm{\widetilde{\Lambda}^{-1}\widetilde{V}^*\omega^{(-)}}_{w^{k-1}} = \norm{\beta^{(-)}}_{w^{k-1}}.
\end{align}
\end{Lemma}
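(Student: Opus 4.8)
The plan is to recognize Lemma~\ref{lemma:right} as nothing more than a restatement, in the eigenbasis $\widetilde{V}$, of the identity $(L^{k-1}_\uparrow)^\dagger\omega^{(-)} = (D^{k-1})^\dagger\omega = \beta^{(-)}$ already established in the Lemma preceding \cref{def:characteristic_vectors} and recorded in \cref{eq:natural_potentials}. In particular the hypothesis $c\in\mathcal{B}$ plays no role at all and can simply be dropped. First I would express the weighted Moore--Penrose pseudoinverse of $L^{k-1}_\uparrow$ through the compact eigendecomposition $L^{k-1}_\uparrow = \widetilde{V}\widetilde{\Lambda}\widetilde{V}^*$: since $\widetilde{V}^*\widetilde{V} = I_r$ and $\widetilde{\Lambda} = \diagm(\lambda_1,\dots,\lambda_r)$ is invertible, checking the four Penrose conditions (using that $L^{k-1}_\uparrow$ is self-adjoint for $W_{k-1}^{-1}$, so its spectral projectors are self-adjoint) gives $(L^{k-1}_\uparrow)^\dagger = \widetilde{V}\widetilde{\Lambda}^{-1}\widetilde{V}^*$. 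Consequently $\beta^{(-)} = (L^{k-1}_\uparrow)^\dagger\omega^{(-)} = \widetilde{V}\,\bigl(\widetilde{\Lambda}^{-1}\widetilde{V}^*\omega^{(-)}\bigr)$, i.e. the vector $\widetilde{\Lambda}^{-1}\widetilde{V}^*\omega^{(-)}\in\R^r$ is exactly the coordinate vector of $\beta^{(-)}$ in the basis given by the columns of $\widetilde{V}$.

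The second and final step is to transport the norm. By the identity recorded immediately before this lemma, $\norm{\widetilde{V}c}_{w^{k-1}} = \norm{c}_{w^{k-1}}$ for every $c\in\R^r$, because $\widetilde{V}^*$ is the $\widetilde{W}_{k-1}^{-1}$-adjoint of $\widetilde{V}$ and $\widetilde{V}^*\widetilde{V}=I_r$. Applying this with $c = \widetilde{\Lambda}^{-1}\widetilde{V}^*\omega^{(-)}$ yields $\norm{\beta^{(-)}}_{w^{k-1}} = \norm{\widetilde{V}\bigl(\widetilde{\Lambda}^{-1}\widetilde{V}^*\omega^{(-)}\bigr)}_{w^{k-1}} = \norm{\widetilde{\Lambda}^{-1}\widetilde{V}^*\omega^{(-)}}_{w^{k-1}}$, which is precisely the asserted equality.

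The only point requiring care — and hence the main, rather minor, obstacle — is the weighted nature of every object involved: the pseudoinverse in \cref{eq:natural_potentials} is taken with respect to $W_{k-1}^{-1}$, the eigendecomposition uses the weighted-unitary $V$ with $V^* = W_{k-1}VW_{k-1}^{-1}$, and the norm on $\R^r$ is induced by $\widetilde{W}_{k-1}^{-1}$. One must therefore confirm that $\widetilde{V}\widetilde{\Lambda}^{-1}\widetilde{V}^*$ is genuinely the \emph{weighted} pseudoinverse of $L^{k-1}_\uparrow$ (which follows from the spectral theorem applied in the Hilbert space with inner product $W_{k-1}^{-1}$, since $L^{k-1}_\uparrow$ is self-adjoint there) and that $\omega^{(-)}\in\Ima L^{k-1}_\uparrow = \Ima\widetilde{V}$, which was already observed before \cref{eq:equilibrium_conditions_2}. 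Once these compatibility facts are in hand, the two displayed manipulations above are routine.
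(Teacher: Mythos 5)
Your proposal is correct and follows essentially the same route as the paper's own proof: both identify $\widetilde{V}\widetilde{\Lambda}^{-1}\widetilde{V}^*$ with the weighted pseudoinverse $(L^{k-1}_\uparrow)^\dagger$ so that $\widetilde{\Lambda}^{-1}\widetilde{V}^*\omega^{(-)}$ becomes the coordinate vector of $\beta^{(-)}$ in the basis $\widetilde{V}$, and both then transport the norm using $\widetilde{V}^*\widetilde{V}=I_r$ (the paper phrases this via the projector $\widetilde{V}\widetilde{V}^*$ fixing $\beta^{(-)}\in\Ima L^{k-1}_\uparrow$, you via the isometry $\norm{\widetilde{V}c}_{w^{k-1}}=\norm{c}_{w^{k-1}}$, which are the same fact). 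Your observation that the hypothesis $c\in\mathcal{B}$ is vacuous here is also accurate.
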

\begin{proof}
\begin{align*}
\norm{\widetilde{\Lambda}^{-1}\widetilde{V}^*\omega^{(-)}}_{w^{k-1}} = \norm{\widetilde{V}^*\widetilde{V}\widetilde{\Lambda}^{-1}\widetilde{V}^*\omega^{(-)}}_{w^{k-1}} = \norm{\widetilde{V}^* (L^{k-1}_{\uparrow})^\dagger\omega^{(-)}}_{w^{k-1}} = \norm{\widetilde{V}^*\beta^{(-)}}_{w^{k-1}}.
\end{align*}
Moreover, by definition of the $(k-1)$ norm,
\begin{align*}
\norm{\widetilde{V}^*\beta^{(-)}}_{w^{k-1}}^2 = \inner{\widetilde{V}^*\beta^{(-)}}{\widetilde{V}^*\beta^{(-)}}_{w^{k-1}} = \inner{\widetilde{V}\widetilde{V}^*\beta^{(-)}}{\beta^{(-)}}_{w^{k-1}}\, ,
\end{align*}
but, as $\widetilde{V}\widetilde{V}^*$ is the orthogonal projection operator onto $Im(L^{k-1}_{\uparrow})$ and $\beta^{(-)}\in Im(L^{k-1}_{\uparrow})$, $\widetilde{V}\widetilde{V}^*\beta^{(-)} =\beta^{(-)}$, and we have the result.
\end{proof}
Let us now analyze the first term of Eq.~\cref{eq::f(c)_bound} and bound it from above.
\begin{Lemma}\label{lemma:left}
If $c\in\mathcal{B}$ then
\begin{align}
\norm{(\widetilde{V}^*S(c)\widetilde{V})^{-1}}_{w^{k-1}} \leq \frac{\gamma}{\sin(\gamma)} = \sinc^{-1}(\gamma).
\end{align}
\end{Lemma}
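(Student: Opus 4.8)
The goal is to bound the induced operator norm $\norm{(\widetilde{V}^*S(c)\widetilde{V})^{-1}}_{w^{k-1}}$ when $c$ lies in the ball $\mathcal{B}$. The natural approach is to control the smallest singular value of $\widetilde{V}^*S(c)\widetilde{V}$ with respect to the $w^{k-1}$ inner product, since for a self-adjoint positive-definite operator the operator norm of the inverse equals the reciprocal of its smallest eigenvalue. First I would recall from the invertibility lemma that $\widetilde{V}^*S(c)\widetilde{V} = A^*A$ with $A = S^{\frac12}(c)\widetilde{V}$, so that the matrix is self-adjoint and positive definite (on $\mathcal{B}$, since $\mathcal{B}$ is exactly the set where Lemma~\ref{lemma:S_PD} guarantees the $\sinc$ entries are positive). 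Hence its operator norm inverse equals $1/\lambda_{\min}$ where $\lambda_{\min}$ is the smallest eigenvalue in the $w^{k-1}$ geometry.

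The key step is then a Rayleigh-quotient estimate: for any nonzero coefficient vector $d\in\R^r$,
\begin{align*}
\frac{\inner{\widetilde{V}^*S(c)\widetilde{V}d}{d}_{w^{k-1}}}{\inner{d}{d}_{w^{k-1}}} = \frac{\inner{S(c)\widetilde{V}d}{\widetilde{V}d}_{w^{k-1}}}{\norm{\widetilde{V}d}^2_{w^{k-1}}}\, ,
\end{align*}
using $\widetilde{V}^* = \widetilde{V}$-adjoint and $\norm{d}_{w^{k-1}} = \norm{\widetilde{V}d}_{w^{k-1}}$ established earlier. Since $S(c)$ is diagonal with entries $\sinc((\widetilde{V}c)_i)$ and the inner product weight $W^{-1}_{k-1}$ is also diagonal, the quadratic form $\inner{S(c)u}{u}_{w^{k-1}} = \sum_i \frac{1}{w^{k-1}_i}\sinc((\widetilde{V}c)_i)\, u_i^2$ is bounded below by $\bigl(\min_i \sinc((\widetilde{V}c)_i)\bigr)\norm{u}^2_{w^{k-1}}$. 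For $c\in\mathcal{B}$, the argument of each $\sinc$ lies in $[-\gamma,\gamma]$ (exactly the estimate carried out inside the proof of Lemma~\ref{lemma:S_PD}), and on this interval $\sinc$ is even and decreasing in $|x|$, so its minimum is $\sinc(\gamma) = \sin(\gamma)/\gamma$. Therefore $\lambda_{\min} \geq \sin(\gamma)/\gamma$, giving $\norm{(\widetilde{V}^*S(c)\widetilde{V})^{-1}}_{w^{k-1}} \leq \gamma/\sin(\gamma) = \sinc^{-1}(\gamma)$.

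The only mildly delicate point — and the one I'd spend care on — is making sure the operator-norm/eigenvalue identification is performed in the correct inner product: $\widetilde{V}^*S(c)\widetilde{V}$ is self-adjoint with respect to $\widetilde{W}^{-1}_{k-1}$ (not the Euclidean one), which is why the earlier lemmas identifying $\widetilde{V^*}$ with the $w^{k-1}$-adjoint $\widetilde{V}^*$ are needed here; once that is in place, the spectral theorem applies in the weighted geometry and the Rayleigh-quotient bound above is exactly the statement that the smallest eigenvalue dominates the quadratic form. Everything else is the routine monotonicity of $\sinc$ on $[0,\pi/2)$ and reuse of the interval estimate from Lemma~\ref{lemma:S_PD}. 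With Lemmas~\ref{lemma:right} and~\ref{lemma:left} in hand, one plugs both bounds into \cref{eq::f(c)_bound} to conclude $\norm{f(c)}_{w^{k-1}} \leq \frac{1}{\sigma^\downarrow}\sinc^{-1}(\gamma)\norm{\beta^{(-)}}_{w^{k-1}} \leq \gamma/\sqrt{\max_i w^{k-1}_i}$ precisely when $\sigma^\downarrow \geq \sigma^{(-)}_{\mathrm{fp}}(\gamma)$, so $f$ maps $\mathcal{B}$ to itself; continuity of $f$ on $\mathcal{B}$ follows from continuity of $S(c)$ and of matrix inversion on the positive-definite cone, and Brouwer's fixed point theorem then yields the reachable equilibrium, whose bound $\norm{\theta^{(-)}}_\infty \leq \gamma$ is read off from membership in $\mathcal{B}$.
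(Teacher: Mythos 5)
Your proof is correct and follows essentially the same route as the paper: both arguments reduce the bound on $\norm{(\widetilde{V}^*S(c)\widetilde{V})^{-1}}_{w^{k-1}}$ to a lower bound on the quadratic form $\norm{S^{\frac12}(c)\widetilde{V}d}^2_{w^{k-1}} \geq \bigl(\min_i \sinc((\widetilde{V}c)_i)\bigr)\norm{d}^2_{w^{k-1}}$, and then use $\norm{\widetilde{V}c}_\infty\leq\gamma$ together with the monotonicity of $\sinc$ to conclude $\lambda_{\min}\geq\sinc(\gamma)$. The paper phrases this via the identity $1/\norm{(A^*A)^{-1}} = \min_{\norm{v}=1}\norm{A^*Av}$, Cauchy--Schwarz, and the norm of $S^{-\frac12}(c)$, while you invoke the Rayleigh quotient for the self-adjoint positive-definite matrix directly, which is a slightly more streamlined presentation of the same estimate; your attention to the weighted inner product matches the paper's use of the earlier adjointness lemmas.
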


\begin{proof}
When $c\in\mathcal{B}$ then Lemma~\ref{lemma:S_PD} tells us that $S(c)$ is positive definite and therefore we can write
\begin{align*}
\norm{(\widetilde{V}^*S(c)\widetilde{V})^{-1}}_{w^{k-1}} = \norm{(A^*A)^{-1}}_{w^{k-1}},
\end{align*}
with $A = S^{\frac{1}{2}}(c)\widetilde{V}$,
for which it holds that 
\begin{align}\label{eq::proof_1}
\frac{1}{\norm{(A^*A)^{-1}}_{w^{k-1}}} = \min_{\norm{v}_{w^{k-1}}= 1} \norm{A^*A v}_{w^{k-1}}. 
\end{align}
We apply here the Cauchy-Schwarz inequality
$$
\abs{\inner{A^*Ac}{v}} \leq \norm{A^*Av}\norm{v}
$$
and find that the right-hand side of \cref{eq::proof_1} can be bounded from below

\begin{align*}
\min_{\norm{v}_{w^{k-1}}= 1} \norm{A^*Av}_{w^{k-1}} &\geq \min_{\norm{v}_{w^{k-1}}= 1} \abs{\inner{A^*Av}{v}_{w^{k-1}}} = \min_{\norm{v}_{w^{k-1}}= 1} \norm{Av}^2_{w^{k-1}} \\
&= \left(\min_{\norm{v}_{w^{k-1}}= 1} \norm{S^{\frac{1}{2}}(c)\widetilde{V}v}_{w^{k-1}}\right)^2 \\
&= \left(\min_{\norm{\widetilde{V}v}_{w^{k-1}}= 1} \norm{S^{\frac{1}{2}}(c)\widetilde{V}v}_{w^{k-1}}\right)^2 \\
&= \left(\min_{\norm{\theta}_{w^{k-1}}= 1, \theta\in Im(L^{k-1}_{\uparrow})} \norm{S^{\frac{1}{2}}(c)\theta}_{w^{k-1}}\right)^2\\
&\geq \left(\min_{\norm{\theta}_{w^{k-1}}= 1} \norm{S^{\frac{1}{2}}(c)\theta}_{w^{k-1}}\right)^2 = \left(\norm{S^{-\frac{1}{2}}(c)}^2_{w^{k-1}}\right)^{-1},
\end{align*}
where the last equality comes from Eq.~\eqref{eq::proof_1}.
We have proven that
$$
\frac{1}{\norm{(A^*A)^{-1}}_{w^{k-1}}} \geq \frac{1}{\norm{S^{-\frac{1}{2}}(c)}^2_{w^{k-1}}},
$$
or, equivalently,
\begin{equation}\label{eq::proof_2}
\norm{(A^*A)^{-1}}_{w^{k-1}} \leq \norm{S^{-\frac{1}{2}}(c)}^2_{w^{k-1}}.
\end{equation}
This term can be further rewritten as 
\begin{align*}
\norm{S^{-\frac{1}{2}}(c)}^2_{w^{k-1}} = \norm{(W_{k-1})^{-\frac{1}{2}}S^{-\frac{1}{2}}(c)(W_{k-1})^{\frac{1}{2}}}^2_2 = \norm{S^{-\frac{1}{2}}(c)}^2_2 = \max_i \sinc^{-1} (\widetilde{V}c)_i = \norm{\sinc^{-1} (\widetilde{V}c)}_\infty\, , 
\end{align*}
because $S(c)$ is diagonal with positive diagonal elements.
We now remove the dependency on $c$ by taking a maximum over $\mathcal{B}$
\begin{equation}
\norm{S^{-\frac{1}{2}}(c)}^2_{w^{k-1}} \leq \max_{c\in\mathcal{B}}\norm{\sinc^{-1} (\widetilde{V}c)}_\infty
\leq \max_{x\in [-\gamma,\gamma]} \sinc^{-1}(x) = \sinc^{-1}(\gamma) = \frac{\gamma}{\sin(\gamma)}.
\end{equation}
Thus we have that
\begin{align*}
\norm{(\widetilde{V}^*S(c)\widetilde{V})^{-1}}_{w^{k-1}}\leq \norm{S^{-\frac{1}{2}}(c)}^2_{w^{k-1}}\leq\sinc^{-1}(\gamma)\, .
\end{align*}
\end{proof}
Applying lemmas \ref{lemma:right} and \ref{lemma:left} to \cref{eq::f(c)_bound}, we have
\begin{align}
\norm{f(c)}_{w^{k-1}} = \norm{(\widetilde{V}^* S(c) \widetilde{V})^{-1}\widetilde{\Lambda}^{-1}\widetilde{V}^*\frac{\omega^{(-)}}{\sigma^\downarrow}}_{w^{k-1}} \leq \frac{1}{\sigma^\downarrow}\frac{\gamma}{\sin(\gamma)}\norm{\beta^{(-)}}_{w^{k-1}}\, ,
\end{align}
which means that $f(c)\in\mathcal{B}$ if and only if 
\begin{equation}  
\norm{f(c)}_{w^{k-1}} \leq \frac{\gamma}{\sqrt{\max_i (w^{k-1}_i)}}
\end{equation}
which holds if
\begin{equation}\label{eq:get_bound}
\frac{1}{\sigma^\downarrow}\frac{\gamma}{\sin(\gamma)}\norm{\beta^{(-)}}_{w^{k-1}} \leq \frac{\gamma}{\sqrt{\max_i(w^{k-1}_i)}} \iff \sigma^\downarrow \geq \frac{\sqrt{\max_i(w^{k- 1}_i)}}{\sin(\gamma)}\norm{\beta^{(-)}}_{w^{k-1}}.
\end{equation}

This proves that, under the condition of the theorem, $f(c)$ maps the closed ball $\mathcal{B}$ to itself and so Brouwer's theorem ensures the existence of a fixed point (i.e. a reachable equilibrium) $\theta^{(-)}_{eq} = \widetilde{V}c_{eq}$ with $c_{eq}\in \mathcal{B}$. 

The asymptotic stability of $\theta^{(-)}_{eq}$ can be seen by computing the Jacobian of the reachable dynamics \cref{eq:reachable_dynamics}
\begin{align}
J^{(-)}(c_{eq}) = -\sigma^\downarrow \widetilde{\Lambda}\widetilde{V}^* \diagm(\cos(\theta_{eq}^{(-)}))\widetilde{V}\, ,
\end{align}
which has the same nonzero eigenvalues as 
\begin{align*}
\widetilde{J}^{(-)}(c_{eq}) = -\sigma^\downarrow \widetilde{\Lambda}^{\frac{1}{2}}\widetilde{V}^* \diagm(\cos(\theta^{(-)}_{eq}))\widetilde{V} \widetilde{\Lambda}^{\frac{1}{2}}.
\end{align*}
If $c_{eq}\in\mathcal{B}$ and $\gamma\in(0,\pi/2)$, then
\begin{align*}
\norm{\theta_{eq}^{(-)}}_{w^{k-1}} \leq \frac{\gamma}{\sqrt{\max_i(w^{k- 1}_i)}} \implies \norm{\theta_{eq}^{(-)}}_\infty \leq \gamma \implies \cos\left(\theta_{eq}^{(-)}\right) > 0\, , 
\end{align*}
when $\gamma\in (0,\pi/2)$,
and thus
\begin{align*}
\widetilde{J}^{(-)}(c_{eq}) = -\sigma^\downarrow \left (\diagm\left (\cos\left(\theta_{eq}^{(-)}\right)\right)^\frac12\widetilde{V}\widetilde{\Lambda}^\frac12\right )^*\left (\diagm\left(\cos\left (\theta_{eq}^{(-)}\right )\right)^\frac12\widetilde{V}\widetilde{\Lambda}^\frac12\right) = -\sigma^\downarrow A^*A\, , 
\end{align*}
which is trivially negative definite as $A = \diagm\left (\cos(\theta^{(-)}_{eq})\right )^{\frac12}\widetilde{V}\widetilde{\Lambda}^{\frac12}$ has trivial kernel and $\sigma^\downarrow>0$.

\section{Proof of Theorem \ref{theorem:sigma_fp_explosive}}\label{appendix:proof_explosive}
The proof is a direct extension of the proof of \cref{theorem:sigma_fp} written in \Cref{appendix:proof}. 
Let us first write \cref{eq:adaptively_coupled} as
\begin{align}\label{eq:explosive_projection_dynamics} 
\begin{split}
\Dot{\theta}^{(+)} &= \omega^{(+)} - \sigma^\uparrow R_k^-(\theta^{(-)}) L^{k+1}_\downarrow\sin(\theta^{(+)}) \\
\Dot{\theta}^{(-)} &= \omega^{(-)} - \sigma^\downarrow R_k^+(\theta^{(+)})L^{k-1}_\uparrow \sin(\theta^{(-)})\, ,
\end{split}
\end{align}
for which we can write equilibrium conditions as fixed point equations \cref{eq::fixed_point_equation}:
\begin{align}
\begin{split}
c &= \left (\widetilde{V}^* S(c)\widetilde{V}\right)^{-1} \widetilde{\Lambda}^{-1} \widetilde{V}^*\frac{\omega^{(+)}}{\sigma^\uparrow R^-_k(V'c')} \defeq f^{(+)}(c,c')\\
c' &= \left(\widetilde{V'}^* S'(c')\widetilde{V'}\right)^{-1} \widetilde{\Lambda'}^{-1} \widetilde{V'}^*\frac{\omega^{(-)}}{\sigma^\downarrow R^+_k(Vc)} \defeq f^{(-)}(c,c')\, .
\end{split}
\end{align}
Here $c,c'$ are respectively the coefficients of $\theta^{(+)},\theta^{(-)}$ w.r.t the orthonormal bases $\widetilde{V},\widetilde{V'}$ of $\Ima L^{k+1}_\downarrow, L^{k-1}_\uparrow$. 
The configurations $\theta^{(+)} = Vc,\theta^{(-)}=V'c'$ will be reachable equilibria for the dynamics of \cref{eq:explosive_projection_dynamics} if and only if
\begin{align}
\mathbf{c} = \mathbf{f}(\mathbf{c})\, ,
\end{align}
where
\begin{align}
\mathbf{f}(\mathbf{c}) \defeq \begin{pmatrix}
    f^{(+)}(c,c')\\
    f^{(-)}(c,c')
\end{pmatrix}\qquad \mathrm{and} \qquad \mathbf{c} \defeq \begin{pmatrix}
    c\\
    c'
\end{pmatrix}\, .
\end{align}
Again we want to prove that $\mathbf{f}$ is a continuous function which maps the convex set $\mathcal{B}\times\mathcal{B}'$ to itself, with
\begin{align}
\mathcal{B} = \sset{c:\norm{\widetilde{V}c}_{w^{k+1}}\leq\frac{\gamma^{(+)}}{\sqrt{\max_i(w^{k+ 1}_i)}}},\ \ \mathcal{B}' = \sset{c':\norm{\widetilde{V}'c'}_{w^{k-1}}\leq\frac{\gamma^{(-)}}{\sqrt{\max_i(w^{k- 1}_i)}}}\, .
\end{align}
To prove this, we just need to show that
\begin{align*}
\mathbf{c}\in\mathcal{B}\times\mathcal{B}' \implies f^{(+)}\in\mathcal{B},\  f^{(-)}\in\mathcal{B}'\, .
\end{align*}
Repeating the same steps performed in \cref{appendix:proof} we get
\begin{align}
\begin{split}
    \norm{f^{(+)}(c,c')}_{w^{k+1}} &\leq \frac{1}{\sigma^\uparrow}\frac{C^{(-)}_k}{\abs{\ones^\top W^{-1}_{k-1}\cos(\widetilde{V}'c')}}\frac{\gamma^{(+)}}{\sin(\gamma^{(+)})}\norm{\beta^{(+)}}_{w^{k+1}}\\
    \norm{f^{(-)}(c,c')}_{w^{k-1}} &\leq \frac{1}{\sigma^\downarrow}\frac{C^{(+)}_k}{\abs{\ones^\top W^{-1}_{k+1}\cos(\widetilde{V}c)}}\frac{\gamma^{(-)}}{\sin(\gamma^{(-)})}\norm{\beta^{(-)}}_{w^{k-1}}.
\end{split}
\end{align}
The terms at the denominator can be further bounded with a term that does not depend on $c,c'$. In fact,
\begin{align*}
\frac{1}{C^{(-)}_k}\abs{\ones^\top W^{-1}_{k-1}\cos(\widetilde{V}'c')} = \frac{1}{C^{(-)}_k}\abs{\sum_i \frac{1}{w^{k-1}_{i}}\cos(\widetilde{V}'c')_i} \geq \frac{1}{C^{(-)}_k}\abs{\sum_i \frac{1}{w^{k-1}_i}\cos(\gamma^{(-)})} = \cos(\gamma^{(-)})
\end{align*}
for $\gamma\in (0,\frac{\pi}{2})$, as $c'\in\mathcal{B}' \implies \norm{c'}_\infty \leq \gamma^{(-)} \implies \gamma^{(-)}\geq \cos(\gamma^{(-)})$. 
With the same bound for the other term we arrive at
\begin{align}
\begin{split}
    \norm{f^{(+)}(c,c')}_{w^{k+1}} &\leq  \frac{1}{\sigma^\uparrow}\frac{\gamma^{(+)}}{\sin(\gamma^{(+)})\cos(\gamma^{(-)})}\norm{\beta^{(+)}}_{w^{k+1}}\\
    \norm{f^{(-)}(c,c')}_{w^{k-1}} &\leq \frac{1}{\sigma^\downarrow}\frac{\gamma^{(-)}}{\sin(\gamma^{(-)})\cos(\gamma^{(+)})}\norm{\beta^{(-)}}_{w^{k-1}}\, ,
\end{split}
\end{align}
from which the thesis easily follows by repeating the steps in \cref{eq:get_bound}.

\section{Proof of stability in Theorem \ref{theorem:make_eq_reachable}}\label{appendix:make_eq_reachable}

Let us prove the stability part of \Cref{theorem:make_eq_reachable} for the projection onto higher dimensional simplices $\theta^{(+)}$.

Following the proof in \cref{appendix:proof}, we write the dynamics of the coefficients $c$ of $\theta^{(+)}$ in the orthonormal basis of the reachable subspace given by the matrix $\widetilde{V}$. We have
\begin{align}  
\Dot{c} = \widetilde{V}^*\omega^{(+)} - \sigma^\uparrow\widetilde{\Lambda}\widetilde{V}^*\sin(\widetilde{V}c + \alpha_{k+1})\, ,
\end{align}
whose Jacobian matrix is given as 
\begin{align}  
\widetilde{J}^{(+)}(c) = -\sigma^\uparrow \widetilde{\Lambda} \widetilde{V}^* \mathrm{diag}\left (\cos\left(\widetilde{V}c + \alpha_{k+1}\right)\right)\widetilde{V}\, . 
\end{align}

We then evaluate the Jacobian the equilibrium solution $\theta^{(+)}_* = \widetilde{V}c_*$ and replace the value of $\alpha_{k+1}$ prescribed by the theorem, that is $\alpha_{k+1} = \arcsin\left(\frac{\beta^{(+)}}{\sigma^\uparrow}\right) - \widetilde{V}c_*$, resulting in
\begin{align}  
J^{(+)}(c_*) &= -\sigma^\uparrow \widetilde{\Lambda} \widetilde{V}^* \mathrm{diag}\cos\left(\arcsin\left(\frac{\beta^{(+)}}{\sigma^\uparrow}\right)\right)\widetilde{V} = -\sigma^\uparrow \widetilde{\Lambda} \widetilde{V}^* \mathrm{diag}\sqrt{\ones - \left(\frac{\beta^{(+)}}{\sigma^\uparrow}\right)^2}\widetilde{V}\, .
\end{align}
This matrix has the same eigenvalues as
\begin{align}   
\widetilde{J}^{(+)} = -\sigma^\uparrow \widetilde{\Lambda}^\frac{1}{2} \widetilde{V}^* \mathrm{diag}\sqrt{\ones - \left(\frac{\beta^{(+)}}{\sigma^\uparrow}\right)^2}\widetilde{V}\widetilde{\Lambda}^\frac{1}{2}\, ,
\end{align}
which is Hermitian and negative definite, as $\widetilde{V}c = 0 \iff c=0$ because the columns of $\widetilde{V}$ are a basis of $\Ima D^k$, and $\sqrt{1-(\beta^{(+)}_i/\sigma^\uparrow
)^2} > 0$ because, by hypothesis, $\sigma^\uparrow > \norm{\beta^{(+)}}_\infty$.

\end{document}